\newtheorem{theorem}{Theorem}
\newtheorem{lemma}[theorem]{Lemma}
\newtheorem{proposition}[theorem]{Proposition}
\newtheorem{corollary}[theorem]{Corollary}
\theoremstyle{definition}
\newtheorem{definition}[theorem]{Definition}
\newtheorem{remark}[theorem]{Remark}
\theoremstyle{remark}
\newcommand{\bra}[1]{{\left\langle{#1}\right\vert}}
\newcommand{\ket}[1]{{\left\vert{#1}\right\rangle}}
\newcommand{\qw}[1][-1]{\ar @{-} [0,#1]}
\newcommand{\qwx}[1][-1]{\ar @{-} [#1,0]}
\newcommand{\gate}[1]{*+<.6em>{#1} \POS ="i","i"+UR;"i"+UL **\dir{-};"i"+DL **\dir{-};"i"+DR **\dir{-};"i"+UR **\dir{-},"i" \qw}
\newcommand{\control}{*!<0em,.025em>-=-<.2em>{\bullet}}
\newcommand{\ctrl}[1]{\control \qwx[#1] \qw}
\newcommand{\rstick}[1]{*!L!<-.5em,0em>=<0em>{#1}}
\newcommand{\lstick}[1]{*!R!<.5em,0em>=<0em>{#1}}
\newcommand{\Qcircuit}{\xymatrix @*=<0em>}
\newcommand\EXP{{\sf{EXP}}}
\newcommand\DET{{\sf{DET}}}
\newcommand\PLclass{{\sf{PL}}}
\newcommand\QMA{{\sf{QMA}}}
\newcommand\QSPACE{{\sf{Q_{U}SPACE}}}
\newcommand\DSPACE{{\sf{SPACE}}}
\newcommand\QIP{{\sf{QIP}}}
\newcommand\BQSPACE{{\sf{BQSPACE}}}
\newcommand\PSPACE{{\sf{PSPACE}}}
\newcommand\BQPSPACE{{\sf{BQ_{U}PSPACE}}}
\newcommand\PP{\sf{PP}}
\newcommand\NP{{\sf{NP}}}
\newcommand\BQP{{\sf{BQP}}}
\newcommand\PostBQP{{\sf{PostBQP}}}
\newcommand\PQP{{\sf{PQP}}}
\newcommand\Logspace{{\sf{L}}}
\newcommand\NL{{\sf{NL}}}
\newcommand\BPL{{\sf{BPL}}}
\newcommand\preciseQMA{{\sf{PreciseQMA}}}
\newcommand\bddQMA[5]{{\left(#1,#2\right)}\textit{-bounded }\QMA_{#3}(#4,#5)}
\newcommand\preciselh{\textit{Precise Local Hamiltonian}}
\newcommand\preciseklh{\textit{Precise }$k$\textit{-Local Hamiltonian}}
\newcommand\preciseilh[1]{\textit{Precise }\ensuremath{#1}\textit{-Local Hamiltonian}}
\newcommand\matrixinvert[1]{{\ensuremath{#1}}\textit{-Well-conditioned Matrix Inversion}}
\newcommand\spechamiltonian[1]{\ensuremath{#1}\textit{-Minimum Eigenvalue}}
\newcommand\qca[1]{\ensuremath{#1}\textit{-Quantum Circuit Acceptance}}
\newcommand{\zero}{\ensuremath{0^{\otimes{k(n)}}}}
\newcommand{\classfont}{\sf}
\newcommand{\Unitary}{\mathbf{U}}
\newcommand{\unitaryBQL}{{\classfont{BQ}_\Unitary\classfont{L}}}
\newcommand{\unitaryQSPACE}[3]{{\classfont{Q}_\Unitary\classfont{SPACE}}[#1](#2,#3)}
\newcommand{\unitaryBQSPACE}[1]{{\classfont{BQ}_\Unitary\classfont{SPACE}}[#1]}
\newcommand\bigoh{\mathcal{O}}
\newcommand{\polylog}{\textrm{polylog}}
\DeclareMathAlphabet{\matheu}{U}{eus}{m}{n}
\DeclareMathOperator{\tr}{tr}
\DeclareMathOperator{\poly}{poly}
\begin{document}

\begin{titlepage}
\title{A Complete Characterization of Unitary Quantum Space}
\author[1]{Bill Fefferman\thanks{wjf@umd.edu}}
\author[1]{Cedric Yen-Yu Lin\thanks{cedricl@umiacs.umd.edu}}
\affil[1]{Joint Center for Quantum Information and Computer Science (QuICS), University of Maryland}
\date{\today}
\maketitle
\begin{abstract}

Motivated by understanding the power of quantum computation with restricted number of qubits, we give two complete characterizations of unitary quantum space bounded computation.  First we show that approximating an element of the inverse of a well-conditioned efficiently encoded $2^{k(n)}\times 2^{k(n)}$ matrix is complete for the class of problems solvable by quantum circuits acting on $\bigoh(k(n))$ qubits with all measurements at the end of the computation.  Similarly, estimating the minimum eigenvalue of an efficiently encoded Hermitian $2^{k(n)}\times 2^{k(n)}$ matrix is also complete for this class. In the logspace case, our results improve on previous results of Ta-Shma \cite{tashma} by giving new space-efficient quantum algorithms that avoid intermediate measurements, as well as showing matching hardness results.

Additionally, as a consequence we show that $\preciseQMA$, the version of $\QMA$ with exponentially small completeness-soundess gap, is equal to $\PSPACE$.  Thus, the problem of estimating the minimum eigenvalue of a \emph{local} Hamiltonian to inverse exponential precision is $\PSPACE$-complete, which we show holds even in the frustration-free case.  Finally, we can use this characterization to give a provable setting in which the ability to prepare the ground state of a local Hamiltonian is more powerful than the ability to prepare PEPS states.

Interestingly, by suitably changing the parameterization of either of these problems we can completely characterize the power of quantum computation with \emph{simultaneously} bounded time and space.

\end{abstract}
\thispagestyle{empty}
\end{titlepage}
\clearpage
\setcounter{page}{1}
\section{Introduction}
How powerful is quantum computation with a restricted number of qubits? In this work we will study unitary quantum space-bounded classes - those problems solvable using a given amount of (quantum and classical) space, with all quantum measurements performed at the end of the computation. We give two sets of complete problems for these classes; to the best of our knowledge, these are the first natural complete problems proposed for quantum space-bounded classes. 

The first problem we consider, the $\matrixinvert{k(n)}$ problem, is a well-conditioned version of the ubiquitous matrix inversion problem. The second problem we consider, the $\spechamiltonian{k(n)}$ problem, asks us to compute the minimum eigenvalue of a Hermitian matrix to high precision -- in the context of quantum complexity, this is a natural generalization of the familiar local Hamiltonian problem \cite{ksv02}. Interestingly enough, the first (resp. second) problem is the space-bounded variant of a $\BQP$-complete \cite{HHL} (resp. $\QMA$-complete \cite{ksv02}) problem; their complexities coincide in the space-bounded regime. For the sake of readability, we defer precise definitions of these problems and statements of our results until Sections \ref{sec: Matrix Inversion} and \ref{sec: Minimum Eigenvalue}.

We now proceed to give some justification for the importance of our results.  In the following discussion, $\unitaryBQSPACE{k(n)}$ refers to the class of problems solvable with bounded error by a quantum algorithm running in $\mathcal{O}(k(n))$; the subscript $\Unitary$ indicates that the algorithm is unitary, i.e. employs no intermediate measurements.

\subsection{Background and Motivation}
The Matrix Inversion problem is of central importance in computational complexity theory.  Matrix inversion is known to be complete for $\DET$, the class of functions as hard as computing the determinant of an integer matrix, which can be solved in classical $\mathcal{O}(\log^{2}(n))$ space \cite{berkowitz, cook}. It is a major open problem to determine if Matrix Inversion can be solved in classical logarithmic space, which would imply $\Logspace=\NL=\DET$.
 
Recently, Ta-Shma \cite{tashma}, building on work of Harrow, Hassidim, and Lloyd \cite{HHL}, showed that a well-conditioned $n \times n$ matrix can be inverted (up to $1/\poly(n)$ error) by a quantum $O(\log n)$ space algorithm using intermediate measurements. Similarly, Ta-Shma also gives an algorithm for computing eigenvalues of a Hermitian matrix with similar space. These algorithms achieve a quadratic advantage in space over the best known classical algorithms, which require $\Omega(\log^2n)$ space.  This is the maximum quantum advantage possible, since Watrous has shown $\BQSPACE[k(n)]\subseteq\DSPACE[\bigoh(k(n)^2)]$ \cite{Watrous99,Watrous03} even for quantum algorithms with intermediate measurements.

Our completeness result for matrix inversion, along with observing our algorithm for matrix inversion (Theorem \ref{thm: matrix inversion alg}) actually gives a high-precision approximation, gives the following corollary in the logspace case (see Remark \ref{rem:logspace}).
\begin{corollary}
The problem of approximating, to constant precision, an entry of the inverse of an $n \times n$ positive semidefinite matrix with condition number at most $\poly(n)$ is $\unitaryBQL$-complete under $\Logspace$-reductions, where $\unitaryBQL$ is the set of problems solvable in unitary quantum logspace. This problem remains in $\unitaryBQL$ even if $1/\poly(n)$ precision is required.
\end{corollary}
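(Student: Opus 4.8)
The plan is to obtain the corollary by specializing the general characterization of $\unitaryBQSPACE{k(n)}$ in terms of $\matrixinvert{k(n)}$ (Section~\ref{sec: Matrix Inversion}) to the case $k(n)=\log n$, and then checking that every resource involved — the instance size, the reductions, the condition number, and the precision — collapses to the logspace regime. Setting $k(n)=\log n$ turns the $2^{k(n)}\times 2^{k(n)}$ matrix into an ordinary $n\times n$ matrix, the $\bigoh(k(n))$-space unitary quantum algorithms into $\unitaryBQL$ algorithms, and the $\poly(2^{k(n)})$ bound on the condition number into a $\poly(n)$ bound. Since for a poly-size matrix the "efficiently encoded" presentation and the explicit presentation are interconvertible by logspace transducers (a logspace machine can write out all $n^2$ entries of an efficiently encoded matrix onto its write-only output tape, and can trivially index into an explicitly given one), the completeness statement transfers to the version of the problem where the matrix is given explicitly as input, which is the form stated in the corollary.

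First I would record the hardness direction. The general theorem produces, from an arbitrary $\unitaryBQSPACE{k(n)}$ computation, an efficiently encoded well-conditioned matrix one of whose inverse entries encodes the acceptance probability, with the yes- and no-cases separated by a constant; with $k(n)=\log n$ this becomes a logspace many-one reduction onto an $n\times n$ instance, and because the separation is by a constant, approximating the designated entry to any sufficiently small \emph{constant} precision already decides the original problem. Hence even the constant-precision version is $\unitaryBQL$-hard. I would then reduce general well-conditioned instances to positive semidefinite ones by the standard Hermitianization/dilation trick — replacing $A$ by a Hermitian PSD matrix whose relevant inverse entries are controlled by those of $A$, at the cost of at most squaring the condition number — and note that squaring a $\poly(n)$ condition number keeps it $\poly(n)$, so the hardness holds already for PSD inputs.

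Second, for containment I would invoke the space-efficient matrix inversion algorithm of Theorem~\ref{thm: matrix inversion alg}, whose key feature (as noted before the corollary) is that it yields a high-precision approximation and runs without intermediate measurements in space $\bigoh\!\big(k(n)+\log(1/\varepsilon)+\log\kappa\big)$. Taking $k(n)=\log n$, $\kappa=\poly(n)$, and $\varepsilon=1/\poly(n)$, this is $\bigoh(\log n)$ unitary quantum space, so the $1/\poly(n)$-precision version — and therefore the easier constant-precision version — lies in $\unitaryBQL$. Combining this with the hardness from the previous paragraph yields $\unitaryBQL$-completeness under $\Logspace$-reductions for every target precision between $1/\poly(n)$ and a small constant.

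The main obstacle I anticipate is bookkeeping rather than conceptual: one must verify that every reduction appearing in the general theorem is computable in logspace once $k(n)=\log n$ (in particular that the "efficient encoding" machinery does not secretly consume more than logspace on poly-size instances), and that the dependence of the algorithm of Theorem~\ref{thm: matrix inversion alg} on the precision and the condition number is only polynomial in the relevant parameters — equivalently, logarithmic in the space budget — so that demanding $1/\poly(n)$ precision does not push the algorithm out of logspace. A secondary point to handle with care is the reduction to positive semidefinite, well-conditioned form, ensuring the condition-number blow-up stays polynomial.
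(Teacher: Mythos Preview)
Your proposal is correct and follows essentially the same route as the paper: specialize Theorem~\ref{thm: matrix invert} to $k(n)=\Theta(\log n)$, invoke Remark~\ref{rem:logspace} to pass between the explicit $n\times n$ input and the efficient-encoding formalism, and use the $\bigoh(k+\log(\kappa/\epsilon))$ space bound of Theorem~\ref{thm: matrix inversion alg} to get the $1/\poly(n)$-precision containment. Your extra care about the PSD reduction is not strictly needed here since Definition~\ref{def: matrix invert} already stipulates PSD (so the hardness construction is understood to land in that form after a routine shift or dilation), but it does no harm.
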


Similarly, restricting Thereom \ref{sec: Minimum Eigenvalue} to the logspace case gives the following corollary.
\begin{corollary}
The problem of approximating, to $1/\poly(n)$ precision, the minimum eigenvalue of an $n \times n$ positive semidefinite matrix is $\unitaryBQL$-complete under $\Logspace$-reductions.
\end{corollary}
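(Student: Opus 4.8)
The plan is to obtain this corollary as the $k(n)=\lceil\log n\rceil$ specialization of the general completeness theorem for $\spechamiltonian{k(n)}$ stated in Section~\ref{sec: Minimum Eigenvalue}. At that setting the ``efficiently encoded'' $2^{k(n)}\times 2^{k(n)}$ matrix becomes an ordinary $n\times n$ Hermitian matrix given explicitly as input (its $\poly(n)$ entries can be read directly by a logspace machine, so the encoding is trivial), the class $\unitaryBQSPACE{k(n)}$ becomes $\unitaryBQL$, and the precision parameter, which in the general statement is an inverse function of $2^{k(n)}$, becomes $1/\poly(n)$. I would also note at the outset that restricting to positive semidefinite matrices is without loss of generality: the shift $H\mapsto H+\|H\| I$ (with $\|H\|$ bounded by, and computable from, the entries) makes the matrix PSD while merely translating its spectrum, and a further $\poly(n)$ rescaling puts the spectrum in a fixed interval so that Hamiltonian simulation runs for polynomial time; both transformations are logspace-computable, as is the translation of the promised threshold.

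For the containment direction, $\spechamiltonian{\bigoh(\log n)}\in\unitaryBQL$, I would invoke the minimum-eigenvalue algorithm of Section~\ref{sec: Minimum Eigenvalue} (the eigenvalue analogue of the matrix-inversion algorithm of Theorem~\ref{thm: matrix inversion alg}), specialized to $k(n)=\bigoh(\log n)$. That algorithm runs space-efficient Hamiltonian simulation of $e^{-iHt}$ on the explicitly given $n\times n$ PSD matrix $H$, feeds the result into a coherent phase-estimation routine applied to the maximally mixed input state (prepared unitarily from an ancilla register), and uses amplitude estimation to decide whether some eigenvalue lies below the threshold $a$ — in which case the corresponding eigenvector contributes weight at least $1/n$, so the ``phase register below $(a+b)/2$'' event has noticeable probability — or all eigenvalues are at least $b$; since $b-a\ge 1/\poly(n)$ the phase register needs only $\bigoh(\log n)$ bits, keeping the entire computation in logarithmic space. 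The one nontrivial point, and the source of the improvement over Ta-Shma~\cite{tashma}, is that every step here — Hamiltonian simulation, phase estimation, amplitude estimation — is carried out \emph{unitarily}, with all measurements deferred to the end; this is exactly what the general space-efficient unitary machinery developed earlier in the paper supplies, and I would cite those lemmas rather than re-derive the coherent implementations.

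For $\unitaryBQL$-hardness under $\Logspace$-reductions, given a language $L$ with a logspace-uniform family of quantum circuits $\{C_x\}$ acting on $\bigoh(\log n)$ qubits with all measurements at the end, I would apply the Feynman--Kitaev history-state construction: form the PSD matrix $H_x = H_{\mathrm{clock}} + H_{\mathrm{in}} + H_{\mathrm{prop}} + H_{\mathrm{out}}$ on the circuit register together with a clock register over the $T=\poly(n)$ time steps. Since $T$ is polynomial the clock needs only $\bigoh(\log n)$ bits, so $H_x$ is an $n'\times n'$ matrix with $n'=\poly(n)$, and each entry depends on only $\bigoh(1)$ gates of $C_x$, so $x\mapsto H_x$ is logspace-computable. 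The standard analysis gives $\lambda_{\min}(H_x)$ exponentially small when $C_x$ accepts with high probability and at least $1/\poly(n)$ (of order $1/T^3$ in the usual bound) when $C_x$ accepts with small probability — a promise gap of $1/\poly(n)$ that matches the precision available to the algorithm above. Composing the two directions, and observing (as in Remark~\ref{rem:logspace}) that all the reductions involved are logspace, yields $\unitaryBQL$-completeness, and the membership statement is insensitive to whether one asks for $1/\poly(n)$ or only constant precision.

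The main obstacle is the containment direction, and within it the requirement that phase estimation together with the accompanying amplitude estimation be performed \emph{coherently} within $\bigoh(\log n)$ space: Ta-Shma's original logspace eigenvalue algorithm uses intermediate measurements, and removing them while controlling the accumulation of error over $\poly(n)$ Hamiltonian-simulation steps, with only logarithmically many qubits of workspace, is the real content. My plan is to route this entirely through the general unitary space-efficient simulation results established earlier, so that the corollary reduces to plugging $k(n)=\lceil\log n\rceil$ into Theorem~\ref{sec: Minimum Eigenvalue}; a secondary bookkeeping point is to make sure the $1/\poly(n)$ promise gap of the Kitaev construction and the $\bigoh(\log n)$-bit resolution of the coherent phase estimation are chosen consistently with each other and with the rescaling used to normalize $H$.
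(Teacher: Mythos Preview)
Your high-level plan—specialize Theorem~\ref{thm: spechamiltonian} to $k(n)=\bigoh(\log n)$, using Remark~\ref{rem:logspace} to replace the efficient encoding by an explicit $n\times n$ matrix—is exactly how the paper obtains the corollary, and as a proof of the corollary it is complete.

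Your account of what the underlying theorem's proof looks like, however, diverges from the paper in two places worth flagging. For containment you describe a direct algorithm: phase-estimate $e^{-iHt}$ on the maximally mixed state and then amplitude-estimate the event ``phase $\le (a+b)/2$''. The paper instead routes through a $\QMA$ characterization: Lemma~\ref{lem:qma protocol} puts the problem in a space-bounded $\QMA$ class (Merlin sends a purported ground state, Arthur runs one-bit phase estimation), and Lemma~\ref{lem: pspace upper bound} collapses that class to $\unitaryBQSPACE{k}$ by space-efficient in-place amplification, replacement of the witness by the maximally mixed state, and a second amplification. Unwound, this is close to your direct description, but the $\QMA$ detour is what simultaneously delivers Theorem~\ref{thm: equivalence} and $\preciseQMA=\PSPACE$. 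For hardness your sketch has a real omission: with the unweighted Kitaev Hamiltonian and raw completeness $c=2/3$, the YES-case upper bound $(1-c)/(T+1)\approx 1/(3T)$ is \emph{larger} than the NO-case lower bound $\Omega(1/T^3)$, so no promise gap results; the paper first applies space-efficient amplification (Lemma~\ref{lem: gap amp}) to drive $1-c$ and $s$ below $1/T^3$ before building the clock Hamiltonian (see the proof of Lemma~\ref{lem: specham-hardness}). Your claim that $\lambda_{\min}$ is ``exponentially small'' in the YES case only holds after that amplification. Since you intend to invoke Theorem~\ref{thm: spechamiltonian} as a black box, none of this affects the corollary's proof, but your description of the internals should be adjusted.
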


These corollaries improve upon Ta-Shma's results \cite{tashma} in two ways. First, our algorithms  solve these problems without needing intermediate measurements.  Unlike in time complexity, where the ``Principle of safe storage'' gives a time-efficient procedure to defer intermediate measurements, these methods may incur an exponential blow-up in space.  

One might wonder why we care so much about avoiding intermediate measurements.  The main reason is that removing intermediate measurements from the computation allows us to give matching hardness results, showing the optimality of our algorithms.  This is the second way our results improve on those of Ta-Shma.  In particular, our proofs crucially use space-efficient methods for the amplification of unitary quantum computations, which are not known in the non-unitary model.  This is because the techniques require applying the inverse of the circuit, which of course is impossible if the circuit contains intermediate measurements.  We will also rely on ideas from Kitaev's clock construction, which constructs a local Hamiltonian from a unitary quantum circuit. 

Specifically, we will show that the problems of inverting well-conditioned matrices and computing minimum eigenvalues of Hermitian matrices are hard for unitary quantum logspace under $\Logspace$-reductions. In the case of our algorithm for Matrix Inversion, this means that the upper bound on the condition number bound is unlikely to be improved upon.  Likewise, this gives some of the strongest evidence that even well-conditioned matrices cannot be inverted in deterministic logspace, since otherwise our results would immediately imply $\Logspace=\unitaryBQL$, which seems quite unlikely. 

Interestingly, although our algorithms for both problems use different techniques from those of Ta-Shma, our algorithm for computing the minimum eigenvalue is completely different.  In particular, our algorithm crucially relies on new methods for space efficient $\QMA$ amplification, together with some of the most powerful recent results in Hamiltonian simulation \cite{berry14,berry15}.

Concurrently with our work, Doron, Sarid, and Ta-Shma have shown that analogous problems for stochastic matrices (e.g. computing the eigenvalue gap) are complete for classical randomized logspace, or $\BPL$ \cite{dt15,dst16}. In addition, Le Gall has shown that analogous problems for Laplacian matrices can be solved in $\BPL$ \cite{legall16}. Since it is straightforward to see that Well-conditioned Matrix Inversion reduces to Integer Matrix Inversion, we obtain a direct proof that $\unitaryBQL\subseteq\DET$, which was previously known indirectly via the containments $\unitaryBQL\subseteq\PLclass\subseteq\DET$ \cite{Watrous03,Borodin84}.

Therefore the power of classical and quantum space-bounded classes are characterized by the ability to approximate solutions of different problems in $\DET$ (stochastic matrices for the former, and Hermitian matrices for the latter). This could shed light on the differences between deterministic, randomized, and quantum space complexity. An open question is to find a class of interesting matrices whose inverse (or eigenvalues) can be computed in \emph{deterministic} logspace.

Interestingly, if we change the scaling of the parameters in our Matrix Inversion and Minimum Eigenvalue problems suitably, then we obtain problems that are known to be complete for $\BQP$ \cite{HHL} and $\QMA$ \cite{ksv02,at03}.
Thus by appropriately bounding the dimension of the matrix and either the condition number or the promise gap, we can give problems complete for quantum time or quantum space. In fact we can strengthen these results to settings with a simultaneously bounded amount of space and time; see Section \ref{sec: time and space}. 
\subsection{Relationship with Matchgates} \label{sec: matchgates}

Matchgates are a subclass of quantum gates introduced by Valiant \cite{Val02SIComp}, who also showed that nearest neighbor matchgate circuits (which we will just call matchgate computations) are classically simulable. Matchgate computations were further shown to be equivalent to a one-dimensional model of noninteracting fermions by Terhal and DiVincenzo \cite{TerDiV02PRA}; and equivalent to unitary quantum logspace by Jozsa, Kraus, Miyake, and Watrous \cite{JozKraMiyWat10RSPA}. Our complete problems therefore elucidate the computational power of noninteracting fermions.

We know that sampling from output distributions of matchgate computations gives us the power of $\BPL$; but what is the computational power of computing exactly the output probabilities of matchgate computations? We conjecture that this computational power corresponds to $\DET$, since amplitudes of noninteracting fermion circuits are related to determinants (and see also the discussion in the previous subsection). It is known that output probabilities of matchgate computations can be exactly calculated by an efficient classical algorithm \cite{JozMiy08RSPA}, which is consistent with our conjecture because $\DET \in \classfont{P}$.

\subsection{Quantum Merlin-Arthur with Small Gap}
A consequence of our proof of completeness for the $\spechamiltonian{k(n)}$ problem is an equivalence between space-bounded quantum computations and quantum Merlin-Arthur proof systems. Here we give this equivalence for the polynomial space case: let $\preciseQMA$ be the variant of $\QMA$ with exponentially small completeness-soundness gap. Then we show the following:
\begin{corollary} \label{cor: preciseqma}
$\preciseQMA = \BQPSPACE =  \PSPACE$.
\end{corollary}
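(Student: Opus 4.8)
The plan is to establish the chain of inclusions $\PSPACE \subseteq \preciseQMA \subseteq \BQPSPACE \subseteq \PSPACE$, using the $\spechamiltonian{k(n)}$ completeness result as the engine. First I would observe that $\preciseQMA \subseteq \BQPSPACE$: given a $\preciseQMA$ verifier with exponentially small gap $\epsilon = 2^{-\poly(n)}$, the optimal acceptance probability is the minimum eigenvalue (up to the usual $\ii - V$ reformulation) of a $2^{\poly(n)} \times 2^{\poly(n)}$ efficiently-encoded Hermitian matrix, namely the QMA verification operator traced appropriately over the witness register. By the algorithmic direction of Theorem \ref{sec: Minimum Eigenvalue} (the $\spechamiltonian{k(n)}$ problem is \emph{in} $\unitaryBQSPACE{k(n)}$ for $k(n) = \poly(n)$), a unitary quantum polyspace machine can estimate this minimum eigenvalue to inverse-exponential precision, which is exactly the resolution needed to separate completeness from soundness; hence $\preciseQMA \subseteq \BQPSPACE$. (One must be slightly careful that the relevant matrix is the one whose minimum eigenvalue encodes $\max_{\ket{\psi}} \bra{\psi} V \ket{\psi}$, e.g. via projecting the verification circuit onto the witness subspace; this is a standard manipulation.)

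Next, $\BQPSPACE \subseteq \PSPACE$ follows from Watrous's simulation $\BQSPACE[k(n)] \subseteq \DSPACE[\bigoh(k(n)^2)]$ quoted in the introduction, applied with $k(n) = \poly(n)$, which gives $\BQPSPACE \subseteq \DSPACE[\poly(n)] = \PSPACE$. (Here it is convenient that the unitary model is contained in the general bounded-error model, so Watrous's bound applies verbatim.) The substantive direction is $\PSPACE \subseteq \preciseQMA$. For this I would take an arbitrary $\PSPACE$ language and, using the hardness direction of Theorem \ref{sec: Minimum Eigenvalue} together with $\PSPACE = \BQPSPACE$-style reasoning — more precisely, using that $\spechamiltonian{\poly(n)}$ is $\unitaryBQSPACE{\poly(n)}$-hard and that $\PSPACE$ reduces to it — produce an efficiently encoded Hermitian matrix whose minimum eigenvalue is either at most $a$ or at least $b$ with $b - a = 2^{-\poly(n)}$. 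Then I would exhibit a $\preciseQMA$ protocol: Arthur expects as a witness (an encoding of) the ground state of this Hermitian matrix, and runs the natural phase-estimation / Hamiltonian-simulation based verification (the same circuit underlying the algorithm of Theorem \ref{sec: Minimum Eigenvalue}, or a Kitaev-clock-style local Hamiltonian verification) to check that the quadratic form is small. The completeness-soundness gap of this protocol is exactly the eigenvalue promise gap $2^{-\poly(n)}$, giving membership in $\preciseQMA$.

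The main obstacle I expect is the $\PSPACE \subseteq \preciseQMA$ direction, and specifically making the verifier genuinely a \emph{bounded-space polynomial-time} quantum verifier while still achieving the exponentially small gap: one needs the Hamiltonian simulation / phase estimation step to be implementable by a $\poly(n)$-time quantum circuit whose gates can be computed in $\poly(n)$ space, and one needs the soundness analysis to hold against \emph{arbitrary} (possibly entangled, possibly not-ground-state) witnesses, which is where the eigenvalue gap rather than a mere spectral bound is essential. This is exactly the point where the space-efficient $\QMA$ amplification and the powerful Hamiltonian simulation results (\cite{berry14,berry15}) cited in the introduction are needed, and where one must be careful that amplification does not blow up the witness size or the verifier's space beyond $\poly(n)$. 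Once the verifier is in place, matching its gap to the eigenvalue promise is routine. Finally, Corollary \ref{cor: preciseqma} as stated is just the composition of the three inclusions, with $\BQPSPACE$ pinned between $\preciseQMA$ and $\PSPACE$; I would state it by noting all three classes are thereby equal.
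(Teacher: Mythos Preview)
Your chain of inclusions is the right skeleton, and your arguments for $\BQPSPACE \subseteq \PSPACE$ (Watrous) and $\PSPACE \subseteq \preciseQMA$ (reduce to $\spechamiltonian{\poly}$ via the clock construction, then invoke the phase-estimation verifier of Lemma~\ref{lem:qma protocol}) match the paper. The gap is in the $\preciseQMA \subseteq \BQPSPACE$ step.

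You propose to reduce a $\preciseQMA$ instance to $\spechamiltonian{\poly}$ by forming the acceptance operator $Q_x = (I_m \otimes \bra{0^k}) V_x^\dagger \Pi_{out} V_x (I_m \otimes \ket{0^k})$ and then invoking the algorithmic half of Theorem~\ref{thm: spechamiltonian}. This fails on two counts. First, $Q_x$ is generically a \emph{dense} $2^m \times 2^m$ matrix with no efficient encoding in the sense of Definition~\ref{def: efficient encoding}, so it is not a valid $\spechamiltonian{\poly}$ instance at all; ``projecting onto the witness subspace'' is not a standard manipulation that produces a sparse matrix. Second, and more fundamentally, the paper's only proof that $\spechamiltonian{k} \in \unitaryBQSPACE{k}$ goes \emph{through} Lemma~\ref{lem: pspace upper bound}, which is precisely the statement $\preciseQMA \subseteq \BQPSPACE$ that you are trying to establish; so even granting the reduction, your appeal to Theorem~\ref{thm: spechamiltonian} is circular.

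The paper's actual argument for this direction (Lemma~\ref{lem: pspace upper bound}) bypasses $\spechamiltonian{k}$ entirely: starting from a verifier with gap $2^{-\bigoh(k)}$, one applies space-efficient in-place amplification (Lemma~\ref{lem: gap amp}) to push completeness and soundness to $1-2^{-(m+2)}$ and $2^{-(m+2)}$, then replaces the witness by the maximally mixed state (Theorem~\ref{thm:pqpspace simulation}), obtaining a witness-free computation with an exponentially small but still $2^{-\bigoh(k)}$ gap, and finally amplifies once more. Note that this is where the novel space-efficient amplification is genuinely load-bearing---you have misplaced it in the $\PSPACE \subseteq \preciseQMA$ direction, where only ordinary constant-gap amplification is required (inside the proof of Lemma~\ref{lem: specham-hardness}).
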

The second equality is due to Watrous \cite{Watrous99,Watrous03}. We give similar equivalences for space-bounded quantum computations with and without a witness for other space bounds as well (Theorem \ref{thm: equivalence}).

We note that $\preciseQMA$ is likely far more powerful than its classical counterpart. The analogous classical complexity class is contained in $\NP^{\PP}$: given a classical witness, the verifier runs a classical computation that in the YES case accepts with probability at least $c$, or in the NO case accepts with probability at most $s$, where $c>s$. Note that in the classical case $c - s > \exp(-\poly)$ is automatically satisfied. Since $\NP^{\PP}$ is in the counting hierarchy, the entirety of which is contained in $\PSPACE$ (see e.g., \cite{allenderwagner}), we see that the quantum proof protocol is strictly stronger than the classical one, unless the counting hierarchy collapses to the second level.

We also show that the \emph{local} Hamiltonian problem is $\PSPACE$-complete when the promise gap is exponentially small (for details see Appendix \ref{app:localhamiltonian}). This is in contrast to the usual case when the gap is polynomially small, where the problem is $\QMA$-complete. Perhaps more surprisingly, $\preciseQMA = \PSPACE$ is more powerful than $\PostBQP=\PP$, the class of problems solvable with postselected quantum computation \cite{aaronson05}.

Another consequence concerns Projected Entangled Pair States, or PEPS, a natural extension of matrix product states to two and higher spatial dimensions, which can be described as the ground state of certain frustration-free local Hamiltonians \cite{vc04}. A characterization of the computational power of PEPS was given in \cite{swv07}, and can be summarized as follows: let $O_{PEPS}$ be a quantum oracle that, given the description of a PEPS, outputs the PEPS (so the output is quantum). Then $\BQP^{O_{PEPS}}_{\parallel,\text{classical}} = \PostBQP = \PP$, where (following Aaronson \cite{aaronson05}) the subscript  denotes that only classical nonadaptive queries to the oracle are allowed. Moreover, let $\PQP$ stand for the set of problems solvable by a quantum computer with \emph{unbounded error}; then it can be straightforwardly shown that $\PQP^{O_{PEPS}}_{\parallel,\text{classical}} = \PP$ as well (see Appendix \ref{app:peps}).

On the other hand, suppose we have an oracle $O_{LH}$ that given the description of a local Hamiltonian, outputs a ground state of the Hamiltonian. Then our results show that $\preciseQMA = \PSPACE \subseteq \PQP^{O_{LH}}_{\parallel,\text{classical}}$. This shows that in the setting of unbounded-error quantum computation, PEPS do not capture the full computational complexity of general local Hamiltonian ground states unless $\PP=\PSPACE$. We leave open the problem of determining the complexity of $\BQP^{O_{LH}}_{\parallel,\text{classical}}$.

Lastly, we are able to strengthen our characterization to show that $\preciseQMA$ contains $\PSPACE$ (see Appendix \ref{app:perfectcompleteness}), even when restricted to having perfect completeness.  This allows us to prove that testing if a local Hamiltonian is frustration-free is a $\PSPACE$-complete problem (Appendix \ref{app:localhamiltonian}). We note that if the local Hamiltonian is promised to have a ground state energy of at least $1/\poly$ if it is frustrated, then this is the Quantum Satisfiability problem defined by Bravyi, which is known to be $\QMA_1$ complete \cite{bravyi06,gn13}. Our results show that if the promise gap is removed then we instead get $\PSPACE$-completeness.

 \section{Preliminaries}
\subsection{Quantum circuits}
We will assume a working knowledge of quantum information. For an introduction, see \cite{nc00}.

A \emph{quantum circuit} consists of a series of quantum gates each taken from some universal gateset, such as the gateset consisting of Hadamard and Toffoli gates \cite{shi}. 
 For functions $f,g:\mathbb{N}\rightarrow\mathbb{N}$, we say a family of quantum circuits $\{Q_x\}_{x\in\{0,1\}^*}$ is \emph{$f$-time $g$-space uniformly generated} if there exists a deterministic classical Turing machine that on input $x\in\{0,1\}^n$ and $i>0$ outputs the $i$-th gate of $Q_x$ within time $f(n)$ and workspace $g(n)$ \cite{nc00}.  
 
Our restriction to a specific gateset is without loss of generality, even for logarithmic space algorithms:  there exists a deterministic algorithm that given any unitary quantum gate $U$ and a parameter $\epsilon$, outputs a sequence of at most $\polylog({1/\epsilon})$ gates from any universal quantum gateset that approximates $U$ to precision $\epsilon$ in space $\bigoh({\log({1/\epsilon})})$ and time $\polylog({1/\epsilon})$ \cite{mw12}.  This improves the Solovay-Kitaev theorem, which guarantees a space bound of $\polylog(1/\epsilon)$; see e.g., \cite{nc00}.
\subsection{Space-bounded computation} \label{sec: space bounded computation}
For our model of unitary quantum space-bounded computation, we consider a quantum system with purely classical control, because there are no intermediate quantum measurements to condition future operations on. Specifically, we use the following definition (see Appendix \ref{app: space bounded} for more details):

\begin{definition} \label{def: qspace}
Let $k(n)$ be a function satisfying $\Omega(\log(n)) \le k(n) \le \poly(n)$. A promise problem $L=(L_{yes},L_{no})$ is in $\QSPACE[k(n)](c,s)$ if there exists a $\poly(|x|)$-time $\mathcal{O}(k)$-space uniformly generated family of quantum circuits $\{Q_x\}_{x\in\{0,1\}^*}$, where each circuit $Q_x=U_{x,T}U_{x,T-1}\cdots U_{x,1}$ has $T=2^{\mathcal{O}(k)}$ gates, and acts on $\mathcal{O}(k(|x|))$ qubits, such that:\\
 If $x \in L_{yes}$:
\begin{equation}
\bra{0^k} Q^\dagger_x \ket{1}\bra{1}_{out} Q_x \ket{0^k} \ge c.
\end{equation}
Whereas if $x \in L_{no}$:
\begin{equation}
\bra{0^k} Q^\dagger_x \ket{1}\bra{1}_{out} Q_x \ket{0^k} \le s.
\end{equation}
Here $out$ denotes a single qubit we measure at the end of the computation; no intermediate measurements are allowed.  
Furthermore, we require $c$ and $s$ to be computable in classical $\bigoh(k(n))$-space.
\end{definition}

For the rest of the paper we will always assume that $\Omega(\log(n)) \le k(n) \le \poly(n)$.

The bound $T=2^{\mathcal{O}(k)}$ on the circuit size comes from that any classical Turing machine generating $Q_x$ using space $\bigoh(k(|x|))$ has at most $2^{\mathcal{O}(k)}$ configurations. We note that $2^{\mathcal{O}(k)}$ gates suffice to approximate any gate on $\bigoh(k)$ qubits to high accuracy (see e.g. \cite[Chapter~4]{nc00}). The $\poly(|x|)$ time bound on the classical control can be assumed without loss of generality; see Appendix \ref{app: space bounded}.

\begin{definition} $\unitaryBQSPACE{k}=\unitaryQSPACE{k}{2/3}{1/3}$.\end{definition}
\begin{theorem}[Watrous \cite{Watrous99,Watrous03}]\label{thm:pqpspace} $\unitaryBQSPACE{\poly}=\PSPACE$.
\end{theorem}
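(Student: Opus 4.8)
The plan is to prove the two inclusions separately.

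\emph{Forward direction, $\unitaryBQSPACE{\poly}\subseteq\PSPACE$.} Write $Q_x=U_{x,T}\cdots U_{x,1}$ with $T=2^{\mathcal O(k)}$. The acceptance quantity in Definition~\ref{def: qspace} is exactly the $(0^k,0^k)$ entry of the matrix product $M:=U_{x,1}^\dagger\cdots U_{x,T}^\dagger\,\ketbra{1}{1}_{out}\,U_{x,T}\cdots U_{x,1}$, a product of $N:=2T+1$ matrices of dimension $2^{\mathcal O(k)}$. Each factor acts nontrivially on $\mathcal O(1)$ qubits, so an arbitrary entry of any factor is computable in $\mathcal O(k)$ space by running the circuit-uniformity machine. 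I would evaluate the target entry via the divide-and-conquer identity: an entry of a product of $N$ matrices equals a sum over a middle index $z\in\{0,1\}^{\mathcal O(k)}$ of the product of two entries of products of $\lceil N/2\rceil$ matrices. Recursing yields a computation tree of depth $\mathcal O(\log N)=\mathcal O(k)$, explored depth-first. For precision: since every \emph{contiguous} sub-product of the $U$'s and the projector has operator norm at most $1$, all intermediate entries lie in the unit disk; carrying all arithmetic to $b$ bits, the error at the root is at most $2^{\mathcal O(k^2)}\cdot 2^{-b}$, since each of the $\mathcal O(k)$ recursion levels amplifies the error by at most a factor $2^{\mathcal O(k)}$ (one summation over a basis-state index). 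Taking $b=\Theta(k^2)$ pins the answer down to additive error $\le 1/6$, which suffices to decide membership (the thresholds $c,s$ are $\mathcal O(k)$-space computable with constant gap). A depth-first traversal stores, per level, a running partial sum ($\mathcal O(k^2)$ bits), a basis-state index, and a range pointer into the product ($\mathcal O(k)$ bits each), hence $\mathcal O(k^3)$ bits total; with $k=\poly(n)$ this is a $\PSPACE$ algorithm. (A more careful Savitch-style recursion on matrix powering recovers Watrous's $\DSPACE[\mathcal O(k^2)]$ bound, but we do not need it here.)

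\emph{Reverse direction, $\PSPACE\subseteq\unitaryBQSPACE{\poly}$.} Given $L\in\PSPACE=\DSPACE[\poly]$, use a space-efficient reversible simulation of deterministic space (reversible space equals deterministic space up to constant factors; alternatively Bennett's reversible pebbling, with $\mathcal O(s\log t)$ space overhead) to obtain a reversible Turing machine deciding $L$ with $\mathcal O(\poly(n))$ tape cells and $2^{\mathcal O(\poly(n))}$ steps. Encoding tape contents, head position, and control state into $\mathcal O(\poly(n))$ qubits, one machine step is a fixed $\poly(n)$-size reversible---hence unitary---circuit $V$, the only delicate piece being a gadget that swaps the cell under the head into a fixed register, applies the local transition, and swaps it back. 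Then $Q_x$ is $V$ iterated $2^{\mathcal O(\poly)}$ times, giving $2^{\mathcal O(k)}$ gates and an easily computed uniformity machine; the designated output qubit carries the accept bit deterministically, so the circuit has perfect completeness and soundness and $L\in\unitaryBQSPACE{\poly}$.

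\emph{Main obstacle.} The crux is the precision accounting in the forward direction: the naive observation that amplitudes lie in $\mathbb Z[1/\sqrt2]$ is useless on its own, since the denominators are as large as $2^{2^{\mathcal O(k)}}$; instead one must bound \emph{accumulated truncation error} and exploit unitarity to keep every intermediate quantity in the unit disk, so that $\poly(k)$ bits always suffice. In the reverse direction the only subtlety is that the underlying computation runs for $2^{\poly}$ steps, which rules out a naive Bennett history tape and forces a genuinely space-efficient reversibilization; everything else is bookkeeping.
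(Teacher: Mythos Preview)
The paper does not prove this theorem at all: it is stated with attribution to Watrous \cite{Watrous99,Watrous03} and used as a black box (see the sentence after Corollary~\ref{cor: preciseqma}), so there is no ``paper's own proof'' to compare against.

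Your proposal is essentially correct and follows the standard route. For the forward inclusion, the divide-and-conquer evaluation of the $(0^k,0^k)$ entry of a product of $2^{\mathcal O(k)}$ contractions, together with the observation that all contiguous sub-products have operator norm at most $1$ so that $b=\Theta(k^2)$ bits of precision suffice through $\mathcal O(k)$ recursion levels, is exactly the right idea and yields a $\poly(k)$-space deterministic algorithm. (Watrous's original argument is phrased in terms of counting and handles the more general setting with intermediate measurements, but for the purely unitary case your matrix-product recursion is the natural and simpler path.) For the reverse inclusion, invoking either Lange--McKenzie--Tapp (reversible space equals deterministic space) or Bennett's pebbling to get a reversible, hence unitary, polynomial-space simulation of a $\PSPACE$ machine is again the standard argument, and your sketch of encoding configurations in qubits and iterating a fixed local step circuit is correct.

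One cosmetic point: in your forward direction you appeal to the constant gap between $c=2/3$ and $s=1/3$ from the definition of $\unitaryBQSPACE{k}$, which is fine; just be explicit that the threshold you compare your $\pm 1/6$-accurate estimate against is $1/2$.
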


We now define space- and time-bounded analogues of $\QMA$:
\begin{definition}We say a promise problem $L=(L_{yes},L_{no})$ is in $\bddQMA{t}{k}{m}{c}{s}$ if there exists a $t$-time and $(k+m)$-space uniformly generated family of quantum circuits $\{ V_x\}_{x\in\{0,1\}^*}$, each of size at most $t(|x|)$, acting on $k(|x|)+m(|x|)$ qubits, so that:\\

If $x \in L_{yes}$ there exists an $m$-qubit state $\ket{\psi}$ such that:
\begin{equation}
\left(\bra{\psi}\otimes \bra{0^k}\right) V^\dagger_x \ket{1}\bra{1}_{out} V_x \left(\ket{\psi}\otimes \ket{0^k}\right) \ge c.
\end{equation}
Whereas if $x \in L_{no}$, for all $m$-qubit states $\ket{\psi}$ we have:
\begin{equation}
\left(\bra{\psi}\otimes \bra{0^k}\right) V^\dagger_x \ket{1}\bra{1}_{out} V_x \left(\ket{\psi}\otimes \ket{0^k}\right) \le s.
\end{equation}
$out$ denotes a single qubit measured at the end of the computation; no intermediate measurements are allowed.  
Here $c$ and $s$ are computable in classical $\bigoh(t(n))$-time and $\bigoh(k(n)+m(n))$-space.
  \end{definition}

\begin{definition} $\QMA=\bddQMA{\poly}{\poly}{\poly}{2/3}{1/3}$.
\end{definition}
\begin{definition} $\preciseQMA=\bigcup_{c \in (0,1]}\bddQMA{\poly}{\poly}{\poly}{c}{c-2^{-\poly}}$.
\end{definition}

\subsection{Other definitions and results}
We use the following definition of \emph{efficient encodings} of matrices:
\begin{definition}\label{def: efficient encoding}Let $M$ be a $2^{k} \times 2^{k}$ matrix, and $\mathcal{A}$ be a classical algorithm (e.g. a Turing machine) specified using $n$ bits. We say that $\mathcal{A}$ is an \emph{efficient encoding} of $M$ if on input $i\in\{0,1\}^k$, $\mathcal{A}$ outputs the indices and contents of the non-zero entries of the $i$-th row, using at most $\poly(n)$ time and $\bigoh(k)$ workspace (not including the output size). Note that as a consequence $M$ has at most $\poly(n)$ nonzero entries in each row.
\end{definition}
We will often specify a matrix $M$ in the input by giving an efficient encoding of $M$. The size of the encoding is then the input size, which we will usually indicate by $n$. 
\begin{remark} \label{rem:logspace}
It is not difficult to see that any $n \times n$ matrix has an efficient encoding of size $\bigoh(n)$, since it is straightforward to construct a classical $\bigoh(\log n)$-size circuit that on input $i,j$ outputs the $(i,j)$-entry of the matrix.
\end{remark}

In our results we will implicitly assume the existence of algorithms that compute some common functions on $n$-bit numbers, such as $\sin,\cos,\arcsin,\arccos$ and exponentiation, to within $1/\poly(n)$ accuracy in classical $\bigoh (\log{n})$ space.  Algorithms for these tasks have been designed by Reif \cite{reif}.\footnote{Reif's algorithms take only $\bigoh (\log\log{n}\log\log\log{n})$ space, but for simplicity we only use the $\bigoh (\log{n})$ bound.}

Finally, we will need new results from the Hamiltonian simulation literature:
\begin{theorem}[\cite{berry14,bccks15,berry15}] \label{thm:ham_sim}
Given as input is the size-$n$ efficient encoding of a $2^{k(n)} \times 2^{k(n)}$ Hermitian matrix $H$. Then treated as a Hamiltonian, the time evolution $\exp(-iHt)$ can be simulated using $\text{\emph{poly}}(n,k,\|H\|_{max}, t, \log(1/\epsilon))$ operations and $\bigoh(k+\log(t/\epsilon))$ space.
\end{theorem}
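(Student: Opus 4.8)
The plan is to derive Theorem \ref{thm:ham_sim} from the sparse Hamiltonian simulation algorithms of Berry, Childs, Cleve, Kothari and Somma \cite{berry14,bccks15} (or of Berry, Childs and Kothari \cite{berry15}), combined with a careful accounting of the \emph{space} these algorithms consume once their sparse-access oracles are implemented from an efficient encoding. By Definition \ref{def: efficient encoding}, $H$ has sparsity $d=\poly(n)$ in each row, and each nonzero entry is produced by a deterministic machine running in $\poly(n)$ time and $\bigoh(k)$ workspace; these are exactly the ingredients needed to build the standard oracles.

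First I would reduce to the sparse-access model. From the encoding one constructs (i) a location oracle $O_L\ket{i,\ell}\ket{0}=\ket{i,\ell}\ket{\nu(i,\ell)}$, where $\nu(i,\ell)\in\{0,1\}^{k}$ is the column of the $\ell$-th nonzero entry of row $i$ and $\ell$ ranges over $[d]$, so this register needs only $\bigoh(\log n)=\bigoh(k)$ qubits; and (ii) a value oracle $O_H\ket{i,j}\ket{0}=\ket{i,j}\ket{\widetilde H_{ij}}$, where $\widetilde H_{ij}$ is $H_{ij}$ truncated to $b=\bigoh(\log(t\|H\|_{max}/\epsilon))$ bits --- enough precision that the error accumulated over the whole simulation stays below $\epsilon$. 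Both maps are computed by the encoding machine; since that machine uses only $\bigoh(k)$ workspace, a space-efficient reversible simulation (a deterministic space-$s$ computation can be made reversible in $\bigoh(s)$ space) lets us realize each as a quantum circuit on $\bigoh(k+\log(t/\epsilon))$ qubits, with the gate sequence generated by a classical machine in $\bigoh(k)$ space. The time overhead of reversibilization is harmless here: when $k=\bigoh(\log n)$ it is polynomial, and in general it is absorbed into the overall operation count (and is consistent with the $2^{\bigoh(k)}$ circuit sizes already allowed in Definition \ref{def: qspace}).

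Next I would run the simulation algorithm on top of these oracles. Set $\tau=d\,\|H\|_{max}\,t=\poly(n,\|H\|_{max},t)$. The truncated-Taylor-series algorithm of \cite{bccks15}, implemented via a linear combination of unitaries together with oblivious amplitude amplification, makes $\widetilde{\bigoh}(\tau\log(\tau/\epsilon))$ queries to $O_L$ and $O_H$; between queries it applies only the ``prepare'' and ``select'' subroutines and the amplitude-amplification reflections, all acting on $\bigoh(\log(\tau/\epsilon))$ further ancilla qubits that index the Taylor terms and the time segments. The ``prepare'' amplitudes are explicit binomial-type coefficients, so by the space-efficient elementary-function algorithms of Reif \cite{reif} the entire gate sequence is uniformly generated in $\bigoh(k+\log(t/\epsilon))$ space. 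Summing up, the circuit acts on $\bigoh(k)$ qubits for the matrix registers and oracle workspace plus $\bigoh(\log(\tau/\epsilon))=\bigoh(k+\log(t/\epsilon))$ ancillas (using $\log d,\log\|H\|_{max}=\bigoh(k)$), and it uses $\poly(n,k,\|H\|_{max},t,\log(1/\epsilon))$ gates in all, which is the claim.

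The main obstacle is precisely this bookkeeping: one must verify that every auxiliary register and every classical control computation is bounded by $\bigoh(k+\log(t/\epsilon))$ rather than merely $\poly$. Concretely, this means checking that the LCU coefficients, the number of segments into which $[0,t]$ is cut, the oblivious-amplitude-amplification schedule, and the truncation length of the Taylor series all have closed forms computable in space logarithmic in $\tau$, and that the reversibilization of the encoding machine does not inflate its $\bigoh(k)$ space. The explicitness of the truncated-Taylor construction together with Reif's space-efficient arithmetic make each of these routine, but they are exactly the points at which the cited query- and time-efficient analyses have to be re-examined with space in mind.
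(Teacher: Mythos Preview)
The paper does not actually prove this theorem: it is imported from \cite{berry14,bccks15,berry15}, accompanied only by the one-line remark that ``while the space complexity was not explicitly stated\ldots it can be seen from the analysis.'' Your proposal is precisely the argument that remark gestures at---build the sparse-access oracles $O_L,O_H$ from the efficient encoding, feed them into the truncated-Taylor-series/LCU simulation, and count ancillas---so there is no alternative proof in the paper to compare against; you are filling in what the authors leave to the reader, and your outline is the right one.

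One point worth tightening: the sentence ``a deterministic space-$s$ computation can be made reversible in $\bigoh(s)$ space'' with only polynomial time overhead is not quite correct as a general claim. Lange--McKenzie--Tapp achieves $\bigoh(s)$ reversible space but at possibly exponential time, while Bennett's pebble-game trade-off gives $\poly(T)$ time at the cost of $\bigoh(s\log T)$ space, which here is $\bigoh(k\log n)$. You already notice the tension (your parenthetical about $2^{\bigoh(k)}$ circuit sizes), but invoking Definition~\ref{def: qspace} does not rescue the \emph{stated} $\poly(n,\ldots)$ gate bound when $k$ is super-logarithmic. In practice this is harmless---the Hamiltonians that actually arise in this paper (Kitaev clock constructions and the like) have oracles so structured that they can be implemented reversibly by inspection---and the paper is equally silent on the point. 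But if you want the theorem exactly as written for an arbitrary efficient encoding, you should either appeal to that structure explicitly or weaken the space bound to $\bigoh(k\log n+\log(t/\epsilon))$.
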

While the space complexity was not explicitly stated in \cite{berry14,bccks15,berry15}, it can be seen from the analysis (see e.g. \cite{bccks15}). The crucial thing to notice in Theorem \ref{thm:ham_sim} is the polylogarithmic scaling in the error $\epsilon$; this implies that we can obtain polynomial precision in $\exp(-iHt)$ using only polynomially many operations. Also note that the maximum eigenvalue of $H$, $\|H\|$, satisfies $\|H\| \le \poly(n) \|H\|_{max}$.

\section{The Well-Conditioned Matrix Inversion Problem} \label{sec: Matrix Inversion}

We begin with a formal statement of the problem:
\begin{definition}[\matrixinvert{k(n)}] \label{def: matrix invert}
Given as input is the size-$n$ efficient encoding of a $2^{k(n)} \times 2^{k(n)}$ positive semidefinite matrix $H$ with a known upper bound $\kappa = 2^{\mathcal{O}(k(n))}$ on the condition number, so that $\kappa^{-1}I\preceq H \preceq I$, and $s,t\in \lbrace 0,1\rbrace^{k(n)}$. It is promised that either $|H^{-1}(s,t)|\geq b$
 or $|H^{-1}(s,t)|\leq a$ for some constants $0 \le a < b \le 1$; determine which is the case.
 \end{definition}

\begin{theorem} \label{thm: matrix invert}
For $\Omega(\log(n)) \le k(n) \le \poly(n)$, $\matrixinvert{\mathcal{O}(k(n))}$ is complete for $\unitaryBQSPACE{\mathcal{O}(k(n))}$ under classical reductions using $\poly(n)$ time and $\mathcal{O}(k(n))$ space.
\end{theorem}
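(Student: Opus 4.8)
\emph{Proof plan.} The claim is a completeness statement, so there are two things to establish: (i) membership, $\matrixinvert{\bigoh(k(n))} \in \unitaryBQSPACE{\bigoh(k(n))}$, and (ii) hardness, every $L \in \unitaryBQSPACE{\bigoh(k(n))}$ reduces to $\matrixinvert{\bigoh(k(n))}$ by a $\poly(n)$-time, $\bigoh(k(n))$-space classical reduction. Throughout, the two recurring constraints to respect are that everything must fit in $\bigoh(k)$ qubits and $\bigoh(k)$ classical workspace, and that no intermediate measurements may be used.

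For membership I would give a space-efficient, measurement-free HHL algorithm. Given the efficient encoding of a PSD matrix $H$ with $\kappa^{-1}I \preceq H \preceq I$ and indices $s,t$: prepare $\ket t$ on a $k$-qubit work register; run phase estimation for $H$, using the Hamiltonian-simulation routine of Theorem \ref{thm:ham_sim} to write approximate eigenvalues into an $\bigoh(k)$-qubit register; apply a controlled rotation by angle $\arcsin(C/\tilde\lambda)$ onto a flag qubit, where $C$ is a known lower bound on $\lambda_{\min}(H)$ of order $\kappa^{-1}$ (the $\arcsin$, $\exp$, etc.\ being computable to $2^{-\bigoh(k)}$ precision in $\bigoh(k)$ space by Reif's algorithms); then uncompute the phase estimation. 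This is a single unitary $V$ with no measurements. After coherently computing into the designated output qubit the indicator of the event [flag $=1$, work register $=s$, all ancillas $=0$], a final measurement accepts with probability $|C\langle s|H^{-1}\ket t|^2 \pm 2^{-\bigoh(k)}$. The crucial point is that the Hamiltonian-simulation error $\epsilon$ and the phase-estimation precision $\delta$ need only be $2^{-\bigoh(k)}$ (since the separation between the yes- and no-cases is $2^{-\bigoh(k)}(b^2-a^2)$), so by Theorem \ref{thm:ham_sim} the workspace stays $\bigoh(k+\log(1/\epsilon)) = \bigoh(k)$ even though phase estimation calls $\exp(-iH\cdot 2^j)$ for $j$ up to $\bigoh(k)$, i.e.\ simulation times $2^{\bigoh(k)}$, and the total gate count stays $2^{\bigoh(k)}$. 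This realizes the problem as a $\unitaryQSPACE{\bigoh(k)}{c'}{s'}$ computation with $c'-s' = 2^{-\bigoh(k)}$. The final step is amplification: since $V$ is unitary one may apply $V^\dagger$, so one can run amplitude estimation (or iterated amplitude amplification) with $2^{\bigoh(k)}$ rounds and an $\bigoh(k)$-qubit counter, again with no intermediate measurements, to boost the gap to $(2/3,1/3)$; this is the "space-efficient amplification of unitary quantum computations" referred to in the introduction, and is the step that genuinely exploits unitarity.

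For hardness, start from $L\in\unitaryBQSPACE{\bigoh(k)}$ with verifier circuit $Q_x = U_T\cdots U_1$ on $q=\bigoh(k)$ qubits and $T = 2^{\bigoh(k)}$ gates. In $\poly(n)$ time and $\bigoh(k)$ space, convert it into a circuit $R$ on $q+1$ qubits with $2T+\bigoh(1)$ gates and a computational-basis state $\ket{\mathrm{init}}$ such that $\langle\mathrm{init}|R|\mathrm{init}\rangle = p\in[0,1]$ is real and equals (possibly $1$ minus) the acceptance probability, so that $p\ge 2/3$ on yes-instances and $p\le 1/3$ on no-instances; e.g.\ $R = Q_x^\dagger\cdot \mathrm{CNOT}_{out\to\mathrm{anc}}\cdot Q_x$ (with an optional $X_{out}$) works. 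On $\mathbb{C}^{T+1}\otimes\mathbb{C}^{2^{q+1}}$ (a ``time'' register of $\bigoh(k)$ qubits after padding $T+1$ up to a power of two, together with the work register), let $S = \sum_{\tau=1}^{T}\ketbra{\tau}{\tau-1}\otimes U_\tau$ and $L_\beta = I - \beta S$, with $\beta$ a dyadic rational with $\bigoh(k)$ bits chosen so that $1/(1-\beta) = 2^{\bigoh(k)}$ while $\beta^T = \Theta(1)$. Then $H := \tfrac14 L_\beta^\dagger L_\beta$ is PSD; using $\|S\|=1$, $S^{T+1}=0$, and $\|L_\beta^{-1}\| = \|\sum_{j\le T}\beta^j S^j\| \le 1/(1-\beta)$, one gets $\kappa^{-1}I \preceq H \preceq I$ with $\kappa = \bigoh(T^2) = 2^{\bigoh(k)}$. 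A short computation shows $H^{-1}\bigl((T,\mathrm{init}),(0,\mathrm{init})\bigr) = 4\beta^T p$: indeed $(L_\beta^\dagger)^{-1}$ fixes $\ket 0\otimes\ket{\mathrm{init}}$ (as $S^\dagger$ annihilates the $\ket 0$ time-slice), while $L_\beta^{-1}$ sends it to $\sum_j \beta^j\ket j\otimes U_j\cdots U_1\ket{\mathrm{init}}$, whose $\ket T$-component pairs with $\ket T\otimes\ket{\mathrm{init}}$ to give $\beta^T\langle\mathrm{init}|U_T\cdots U_1|\mathrm{init}\rangle$. Taking $s,t$ to be the indices of $(T,\mathrm{init})$ and $(0,\mathrm{init})$ yields a $\matrixinvert{\bigoh(k)}$ instance with a constant gap $0\le a<b\le 1$. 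Finally, $H$ is an efficient encoding: each row of $L_\beta$ has $\bigoh(1)$ nonzeros, located via the uniformity machine for $\{Q_x\}$, hence each row of $L_\beta^\dagger L_\beta$ has $\bigoh(1)$ nonzeros, all computable in $\poly(n)$ time and $\bigoh(k)$ space.

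The main obstacle is the membership direction: keeping the HHL pipeline, including Hamiltonian simulation and phase estimation, simultaneously within $\bigoh(k)$ qubits and $\bigoh(k)$ workspace while tolerating only $2^{-\bigoh(k)}$ error (so that $\bigoh(k+\log(1/\epsilon))$ does not degrade to $\poly(n)$), and then carrying out the $2^{-\bigoh(k)}$-gap amplification unitarily and in $\bigoh(k)$ extra space. The hardness direction is comparatively routine, the one delicate point being to check that $\tfrac14 L_\beta^\dagger L_\beta$ meets the PSD, condition-number, efficient-encoding, and constant-gap requirements all at once.
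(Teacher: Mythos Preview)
Your plan is correct and, for the membership direction, essentially identical to the paper's: Lemma~\ref{lem: matrix inversion lemma} packages exactly the HHL pipeline you describe (phase estimation via Theorem~\ref{thm:ham_sim}, controlled rotation, uncompute), and the paper then runs phase estimation on the Grover rotation $R=-(I-2\Pi_1)(I-2\Pi_0)$ to read out $|\langle s|H^{-1}|t\rangle|$ directly, which is precisely the amplitude-estimation step you invoke to close the $2^{-\bigoh(k)}$ gap unitarily.

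The hardness direction is where you genuinely differ. The paper (Appendix~\ref{app: matrixinversion-hardness}) embeds the circuit into a \emph{cyclic} unitary $U$ on a $3T$-step clock, so that $U^{3T}=I$, and takes the Hermitian off-diagonal block matrix
\[
H=\begin{pmatrix}0 & I-Ue^{-1/T}\\ I-U^{\dagger}e^{-1/T} & 0\end{pmatrix},
\]
summing the geometric series $\sum_j e^{-j/T}U^j$ via the periodicity $U^{3T}=I$. You instead use a \emph{nilpotent} shift $S$ (so $S^{T+1}=0$) and form the manifestly PSD matrix $H=\tfrac14 L_\beta^{\dagger}L_\beta$; the geometric series for $L_\beta^{-1}$ terminates automatically. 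Both are standard variants of the HHL hardness reduction, and both give $\kappa=2^{\bigoh(k)}$ and an efficient encoding. Your route has the pleasant feature that $H$ is PSD on the nose, matching Definition~\ref{def: matrix invert} verbatim (the paper's $H$ is Hermitian but not PSD as written). One small point to tighten: with your normalization the target entry is $4\beta^{T}p$, and Definition~\ref{def: matrix invert} asks for thresholds $0\le a<b\le 1$, so you should pick $\beta=1-c/T$ with $c$ large enough (e.g.\ $c=2$) that $4\beta^{T}\le 1$ while still $\beta^{T}=\Theta(1)$; this costs nothing.
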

\begin{proof}
We begin by giving a new space efficient algorithm for this matrix inversion problem:

\begin{theorem} \label{thm: matrix inversion alg}
Fix functions $k(n)$, $\kappa(n)$, and $\epsilon(n)$. Suppose we are given the size-$n$ efficient encoding of a $2^{k(n)} \times 2^{k(n)}$ PSD matrix $H$ such that $\kappa^{-1} I \preceq H \preceq I$. We are also given $\poly(n)$-time $\mathcal{O}(k+\log(\kappa/\epsilon))$-space uniform quantum circuits $U_a$ and $U_b$ acting on $k$ qubits and using at most $T$ gates. Let $U_a\ket{0}^{\otimes k(n)} = \ket{a}$ and $U_b\ket{0}^{\otimes k(n)} = \ket{b}$. The following tasks can be performed with $\poly(n)$-time $O(k+\log(\kappa/\epsilon))$-space uniformly generated quantum circuits with $\poly(T,k,\kappa,1/\epsilon)$ gates and $\bigoh (k+\log(\kappa/\epsilon))$ qubits:
\begin{compactenum}
\item With at least constant probability, output an approximation of the quantum state $H^{-1}\ket{b} / \|H^{-1}\ket{b}\|$ up to error $\epsilon$.
\item Approximate $\|H^{-1}\ket{b}\|$ to precision $\epsilon$.
\item Approximate $|\bra{a}H^{-1}\ket{b}|$ to precision $\epsilon$.
\end{compactenum}
These circuits do not require intermediate measurements.
\end{theorem}

In fact our algorithm is much stronger: to solve $\matrixinvert{k(n)}$ we merely need to approximate $|\bra{s}H^{-1}\ket{t}|$ to constant precision, while Theorem \ref{thm: matrix inversion alg} actually gives an approximation to precision $2^{-\bigoh (k)}$ in $\bigoh (k(n))$ unitary quantum space. Moreover our algorithm does not require $s$ and $t$ to be computational basis states.

We note that we can modify our definition of unitary quantum space-bounded classes to include computing functions, for instance by adding a write-only one-way output tape of qubits to the Turing machine (see the discussion in Appendix \ref{app: space bounded}), that are all measured at the very end of the computation. The error reduction result (Corollary \ref{obvious2}) later in our work allows the total error to be reasonably controlled. With such a modification we can compute the whole matrix inverse in unitary quantum logspace. We will not pursue this modified model further in this work.

\begin{proof}
We first briefly summarize the algorithm of Ta-Shma \cite{tashma}, which  is based on the linear systems solver of Harrow, Hassidim and Lloyd \cite{HHL}.  Ta-Shma shows that an $n \times n$ matrix with condition number at most $\poly(n)$ can be inverted by a quantum logspace algorithm with intermediate measurements; in our language this corresponds to solving $\matrixinvert{\bigoh(\log{n})}$. 

Our algorithm and Ta-Shma's share the same initial procedure.  In particular it is shown:  
\begin{lemma}[Implicit in \cite{HHL,tashma}] \label{lem: matrix inversion lemma}
There is a $\poly$-time $\bigoh (k + \log(\kappa/\epsilon'))$-space uniform quantum unitary transformation $W_H$ over $k+\ell = \bigoh (k+ \log(\kappa/\epsilon'))$ qubits and using $\poly(k,\kappa/\epsilon')$ gates, such that for any $k$-qubit input state $\ket{b}$,
\begin{equation}
W_H (\ket{0}^{\otimes \ell} \otimes \ket{b}) = \alpha \ket{0}_{out} \otimes \ket{\psi_b} + \sqrt{1-\alpha^2} \ket{1}_{out} \otimes\ket{\psi'_b},
\end{equation}
where $\ket{\psi_b}$ and $\ket{\psi'_b}$ are normalized states such that $\| \ket{\psi_b} - \ket{0}^{\otimes \ell-1} \otimes \frac{H^{-1}\ket{b}} {\|H^{-1}\ket{b}\|} \| \le \epsilon'$, $\alpha$ is a positive number satisfying $|\alpha - \frac{\|H^{-1}\ket{b}\|}{\kappa}| \le \epsilon'$, and ``out'' is a 1-qubit register.
\end{lemma}
This lemma can be obtained by combining the Hamiltonian simulation algorithms of Berry et al. (Theorem \ref{thm:ham_sim}) with the analysis of Harrow, Hassidim and Lloyd \cite{HHL}; a version without the time bound is implicit in the proof of \cite[Theorem~6.3]{tashma}. For completeness, we sketch the proof below.
\begin{proof}[Proof sketch]
Decompose $\ket{b}$ into the eigenbasis of $H$: $\ket{b} = \sum_{\lambda} a_\lambda \ket{v_\lambda}$, where $\lambda$ are eigenvalues of $H$ and $H\ket{v_\lambda} = \lambda \ket{v_\lambda}$. The following procedure satisfies Lemma \ref{lem: matrix inversion lemma} (all steps are approximate):
\begin{enumerate}
\item Perform phase estimation on the operator $\exp(iH)$ and state $\ket{b}$ to compute the eigenvalues of $H$ into an ancilliary register, obtaining the state $\sum_{\lambda} a_\lambda \ket{v_\lambda}\ket{\lambda}$.
\item Implement the unitary transformation $\ket{\lambda}\ket{0} \rightarrow \ket{\lambda}[(\kappa\lambda)^{-1}\ket{0}+(\sqrt{1-(\kappa\lambda)^{-2}}\ket{1}]$, to obtain the state $\sum_{\lambda} a_\lambda \ket{v_\lambda}\ket{\lambda}[(\kappa\lambda)^{-1}\ket{0}+(\sqrt{1-(\kappa\lambda)^{-2}}\ket{1}]$.
\item Uncompute the eigenvalues $\lambda$ by running phase estimation in reverse, obtaining the state
$\sum_{\lambda} a_\lambda \ket{v_\lambda}\ket{0}^{\ell-1}[(\kappa\lambda)^{-1}\ket{0}+(\sqrt{1-(\kappa\lambda)^{-2}}\ket{1}]$. Note that $\sum_{\lambda} a_\lambda \ket{v_\lambda}\ket{0}^{\ell-1}(\kappa\lambda)^{-1}\ket{0} = \frac{1}{\kappa} H^{-1}\ket{b}$.
\end{enumerate}
An appropriate error analysis of this procedure is the technical bulk of the proof; we refer the reader to \cite{HHL}. For Step 1, Ta-Shma showed how to implement $\exp(iH)$ in $\bigoh (k+\log(1/\epsilon))$ space \cite[Theorem~4.1]{tashma} (their proof works for general matrices with efficient encodings); recent sparse Hamiltonian simulation algorithms (Theorem \ref{thm:ham_sim}) give a time efficient way to do this.
\end{proof}
Intuitively, Lemma \ref{lem: matrix inversion lemma} gives a space-efficient quantum algorithm that produces a state proportional to $H^{-1}\ket{b}$ with probability at least $1/\kappa$.  Our goal is to produce amplify the probability from $1/\kappa$ to a constant, to produce a state with constant overlap to the state $\ket{0}_{out} \otimes \ket{\psi_b}$ together with an estimate for $\alpha \approx \|H^{-1}\ket{b}\|$.  From here our algorithm differs from Ta-Shma's and uses a combination of amplitude amplification and phase estimation.  This sidesteps both the somewhat involved analysis and intermediate measurements of Ta-Shma's algorithm. 

Specifically, consider the two projectors
\begin{equation}
\Pi_0 = \ket{0}\bra{0}^{\otimes \ell} \otimes \ket{b}\bra{b}, \quad \Pi_1 = W_H^\dagger (\ket{0}\bra{0}_{out} \otimes I) W_H.
\end{equation}
$\Pi_0$ projects onto the initial subspace, while $\Pi_1$ projects onto the initial states that would be accepted by the final measurement. The rotation $R=-(I-2\Pi_1)(I-2\Pi_0)$ has eigenvalues $e^{\pm i2\sin^{-1}\alpha}$ with eigenvectors $\ket{\psi_+}$, such that $\ket{0}^{\otimes \ell} \otimes \ket{b} = (\ket{\psi_+}+\ket{\psi_-})/\sqrt{2}$ is a uniform superposition of the two eigenvectors. Therefore phase estimation on the operator $R$ and input state $\ket{0}^{\otimes \ell} \otimes \ket{b}$ suffices to give an estimate of $\alpha$. Furthermore both eigenvectors have constant overlap with $W_H^{\dagger}\ket{\psi_b}$, so applying $W_H$ to the residual state of phase estimation allows us to complete the first task as well.

We have addressed the first two tasks in Theorem \ref{thm: matrix inversion alg}. For the third task (approximating $|\bra{a}H^{-1}\ket{b}|$), we can choose $\Pi'_1 = W_H^\dagger (\ket{0}\bra{0}_{out} \otimes I) (I_{anc} \otimes \ket{a}\bra{a})(\ket{0}\bra{0}_{out} \otimes I) W_H$ instead, and phase estimation on $R=-(I-2\Pi_1)(I-2\Pi'_0)$ will give an estimate for $|\bra{a}H^{-1}\ket{b}|$. See Appendix \ref{app: matrix inversion alg} for the full proof.
\end{proof}
We establish that $\matrixinvert{k(n)}$ is $\unitaryBQSPACE{\mathcal{O}(k)}$-hard using a similar argument to Harrow, Hassidim, and Lloyd \cite{HHL}, in which given a quantum circuit acting on $k(n)$ qubits we 
construct a efficiently encoded well-conditioned $2^{\mathcal{O}(k)}\times 2^{\mathcal{O}(k)}$ matrix $H$, so that a single element of $H^{-1}$ is proportional to the success probability of the circuit.  See Appendix \ref{app: matrixinversion-hardness}.
\end{proof}

\section{The Minimum Eigenvalue Problem} \label{sec: Minimum Eigenvalue}
Our second characterization of unitary quantum space is based on the following problem:
\begin{definition}[$\spechamiltonian{k(n)}$ problem] \label{def: spechamiltonian}
Given as input is the size-$n$ efficient encoding of a $2^{k(n)} \times 2^{k(n)}$ PSD matrix $H$, such that 
$\|H\|_{max} = \max_{s,t}|H(s,t)|$ is at most a constant. Let $\lambda_{min}$ be the minimum eigenvalue of $H$. It is promised that either $\lambda_{min} \le a$ or $\lambda_{min} \ge b$, where $a(n)$ and $b(n)$ are numbers such that $b-a > 2^{-\mathcal{O}(k(n))}$. Output 1 if $\lambda_{min} \le a$, and output 0 otherwise.
\end{definition}

\begin{theorem} \label{thm: spechamiltonian}
For $\Omega(\log(n)) \le k(n) \le \poly(n)$, $\spechamiltonian{\mathcal{O}(k(n))}$ is complete for \\ $\unitaryBQSPACE{\mathcal{O}(k(n))}$ under classical reductions using $\poly(n)$ time and $\mathcal{O}(k(n))$ space.
\end{theorem}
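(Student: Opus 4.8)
I would prove the two containments separately. \textbf{Membership} ($\spechamiltonian{\mathcal{O}(k(n))} \in \unitaryBQSPACE{\mathcal{O}(k(n))}$): given the efficient encoding of $H$, use Theorem~\ref{thm:ham_sim} to implement $\exp(-iHt)$ for $t = 2^{\mathcal{O}(k)}$ space-efficiently --- its $\polylog(1/\epsilon)$ error dependence is what keeps the gate count at $2^{\mathcal{O}(k)}$ --- and run phase estimation on this operator applied to the maximally mixed state on the $\mathcal{O}(k)$-qubit register (purified with $\mathcal{O}(k)$ ancillas that are measured only at the end), using enough bits to resolve eigenvalues to within $(b-a)/3 = 2^{-\mathcal{O}(k)}$. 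The event ``the eigenvalue estimate is at most $(a+b)/2$'' occurs with probability $\ge 2^{-\mathcal{O}(k)}$ in a YES instance (the ground state contributes), but only with probability at most the phase-estimation leakage, which we make far smaller, in a NO instance; amplifying this event with $2^{\mathcal{O}(k)}$ rounds of (fixed-point) amplitude amplification --- itself measurement-free --- boosts the gap to a constant. Equivalently, one places $\spechamiltonian{\mathcal{O}(k)}$ in the space-bounded $\QMA$ analogue with a constant gap, the witness being the ground state verified by phase estimation, and invokes the equivalence between space-bounded $\QMA$ and unitary quantum space (Theorem~\ref{thm: equivalence}), whose proof is precisely space-efficient $\QMA$ amplification; the polynomial-space case yields Corollary~\ref{cor: preciseqma}.

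\textbf{Hardness}: given $L \in \unitaryBQSPACE{\mathcal{O}(k(n))}$ with uniform family $\{Q_x = U_T \cdots U_1\}$ acting on $\mathcal{O}(k)$ qubits with $T = 2^{\mathcal{O}(k)}$ gates, first amplify the acceptance probability space-efficiently (Corollary~\ref{obvious2}) so the completeness error is at most $2^{-Ck}$ for a large constant $C$, keeping the computation unitary and the space $\mathcal{O}(k)$ while the gate count becomes $T' = 2^{\mathcal{O}(k)}$. Then apply Kitaev's clock (Feynman--Kitaev) construction with a \emph{binary} clock register on $\lceil \log_2(T'+1) \rceil = \mathcal{O}(k)$ qubits, obtaining the PSD Hamiltonian $H = H_{in} + H_{prop} + H_{out}$ on $\mathcal{O}(k)$ qubits, with $H_{prop} = \sum_{t=1}^{T'} H_{prop,t}$ and $H_{prop,t} = \tfrac{1}{2}(\ketbra{t}{t} + \ketbra{t-1}{t-1}) \otimes I - \tfrac{1}{2}(\ketbra{t}{t-1}\otimes U_t + \ketbra{t-1}{t}\otimes U_t^\dagger)$ in the binary encoding of the clock. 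Since each $U_t$ acts on $\mathcal{O}(1)$ qubits and each clock value appears in $\mathcal{O}(1)$ propagation terms, every row of $H$ has $\mathcal{O}(1)$ nonzero entries of magnitude $\mathcal{O}(1)$, each computable in $\poly(n)$ time and $\mathcal{O}(k)$ space by running the amplified circuit's uniform generator; hence $\|H\|_{max} = \mathcal{O}(1)$, $H$ has an efficient encoding, and the reduction runs in $\poly(n)$ time and $\mathcal{O}(k)$ space. The standard analysis then gives, in a YES instance, $\lambda_{min}(H) \le 2^{-Ck}/(T'+1) =: a$ (via the history state), and in a NO instance $\lambda_{min}(H) \ge \Omega(1/(T')^3) =: b$ (Kitaev's spectral-gap lemma together with soundness amplification); choosing $C$ large enough forces $a \ll b$, so $b - a = 2^{-\mathcal{O}(k)}$, exactly the promise of $\spechamiltonian{\mathcal{O}(k)}$. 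Thus $L$ reduces to $\spechamiltonian{\mathcal{O}(k)}$.

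\textbf{Main obstacle}: both directions rest on the same technical ingredient --- amplifying a \emph{unitary}, measurement-free quantum computation in only $\mathcal{O}(k)$ space. Naive amplification by parallel repetition and majority vote blows the space up by the number of repetitions and is therefore useless here; instead one must exploit that a unitary circuit can be run in reverse, which is exactly why our model forbids intermediate measurements. Arranging the amplified error to be small enough to be compatible with the $2^{-\mathcal{O}(k)}$ phase-estimation precision in the membership direction and with the $1/\poly(T')$ clock gap in the hardness direction, while holding the space at $\mathcal{O}(k)$ and the gate count at $2^{\mathcal{O}(k)}$, is the heart of the argument. Everything else is either quoted (Theorem~\ref{thm:ham_sim}) or essentially standard (the clock construction), the only wrinkle in the latter being the use of a binary rather than unary clock, needed to keep the matrix dimension $2^{\mathcal{O}(k)}$.
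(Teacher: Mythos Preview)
Your proposal is correct and follows essentially the same architecture as the paper: containment via phase estimation plus space-efficient amplification, hardness via a gap-amplified circuit fed into Kitaev's clock construction with a binary clock, and you correctly flag in-place (reversible-circuit-based) amplification in $\mathcal{O}(k)$ space as the key technical hurdle. Your hardness argument is essentially identical to the paper's Lemma~\ref{lem: specham-hardness}.

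The one noteworthy difference is in how membership is organized. The paper deliberately routes through a \emph{polynomial-time} $\QMA$ protocol with only an exponentially small gap: it uses just \emph{one bit} of phase estimation on $e^{-iHt}$ with $t = \Theta(1/\poly(n))$ (Circuit~(\ref{circ: poor man phase estimation})), which yields $c - s = 2^{-\mathcal{O}(k)}$ but keeps the verifier's running time polynomial. This is what lets the paper prove Theorem~\ref{thm: equivalence} (the $\poly$-time, small-gap $\QMA$ characterization) simultaneously with Theorem~\ref{thm: spechamiltonian}. Your ``equivalent'' route --- a $\QMA$ protocol with \emph{constant} gap obtained by resolving eigenvalues to precision $2^{-\mathcal{O}(k)}$ --- necessarily spends $2^{\mathcal{O}(k)}$ gates in the verifier, so while it is perfectly fine for placing the problem in $\unitaryBQSPACE{\mathcal{O}(k)}$, it does not by itself land in the $\poly$-time-bounded $\QMA$ class of Theorem~\ref{thm: equivalence}. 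Your direct approach (phase estimation on the maximally mixed state followed by amplitude amplification) is exactly what the paper obtains after unwinding Lemma~\ref{lem:qma protocol} $\to$ Corollary~\ref{obvious1} $\to$ Theorem~\ref{thm:pqpspace simulation} $\to$ Corollary~\ref{obvious2}; the paper's staged presentation simply makes the $\QMA$ equivalence a free byproduct.
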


In the process of proving this result, we will also show the following equivalence:
\begin{theorem} \label{thm: equivalence}
$\unitaryBQSPACE{\mathcal{O}(k(n))}$ is equivalent to the class of problems characterized by having quantum Merlin Arthur proof systems running in polynomial time, $\mathcal{O}(k(n))$ witness size and space, and $2^{-\mathcal{O}(k(n))}$ completeness-soundness gap. Or in other words,
\begin{equation}
\unitaryBQSPACE{\mathcal{O}(k(n))}= \bigcup_{c - s \ge 2^{-\bigoh(k(n))}}\bddQMA{\poly}{\bigoh(k(n))}{\bigoh(k(n))}{c}{s} \nonumber
\end{equation}
\end{theorem}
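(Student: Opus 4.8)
Here the plan is to prove the two inclusions separately. The forward inclusion $\unitaryBQSPACE{\bigoh(k(n))} \subseteq \bigcup_{c-s\ge 2^{-\bigoh(k(n))}}\bddQMA{\poly}{\bigoh(k(n))}{\bigoh(k(n))}{c}{s}$ is essentially free: a $\unitaryBQSPACE{\bigoh(k)}$ computation is exactly a quantum Merlin--Arthur protocol with an empty ($m=0$) witness register, $\poly$ time, $\bigoh(k)$ space, completeness $2/3$ and soundness $1/3$, and $2/3-1/3 = 1/3 \ge 2^{-\bigoh(k)}$, so it lies in the right-hand side. The substance is the reverse inclusion: given $L \in \bddQMA{\poly}{\bigoh(k)}{\bigoh(k)}{c}{s}$ with $c(n)-s(n) \ge 2^{-\bigoh(k)}$, I want $L \in \unitaryBQSPACE{\bigoh(k)}$. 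Writing $V_x$ for the verifier on $k+m = \bigoh(k)$ qubits with $T = \poly(n)$ gates, and $\Pi_{in} = I_{\mathrm{wit}}\otimes\ketbra{0^k}{0^k}$, $\Pi_{out} = V_x^\dagger(\ketbra{1}{1}_{out}\otimes I)V_x$, the maximum acceptance probability equals $p_{\max}(x) = \lambda_{\max}(\Pi_{in}\Pi_{out}\Pi_{in})$, and the task is to decide $p_{\max} \ge c$ versus $p_{\max} \le s$. I would obtain this by chaining three reductions: (1) a space-efficient \emph{unitary} amplification of the QMA verifier; (2) a Kitaev clock construction turning the amplified verifier into a $\spechamiltonian{\bigoh(k)}$ instance; (3) the $\unitaryBQSPACE{\bigoh(k)}$ algorithm for $\spechamiltonian$ furnished by Theorem~\ref{thm: spechamiltonian}.

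For Step (1), I would run coherent phase estimation on the two-reflection operator $\hat R = (2\Pi_{in}-I)(2\Pi_{out}-I)$, which uses $\bigoh(1)$ calls to $V_x$ and $V_x^\dagger$ together with a $\bigoh(k)$-gate reflection about $\ket{0^k}$. By Jordan's lemma the Hilbert space decomposes into $\hat R$-invariant subspaces of dimension at most two; on each, $\hat R$ acts as a rotation by $\pm 2\theta_j$ with $\cos^2\theta_j$ an eigenvalue of $\Pi_{in}\Pi_{out}\Pi_{in}$, and any $\ket{w}\ket{0^k}$ (lying in the image of $\Pi_{in}$) decomposes across these subspaces with the relevant angles being precisely the acceptance probabilities. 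Feeding the single, unmodified witness $\ket{w}$ together with $\ket{0^k}$ into phase estimation on $\hat R$ to $\bigoh(k)$ bits of precision, and accepting iff the recovered value of $\cos^2\theta$ exceeds $(c+s)/2$, yields a new QMA protocol with the same witness size, $\bigoh(k)$ total space, $2^{\bigoh(k)}$ gates, completeness $\ge 1-2^{-\ell(n)}$ and soundness $\le 2^{-\ell(n)}$ for any prescribed $\ell(n) = \bigoh(k)$ --- and, crucially, no intermediate measurements. (Resolving the rotation angle to $2^{-\bigoh(k)}$ suffices to separate $c$ from $s$ because $c-s \ge 2^{-\bigoh(k)}$ and $\cos^2$ has bounded derivative.)

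For Step (2), I would apply the Feynman--Kitaev history-state construction to the amplified verifier, using a \emph{binary} clock register of $\bigoh(\log T) = \bigoh(k)$ qubits so that the whole register stays $\bigoh(k)$ qubits: $H_x = H_{in} + H_{prop} + H_{out}$, with $H_{in}$ a diagonal term penalizing nonzero ancilla at clock value $0$, $H_{out} = \ketbra{0}{0}_{out}\otimes\ketbra{T}{T}_{\mathrm{clock}}$ penalizing the reject outcome at clock value $T$, and $H_{prop} = \sum_t \tfrac12(\ketbra{t-1}{t-1}+\ketbra{t}{t})\otimes I - \tfrac12\ketbra{t}{t-1}\otimes U_t - \tfrac12\ketbra{t-1}{t}\otimes U_t^\dagger$. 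Since each $U_t$ is a single $\bigoh(1)$-qubit gate, each row of $H_x$ has $\bigoh(1)$ nonzero entries whose positions and values are computable in $\poly(n)$ time and $\bigoh(k)$ space from the verifier's uniformity machine, so after dividing by a $\poly(n)$ factor to ensure $\|H_x\|_{max} = \bigoh(1)$, $H_x$ is an efficient encoding of a PSD matrix. Standard analysis shows the history state of the optimal witness has energy $\le (1-c')/(T+1)$, while the projection lemma gives $\lambda_{\min}(H_x) \ge \Omega((1-s')/T^3)$ in the NO case; choosing $\ell(n) = 3\log_2 T + \bigoh(1) = \bigoh(k)$ in Step (1) forces the YES threshold $a \le 2^{-\ell(n)}$ to sit far below the NO threshold $b = \Omega(1/T^3)$, so $b-a \ge 2^{-\bigoh(k)}$ even after rescaling. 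Hence $L$ reduces, in $\poly(n)$ time and $\bigoh(k)$ space, to $\spechamiltonian{\bigoh(k)}$, and Step (3) closes the argument via Theorem~\ref{thm: spechamiltonian}.

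The main obstacle is Step (1): one must amplify the QMA proof system to exponentially small error while simultaneously (i) not duplicating the witness --- ruling out naive sequential repetition and forcing a Marriott--Watrous-style in-place argument --- and (ii) removing all intermediate measurements, which is possible only because $V_x$ is unitary (so $V_x^\dagger$, hence $\hat R$, is available) and which the Marriott--Watrous procedure does not by itself achieve; moreover the phase-estimation counter, witness, and ancilla must all be fit into $\bigoh(k)$ qubits. A secondary but genuine point is the gap bookkeeping between Steps (1) and (2): one must verify that the exponentially small amplified error dominates the $1/\poly(T)$ loss inflicted by the clock construction, which dictates the precise choice $\ell(n) = \Theta(\log T)$, and that the rescaled clock Hamiltonian legitimately meets the efficient-encoding and $\bigoh(1)$ max-norm requirements of $\spechamiltonian{\bigoh(k)}$. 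As an alternative to Steps (2)--(3) one could instead replace the witness register by the maximally mixed state (prepared by entangling it with a fresh ancilla that is untouched until the final measurement) and then apply the analogous space-efficient unitary amplification to the resulting witness-free computation; this bypasses the clock construction but leans on the same unitary-amplification toolkit.
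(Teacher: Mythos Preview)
Your forward inclusion is \emph{not} free. By Definition~\ref{def: qspace}, a circuit deciding a language in $\unitaryBQSPACE{\bigoh(k)}$ may use up to $T = 2^{\bigoh(k)}$ gates, whereas the right-hand class $\bddQMA{\poly}{\bigoh(k)}{\bigoh(k)}{c}{s}$ restricts the verifier to $\poly(n)$ gates. For $k(n) = \omega(\log n)$ these bounds differ, so dropping the witness register does not yield a polynomial-time verifier. The paper establishes this direction by applying the clock construction to the witness-free $\unitaryBQSPACE{\bigoh(k)}$ circuit (Lemma~\ref{lem: specham-hardness}), obtaining a $\spechamiltonian{\bigoh(k)}$ instance that then sits in the bounded-QMA class via the one-bit phase-estimation protocol of Lemma~\ref{lem:qma protocol}. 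The witness on the QMA side is precisely the history state of the $2^{\bigoh(k)}$-step computation, and that is what compresses the time bound to $\poly(n)$. You have all the needed ingredients in Steps~(1)--(2); you just applied them only on the other side of the equality.

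For the reverse inclusion, your primary route is circular as written: Step~(3) invokes the containment half of Theorem~\ref{thm: spechamiltonian}, but in the paper that containment is obtained via Lemma~\ref{lem: pspace upper bound}, which \emph{is} the inclusion you are proving. Your ``alternative''---amplify in place, replace the witness by half of $m$ EPR pairs, then amplify the resulting witness-free computation---is correct and is exactly the paper's argument (Corollary~\ref{obvious1}, Theorem~\ref{thm:pqpspace simulation}, Corollary~\ref{obvious2}). So the alternative should be your main line; Steps~(2)--(3) are a detour that lands back where it started.
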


Our proof will consist of three steps. Lemma \ref{lem:qma protocol} will show that $\spechamiltonian{k(n)}$ is in the generalized $\preciseQMA$ class defined in Theorem \ref{thm: equivalence}. Lemma \ref{lem: pspace upper bound} will show that this generalized $\preciseQMA$ class is contained in $\unitaryBQSPACE{k(n)}$. Finally, Lemma \ref{lem: specham-hardness} will show that $\unitaryBQSPACE{k(n)}$-hardness of $\spechamiltonian{k(n)}$.

\begin{lemma} \label{lem:qma protocol}
$\spechamiltonian{k(n)}$ is contained in $\bddQMA{\poly}{\bigoh(k(n))}{\bigoh(k(n))}{c}{s}$ for some $c,s$ such that $c - s > 2^{-\mathcal{O}(k(n))}$.
\end{lemma}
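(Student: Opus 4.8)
The plan is to give a one-query ``energy test'' in which Merlin sends the purported minimum-eigenvalue eigenvector of $H$ as an $\bigoh(k)$-qubit witness $\ket{\psi}$, and Arthur runs a Hadamard test for the unitary $e^{-iH'}$, where $H'$ is a rescaling of $H$ with spectrum in $[0,1]$. The first step is preprocessing: using the fact (noted after Theorem~\ref{thm:ham_sim}) that $\|H\| \le \poly(n)\,\|H\|_{max}$ together with the promise $\|H\|_{max} = \bigoh(1)$, compute in $\bigoh(\log n) \subseteq \bigoh(k)$ space a number $C = \poly(n)$ with $0 \preceq H \preceq C\cdot I$, and put $H' := H/C$. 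Then $0 \preceq H' \preceq I$, and the promise becomes $\lambda_{min}(H') \le a' := a/C$ or $\lambda_{min}(H') \ge b' := b/C$ with gap $b' - a' = (b-a)/C$, which is still $2^{-\bigoh(k)}$ since $\poly(n) = 2^{\bigoh(\log n)} = 2^{\bigoh(k)}$.

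Next I would describe the verifier $V_x$. Given the witness $\ket\psi$ on $k$ qubits, adjoin one fresh qubit initialized to $\tfrac1{\sqrt2}(\ket0+\ket1)$, apply controlled-$e^{-iH'}$ (controlled on the fresh qubit), implemented by turning the Hamiltonian-simulation circuit of Theorem~\ref{thm:ham_sim} for $e^{-iHt}$ with $t = 1/C \le 1$ into a controlled circuit gate-by-gate, then apply a Hadamard to the control qubit and measure it, accepting iff the outcome is $0$ (after relabelling with an $X$ this is the single-qubit $\ket1\bra1_{out}$ measurement of the definition). Taking the simulation error to be $\epsilon = 2^{-\Theta(k)}$, Theorem~\ref{thm:ham_sim} produces a circuit with $\poly(n,k,\bigoh(1),1,\log(1/\epsilon)) = \poly(n)$ gates and $\bigoh(k + \log(1/\epsilon)) = \bigoh(k)$ space; the extra control qubit and the $k$-qubit witness add only $\bigoh(k)$ further qubits, so the overall space and witness size are $\bigoh(k)$.

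For the analysis, expand $\ket\psi = \sum_j a_j \ket{v_j}$ in the eigenbasis of $H'$, with eigenvalues $\lambda'_j \in [0,1]$. The standard Hadamard-test identity gives that, up to the simulation error $\epsilon$, the acceptance probability equals $\tfrac12\bigl(1 + \sum_j |a_j|^2 \cos\lambda'_j\bigr)$ (and by linearity the same holds for a mixed witness $\rho$, with $|a_j|^2$ replaced by $\bra{v_j}\rho\ket{v_j}$). Since $\cos$ is decreasing on $[0,1]$: in a YES instance the honest eigenvector witness with eigenvalue $\le a'$ is accepted with probability at least $c := \tfrac12(1+\cos a') - \epsilon$; in a NO instance every $\lambda'_j \ge b'$, so $\sum_j|a_j|^2\cos\lambda'_j \le \cos b'$ for every witness, and the acceptance probability is at most $s := \tfrac12(1+\cos b') + \epsilon$. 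Using $\cos a' - \cos b' = 2\sin\tfrac{a'+b'}2\sin\tfrac{b'-a'}2 \ge 2\sin^2\tfrac{b'-a'}2 \ge \tfrac{2}{\pi^2}(b'-a')^2$ (valid for $0 \le a' < b' \le 1$), we get $c - s \ge \tfrac1{\pi^2}(b'-a')^2 - 2\epsilon$, which is $2^{-\bigoh(k)}$ once $\epsilon \le \tfrac1{4\pi^2}(b'-a')^2$; note that this only forces $\log(1/\epsilon) = \bigoh(k)$. Finally, $c$ and $s$ can be replaced by nearby dyadic rationals computable within the allotted classical resources (via Reif's algorithms for $\cos$) without changing the $2^{-\bigoh(k)}$ gap.

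The point requiring the most care is keeping all resource bounds simultaneously valid: the simulation error must be $2^{-\Theta(k)}$ to be negligible against the final gap, which is affordable only because Theorem~\ref{thm:ham_sim} has space scaling $\bigoh(k+\log(t/\epsilon))$ rather than $\bigoh(k)\cdot\log(1/\epsilon)$, and because the problem only demands a gap of size $2^{-\bigoh(k)}$. The remaining ingredients — the sparse-matrix rescaling bound, making a space-bounded circuit controlled, and the precision bookkeeping for $c$ and $s$ — are routine.
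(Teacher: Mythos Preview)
Your proposal is correct and essentially identical to the paper's proof: both have Merlin send the purported ground state, Arthur runs a one-bit Hadamard test on $e^{-iHt}$ for a rescaling time $t$ (you take $t=1/C$ so the spectrum lands in $[0,1]$, the paper takes $t=\pi/(\poly(n)\|H\|_{max})$ so it lands in $[0,\pi]$), and both bound the gap via the same $\cos a'-\cos b'=2\sin\frac{a'+b'}{2}\sin\frac{b'-a'}{2}$ identity together with the $2^{-\Theta(k)}$ simulation error from Theorem~\ref{thm:ham_sim}. Your resource accounting and the care you take with the controlled simulation and the computability of $c,s$ are exactly the right points to check.
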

\begin{proof}
We are given the size-$n$ efficient encoding of a $2^{k(n)} \times 2^{k(n)}$ PSD matrix $H$, and it is promised that the smallest eigenvalue $\lambda_{min}$ of $H$ is either at most $a$ or at least $b$. Merlin would like to convince us that $\lambda_{min} \le a$; he will send us a purported $k$-qubit eigenstate $\ket{\psi}$ of $H$ with eigenvalue $\lambda_{min}$. Choose $t = \pi / (\poly(n)\|H\|_{max}) \le \pi / \|H\|$; then all eigenvalues of $Ht$ lie in the range $[0,\pi]$, and the output of phase estimation on $\exp(-iHt)$ will be unambiguous. We perform, on $\psi$, phase estimation of $\exp(-iHt)$ with one bit of precision:
\begin{align} \label{circ: poor man phase estimation}
&\Qcircuit @C=1em @R=.7em {
\lstick{\ket{0}}& \gate{H} & \ctrl{1} & \gate{H} & \rstick{\frac{1+e^{-i\lambda t}}{2}\ket{0} +\frac{1-e^{-i\lambda t}}{2}\ket{1} } \qw \\
\lstick{\ket{\psi}}& \qw & \gate{e^{-iHt}}  & \qw & \rstick{\ket{\psi}} \qw
}
\end{align}
Here the $H$ gates on the first qubit are Hadamard gates (and have nothing to do with the matrix $H$). Theorem \ref{thm:ham_sim} gives an implementation of $\exp(-iHt)$ up to error $\epsilon = 2^{-\Theta(k(n))}$ using $\poly(n)$ operations and $\bigoh(k(n))$ space.

In Circuit (\ref{circ: poor man phase estimation}) we've assumed $\ket{\psi}$ is an eigenstate of $H$ with eigenvalue $\lambda$. If we measure the control qubit at the end, the probability we obtain 0 is $(1+\cos(\lambda t))/2$. Therefore if $\psi$ is a eigenstate with eigenvalue at most $a$, we can verify this with probability at least $c=(1+\cos(at))/2 - \epsilon$, where $\epsilon$ is the error in the implementation of $\exp(-iHt)$. Otherwise if $\lambda_{min} \ge b$, no state $\psi$ will be accepted with probability more than $s=(1+\cos(bt))/2 + \epsilon$. The separation between $c$ and $s$ is at least 
\begin{align}
(\cos(at)-\cos(bt)) - 2\epsilon &= 2 \sin \left(\frac{(a+b)t}{2}\right) \sin \left(\frac{(b-a)t}{2}\right) - 2\epsilon \ge 2^{-\mathcal{O}(k)}
\end{align}
since $\sin x = \Omega(x)$ for $x \in [0,1]$, $(a+b)t \ge (b-a)t = 2^{-\mathcal{O}(k(n))}$, as long as we choose $\epsilon = 2^{-\Theta(k(n))}$ to be sufficiently small enough. This therefore gives a $\bddQMA{\poly}{\Theta(k(n))}{\Theta(k)}{c}{s}$ protocol for $c - s = 2^{-\mathcal{O}(k(n))}$, as desired.
\end{proof}

\begin{lemma} \label{lem: pspace upper bound}
$\bigcup_{c - s \ge 2^{-\bigoh(k)}}\bddQMA{\poly}{\bigoh(k)}{\bigoh(k)}{c}{s} \subseteq \unitaryBQSPACE{k(n)}.$
\end{lemma}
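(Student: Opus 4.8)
The goal is to show that a QMA protocol with only exponentially-small (in $k$) completeness-soundness gap, using $\bigoh(k)$ witness qubits and $\bigoh(k)$ workspace and $\poly(n)$ time, can be simulated in unitary quantum space $\bigoh(k)$. The fundamental difficulty is that the witness $\ket{\psi}$ is unknown and adversarial, so we cannot simply ``run the verifier'': we must somehow optimize over all $m = \bigoh(k)$-qubit witnesses inside quantum space $\bigoh(k)$, while the natural object to optimize — the maximum acceptance probability — equals the top eigenvalue of the operator $Q := (\mathbb{I}\otimes\bra{0^k}) V_x^\dagger \ketbra{1}{1}_{out} V_x (\mathbb{I}\otimes\ket{0^k})$, an $m$-qubit ($2^{\bigoh(k)}$-dimensional) positive-semidefinite operator. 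So the plan is to reduce the problem to estimating the largest eigenvalue of $Q$ to precision $2^{-\bigoh(k)}$, and then do this eigenvalue estimation unitarily in space $\bigoh(k)$.

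First I would argue that $Q$ has an efficient encoding in the sense of Definition \ref{def: efficient encoding}: its entries $\bra{\phi}Q\ket{\phi'}$ are amplitudes of a $\poly(n)$-size, $\bigoh(k)$-space quantum circuit, and by standard Feynman-path / transfer-matrix arguments these amplitudes can be written as a $2^{\bigoh(k)}$-term sum computable in $\bigoh(k)$ classical space (each configuration of the $\poly(n)$-time, $\bigoh(k)$-space generating Turing machine and each intermediate basis state is describable with $\bigoh(k)$ bits). Actually, since we work with a fixed gateset and the circuit is explicitly $V_x = U_{x,T}\cdots U_{x,1}$ with $T = \poly(n)$ gates each acting on $\bigoh(1)$ qubits, the matrix $V_x^\dagger \ketbra{1}{1}_{out} V_x$ is itself efficiently encoded, and projecting/padding the ancilla register to $\ket{0^k}$ preserves this. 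The upshot: $\mathbb{I}-Q$ (or a rescaled version) is an efficiently encoded PSD matrix of dimension $2^{\bigoh(k)}$, and the question ``is $\lambda_{\max}(Q)\ge c$ or $\le s$'' is exactly an instance of $\spechamiltonian{\bigoh(k)}$ with gap $c-s \ge 2^{-\bigoh(k)}$ — except that we haven't yet placed $\spechamiltonian$ in $\unitaryBQSPACE$; that direction (Theorem \ref{thm: spechamiltonian}) depends on this very lemma. So instead I would directly build the unitary space-bounded algorithm.

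The core step is the unitary space-efficient eigenvalue estimation. Here I would lean on exactly the machinery already developed for Theorem \ref{thm: matrix inversion alg}: Hamiltonian simulation (Theorem \ref{thm:ham_sim}) lets us implement $\exp(-iQt)$ to precision $2^{-\bigoh(k)}$ in $\poly(n)$ time and $\bigoh(k)$ space since $\|Q\|_{max} \le 1$ and $\|Q\| \le \poly(n)$; then phase estimation on $\exp(-iQt)$ estimates an eigenvalue of $Q$ to precision $2^{-\bigoh(k)}$. The remaining issue is that we want the \emph{largest} eigenvalue, not a random one, and we have no good starting state. The standard fix, which I expect to use, is to run phase estimation on the maximally mixed input state $\mathbb{I}/2^m$ (equivalently, feed in half of a maximally entangled state on $2m$ qubits, which costs only another $m = \bigoh(k)$ qubits): this produces, with probability at least $2^{-m} = 2^{-\bigoh(k)}$, a register holding (an estimate of) $\lambda_{\max}$. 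An exponentially small success probability is exactly what $\unitaryBQSPACE$ with exponentially small gap tolerates — indeed it is the reason the class is defined with computable $c,s$ rather than $2/3,1/3$ — so we simply set the accept event to be ``phase estimation outputs a value $\ge (a+b)/2$'' (using the $\spechamiltonian$-style thresholds $a,b$ pulled back from $c,s$). In the YES case this event has probability at least $2^{-m}\cdot(\text{const})$; in the NO case it has probability $0$ (up to the $2^{-\bigoh(k)}$ simulation/phase-estimation error, which we make small enough), so the completeness-soundness gap of the resulting unitary computation is $2^{-\bigoh(k)}$, matching the allowed bound, and all measurements are deferred to the end since phase estimation and Hamiltonian simulation are unitary.

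**Main obstacle.** The delicate point is controlling all the $2^{-\bigoh(k)}$ error terms simultaneously: the Hamiltonian-simulation error, the phase-estimation discretization (we need enough bits to resolve the gap $b-a = 2^{-\bigoh(k)}$, hence $\bigoh(k)$ bits of precision, still $\bigoh(k)$ space), and the fact that the ``residual'' eigenvalue landed on by phase estimation on a mixed input might be slightly off — we need the probability of landing within $(b-a)/4$ of a true eigenvalue $\ge b$ to stay $\ge 2^{-\bigoh(k)}$ in the YES case, while no eigenvalue exceeds $a$ in the NO case. Making these estimates rigorous, and verifying that the thresholds $c,s$ of the resulting $\QSPACE$ computation are classically computable in $\bigoh(k)$ space (they are: $a,b$ come from $c,s$ of the QMA protocol via the $\cos$/$\sin$ transformation as in Lemma \ref{lem:qma protocol}, and $\cos,\arccos$ are $\bigoh(\log)$-space computable by Reif's algorithms), is the bulk of the work; the architecture of the algorithm itself is a direct descendant of the matrix-inversion algorithm and of Marriott–Watrous-style amplitude estimation, but without intermediate measurements.
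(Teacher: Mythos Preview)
Your overall architecture---reduce to the top eigenvalue of $Q=(\mathbb{I}\otimes\bra{0^k})V_x^\dagger\ketbra{1}{1}_{out}V_x(\mathbb{I}\otimes\ket{0^k})$, feed in the maximally mixed state, do phase estimation, then amplify back---is the right shape, and indeed mirrors the paper's strategy. But the specific unitary you propose to phase-estimate is where the argument breaks. You want to implement $\exp(-iQt)$ via Theorem~\ref{thm:ham_sim}, and for that you need $Q$ to have an efficient encoding in the sense of Definition~\ref{def: efficient encoding}. It does not: $Q$ is a generic $2^m\times 2^m$ PSD matrix with (typically) $2^m$ nonzero entries per row, not $\poly(n)$; and even a single entry $\bra{i}Q\ket{j}$ is a transition amplitude of a $\poly(n)$-gate circuit on $\bigoh(k)$ qubits, which is a sum over exponentially many Feynman paths and is not computable in $\poly(n)$ \emph{time} (classical $\bigoh(k)$ space is not the bottleneck---the time bound in the encoding definition is). So sparse Hamiltonian simulation simply does not apply to $Q$, and the step ``Theorem~\ref{thm:ham_sim} lets us implement $\exp(-iQt)$'' fails.

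The paper sidesteps this by never trying to simulate $Q$ as a Hamiltonian. Instead it does phase estimation on the Marriott--Watrous / Grover rotation $R_1R_0=(2\Pi_1-I)(2\Pi_0-I)$, where $\Pi_0=I_m\otimes\ketbra{0^k}{0^k}$ and $\Pi_1=V_x^\dagger(\ketbra{1}{1}_{out}\otimes I)V_x$. The point is that $R_1R_0$ is implementable \emph{directly as a circuit} from $V_x$ and $V_x^\dagger$---no sparsity or encoding needed---and its eigenphases on the relevant two-dimensional invariant subspaces are $\pm 2\arccos\sqrt{\lambda}$ for the eigenvalues $\lambda$ of $Q$. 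With this change your plan goes through: the paper first amplifies the gap to $(1-2^{-(m+2)},\,2^{-(m+2)})$ by this phase-estimation trick (Lemma~\ref{lem: gap amp}), then replaces the witness by the maximally mixed state exactly as you suggest (getting gap $2^{-m}\cdot\Theta(1)$), and finally amplifies once more back to constant gap (Corollary~\ref{obvious2}). Note this last amplification is not optional: $\unitaryBQSPACE{k}$ is by definition $\unitaryQSPACE{k}{2/3}{1/3}$, so landing with a $2^{-\bigoh(k)}$ gap is not yet the conclusion.
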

\begin{proof}[Proof sketch]
We only give a high level overview of the proof here; for the complete proof see Appendix \ref{app: pspace upper bound}. The core of the proof is to develop and use new {\emph{space-efficient}} $\QMA$ error reduction procedures. Our procedures are based on the ``in-place'' $\QMA$ amplification procedure of Marriott and Watrous \cite{mw05}, which allows the error in a $\QMA$ proof system to be reduced without requiring additional copies of the witness state. This was improved by Nagaj, Wocjan, and Zhang \cite{nwz11}, whose phase-estimation based procedure reduces the error to $2^{-r}$ using only $\bigoh\left(r\log{\frac{1}{c-s}}\right)$ additional qubits and $\bigoh(r/(c-s))$ repetitions of the circuit and its inverse. We derive a procedure (Lemma \ref{lem: gap amp}) that gives the same error bounds while using only $\bigoh\left(r+\log{\frac{1}{c-s}}\right)$ additional qubits, but still using only $\bigoh(r/(c-s))$ repetitions of the circuit; the improved space bound will be required for our purposes\footnote{In recent work we improved this result to achieve such amplification using only $\log{\frac{r}{c-s}}$ additional space \cite{fklmn16}.}.

Thus we can amplify the gap in our $\QMA$ protocols to still use $\bigoh(k)$ space, but with completeness $1-2^{-\bigoh(k)}$ and soundness $2^{-\bigoh(k)}$. We can now replace the witness by the completely mixed state (or alternatively half of many EPR pairs), which gives us a computation with \emph{no} witness such that the resulting completeness and soundness are both exponentially small, but are still separated by $2^{-\bigoh(k)}$. Finally, we can once again apply our space-efficient amplification procedure to this witness-free protocol, obtaining a computation in $\unitaryBQSPACE{\bigoh(k)}$.
\end{proof}

\begin{lemma}\label{lem: specham-hardness}
$\spechamiltonian{\bigoh(k(n))}$ is $\unitaryBQSPACE{k(n)}$-hard under classical poly-time $\bigoh(k(n))$-space reductions.
\end{lemma}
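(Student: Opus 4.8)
The plan is to carry out Kitaev's clock/history-state construction \cite{ksv02}, adapted so that everything fits in $\bigoh(k)$ qubits. Given an instance $x$ of a problem in $\unitaryBQSPACE{\bigoh(k(n))}$, Definition \ref{def: qspace} hands us a $\poly(n)$-time $\bigoh(k)$-space uniform unitary circuit $Q_x = U_{x,T}\cdots U_{x,1}$ acting on $\bigoh(k)$ qubits with $T = 2^{\bigoh(k)}$ gates (from a fixed $\bigoh(1)$-qubit gateset), whose acceptance probability $p_{acc} = \bra{0^k}Q_x^\dagger\ketbra{1}{1}_{out}Q_x\ket{0^k}$ is $\ge 2/3$ if $x\in L_{yes}$ and $\le 1/3$ if $x\in L_{no}$. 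The first step is to replace $Q_x$ by a circuit with the same behaviour but with $p_{acc}\ge 1-2^{-r}$ (yes case) or $p_{acc}\le 2^{-r}$ (no case), for a parameter $r = \Theta(k)$ fixed below; this uses space-efficient, measurement-free amplification, discussed at the end. By padding with identity gates we may also assume $T+1$ is a power of two, so a clock register of $\bigoh(k)$ qubits has exactly the legal states $\ket{0},\dots,\ket{T}$; padding changes $p_{acc}$ not at all and $T = 2^{\bigoh(k)}$ only by a constant factor.

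Adjoining this clock register to the $\bigoh(k)$-qubit computation register, I would form the history Hamiltonian $H = H_{in} + H_{prop} + H_{out}$, where $H_{in} = (I - \ketbra{0^k}{0^k})\otimes\ketbra{0}{0}_{clock}$, $H_{out} = \ketbra{0}{0}_{out}\otimes\ketbra{T}{T}_{clock}$, and $H_{prop} = \tfrac12\sum_{t=1}^{T}\big(I\otimes\ketbra{t}{t} + I\otimes\ketbra{t-1}{t-1} - U_{x,t}\otimes\ketbra{t}{t-1} - U_{x,t}^\dagger\otimes\ketbra{t-1}{t}\big)$. Each summand of $H_{prop}$ factors as $\tfrac12 A^\dagger A$, so $H\succeq 0$. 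Every row of $H$ has diagonal contribution at most $1$ from each of $H_{in},H_{out}$ and at most $2\cdot\tfrac12$ from $H_{prop}$, and off-diagonal entries of magnitude $\le\tfrac12$, so $\|H\|_{max}\le 3 = \bigoh(1)$. Crucially, $H$ has an efficient encoding in the sense of Definition \ref{def: efficient encoding}: a row is indexed by a pair $(y,t)$ of a computation-register string and a clock value, its $\bigoh(1)$ nonzero entries couple $(y,t)$ only to clock values $t-1,t,t+1$, and they are read off from the gates $U_{x,t},U_{x,t+1}$ (one call to the circuit-uniformity machine) together with cheap checks on the $\ket{0^k}$ and $out$ registers, all in $\poly(n)$ time and $\bigoh(k)$ space. (With a binary clock $H$ is no longer local, but this is irrelevant: $\spechamiltonian{\cdot}$ only asks for an efficient encoding.) The dimension of $H$ is $2^{\bigoh(k)}$, and the whole reduction — amplify $Q_x$, then emit this encoding together with the numbers $a,b$ below — runs in $\poly(n)$ time and $\bigoh(k)$ space.

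It remains to choose $a,b$ and verify the promise. Let $\ket{\eta} = \tfrac{1}{\sqrt{T+1}}\sum_{t=0}^{T}(U_{x,t}\cdots U_{x,1}\ket{0^k})\otimes\ket{t}$ be the history state of the amplified circuit. Conjugating $H_{prop}$ by $W = \sum_t (U_{x,t}\cdots U_{x,1})\otimes\ketbra{t}{t}$ shows $(H_{in}+H_{prop})\ket{\eta}=0$, that $\ket{\eta}$ spans the \emph{one-dimensional} kernel of $H_{in}+H_{prop}$ (the input is fully fixed), and that $H_{in}+H_{prop}$ has spectral gap $\Delta = \Omega(1/T^2)$ above $0$. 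Since $\langle\eta|H_{out}|\eta\rangle = \tfrac{1-p_{acc}}{T+1}$, in the yes case $\lambda_{min}(H)\le \langle\eta|H|\eta\rangle \le \tfrac{2^{-r}}{T+1} =: a$. In the no case, $H = (H_{in}+H_{prop}) + H_{out}$ is a sum of PSD operators, the first with one-dimensional kernel $\mathbb{C}\ket{\eta}$ and gap $\Delta$, the second a projector on which $\ket{\eta}$ has weight $\langle\eta|H_{out}|\eta\rangle \ge \tfrac{1-2^{-r}}{T+1}$, so Kitaev's geometric lemma \cite{ksv02} gives $\lambda_{min}(H)\ge c_0\,\Delta\cdot\tfrac{1-2^{-r}}{T+1} \ge \tfrac{c_1}{T^3} =: b$ for absolute constants $c_0,c_1>0$ (any inverse-polynomial bound would do). Choosing $r = \Theta(k)$ large enough that $2^{-r}\le c_1/(2T^2)$ — possible since $\log_2 T = \bigoh(k)$ — gives $a\le \tfrac{c_1}{2T^3}$, hence $b - a \ge \tfrac{c_1}{2T^3} = 2^{-\bigoh(k)}$, and $a,b$ are computable in $\bigoh(k)$ space. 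Thus the $\spechamiltonian{\bigoh(k)}$ instance $(H,a,b)$ outputs $1$ exactly when $x\in L_{yes}$.

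The main obstacle is the amplification in the first step. Unlike in the time-bounded setting, we cannot amplify a unitary quantum-space computation by running independent copies (the space blows up by a factor of $r$) or by measure-and-repeat (this reintroduces intermediate measurements, which would also wreck the clock construction, since its analysis needs $Q_x$ to be a genuine unitary one can conjugate by). Instead I would invoke the space-efficient, measurement-free amplification developed in this paper — Lemma \ref{lem: gap amp} applied with an empty witness ($m=0$), exactly as in the proof of Lemma \ref{lem: pspace upper bound} — which amplifies to error $2^{-r}$ using only $\bigoh\!\big(r + \log\tfrac{1}{c-s}\big) = \bigoh(k)$ additional qubits and $2^{\bigoh(k)}$ gates while preserving $\poly(n)$-time $\bigoh(k)$-space uniformity. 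Everything else is bookkeeping: checking that constructing the amplified circuit's description and emitting the efficient encoding of $H$ together with $a,b$ each stay within $\bigoh(k)$ workspace, which follows routinely from the uniformity guarantees.
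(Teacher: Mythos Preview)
Your proposal is correct and follows essentially the same approach as the paper: amplify the circuit space-efficiently via Lemma~\ref{lem: gap amp} (with $m=0$), then apply Kitaev's clock construction with a \emph{binary} clock so the total dimension stays $2^{\bigoh(k)}$, reading off the $2^{-\bigoh(k)}$ promise gap from the standard completeness/soundness bounds. Your write-up is in fact more careful than the paper's terse proof --- you explicitly verify the efficient-encoding and $\|H\|_{max}$ conditions and spell out the geometric-lemma step --- and your $H_{in},H_{out}$ are the standard ones (the paper's printed formulas appear to have sign slips).
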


\begin{proof}[Proof sketch]
Again we only give an overview; see Appendix \ref{sec: specham-hardness} for the full proof. Recall that our uniformity condition on $\spechamiltonian{k(n)}$ implies that every language in $\spechamiltonian{k(n)}$ can be decided by a quantum circuit of size at most $2^{\bigoh(k(n))}$. We first use our space-efficient error reduction procedure to amplify the gap; then we apply a variant of Kitaev's clock construction \cite{ksv02} to construct a Hamiltonian from this amplified circuit. We use a \emph{binary} clock instead of a unary one to save space; since the number of gates is at most $2^{\bigoh(k(n))}$, the clock only needs to be of size $\bigoh(k(n))$, and the total dimension of the system is $2^{\bigoh(k(n))}$ as required. Therefore the Hamiltonian is not local, but it remains sparse (with only a constant number of nonzero terms in each row). Kitaev's analysis then shows that we can obtain a gap inverse polynomial in the circuit size, or inverse exponential in $k(n)$.
\end{proof}

\begin{proof}[Proof of Theorems \ref{thm: spechamiltonian} and \ref{thm: equivalence}]
Immediate from Lemmas \ref{lem:qma protocol}, \ref{lem: pspace upper bound}, and \ref{lem: specham-hardness}.
\end{proof}

Note the polynomial space case in Theorem \ref{thm: equivalence} is Corollary \ref{cor: preciseqma}, that $\preciseQMA=\PSPACE$.

Finally, we end with two results particular to the polynomial space case. First of all, in the equality $\preciseQMA=\PSPACE$, we can actually achieve perfect completeness ($c=1$) for the $\QMA$ proof protocol, assuming the underlying gate set contains the Hadamard and Toffoli gates. Moreover for perfect completeness we do not require that $c-s > 2^{-\poly}$:
\begin{proposition} \label{prop: perfect completeness} Let $\QMA(c,s) = \bddQMA{\poly}{\poly}{\poly}{c}{s}$. Then
\begin{equation}
\PSPACE = \QMA(1,1-2^{-\poly}) = \bigcup_{s < 1}\QMA(1,s), \nonumber
\end{equation}
where we assume that the gateset we use contains the Hadamard and Toffoli gates. In the last term, the union is taken over all functions $s(n)$ such that $s(n) < 1$ for all $n$.
\end{proposition}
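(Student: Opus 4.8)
The plan is to establish two containments: $\PSPACE \subseteq \QMA(1,1-2^{-\poly})$ and $\bigcup_{s<1}\QMA(1,s) \subseteq \PSPACE$, with the middle inclusions $\QMA(1,1-2^{-\poly}) \subseteq \bigcup_{s<1}\QMA(1,s)$ being trivial. The upper bound direction is the easy one: any $\QMA(1,s)$ protocol with $s<1$ is in particular a $\preciseQMA$ protocol once we observe that a polynomial-time quantum verifier acting on $\poly(n)$ qubits either accepts some witness with probability exactly $1$ or accepts every witness with probability bounded away from $1$ by an inverse-exponential amount — this is because the maximum acceptance probability is the top eigenvalue of a Hermitian matrix whose entries are algebraic numbers of bounded bit-complexity, so if it is not $1$ it is at most $1-2^{-\poly}$. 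Hence $\bigcup_{s<1}\QMA(1,s) = \QMA(1,1-2^{-\poly}) \subseteq \preciseQMA = \PSPACE$ by Corollary \ref{cor: preciseqma}.

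For the hard direction, $\PSPACE \subseteq \QMA(1,1-2^{-\poly})$, I would start from the $\preciseQMA$ protocol for $\PSPACE$ produced by Theorem \ref{thm: equivalence} (equivalently Corollary \ref{cor: preciseqma}), which on a YES instance accepts some witness with probability $\ge c$ and on a NO instance accepts every witness with probability $\le c - 2^{-\poly}$, for some explicitly computable $c$. The goal is to transform this into a protocol with $c = 1$ exactly, at the cost of only weakening soundness to $1 - 2^{-\poly}$. The standard tool here is the Nagaj–Wocjan–Zhang / Marriott–Watrous style manipulation, but more directly one uses a version of the ``perfect completeness'' construction for $\QMA$: roughly, the verifier measures, via phase estimation on the Marriott–Watrous operator, an estimate of the acceptance probability of the witness, and accepts iff this estimate is sufficiently high; with a Hadamard/Toffoli gateset one can arrange the phase-estimation circuit and the comparison so that on the optimal witness the verifier accepts with probability exactly $1$. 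Concretely I would (i) amplify the $\preciseQMA$ protocol using the space-efficient amplification of Lemma \ref{lem: gap amp} so that completeness is $\ge 1 - 2^{-r}$ and soundness is $\le 2^{-r}$ for a suitable polynomial $r$; (ii) apply a gadget that converts near-perfect completeness into exact completeness — e.g. the trick of having Merlin additionally send a ``proof of a good measurement outcome'' and using the fact that over the Hadamard+Toffoli gateset all amplitudes are dyadic rationals, so a cleverly designed verification circuit can have its accepting amplitude equal $1$ on the right witness rather than merely close to it; (iii) verify that soundness degrades by at most an additional inverse-exponential term, giving $\QMA(1, 1-2^{-\poly})$.

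The main obstacle is step (ii): producing \emph{exact} completeness rather than $1 - \varepsilon$ completeness. Generic amplification only pushes the completeness error down exponentially but never to zero, and the usual $\NP$-style ``add a few random bits'' tricks do not obviously work in the quantum witness setting. The key leverage is the restriction to the Hadamard+Toffoli gateset (hence dyadic amplitudes) together with the frustration-free / perfect-completeness phenomenon for Kitaev-type Hamiltonians: since the $\PSPACE$-hardness in Appendix \ref{app:perfectcompleteness} already goes through a frustration-free local Hamiltonian, one can run the associated verification circuit (a sequence of projector measurements all of which can be satisfied simultaneously on the history state of an accepting computation) so that the global acceptance probability is exactly $1$. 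So rather than trying to ``fix up'' a generic $\preciseQMA$ protocol, I would route through the frustration-free Hamiltonian characterization: take the $\PSPACE$-complete problem, reduce it (via the binary-clock Kitaev construction of Lemma \ref{lem: specham-hardness}, run in the frustration-free regime) to a Hamiltonian that is frustration-free iff the instance is a YES instance, and have the verifier perform the standard measure-a-random-term protocol, which accepts the history state with probability exactly $1$ in the YES case and with probability at most $1 - \Omega(1/\poly)$ (hence $\le 1-2^{-\poly}$) in the NO case by the promise gap. This yields $\PSPACE \subseteq \QMA(1, 1-2^{-\poly})$, completing the proof.
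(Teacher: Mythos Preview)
Your proof structure is right and your upper bound is fine (and actually more self-contained than the paper's, which simply cites \cite{ikw12} for $\QMA(1,s)\subseteq\PSPACE$). The problem is in the lower bound. The approach you eventually settle on --- reduce a $\PSPACE$ instance to a frustration-free Kitaev Hamiltonian and have Arthur measure a random term --- is not what the paper does, but it \emph{can} be made to work. However, your write-up has two defects. First, it is circular: you appeal to ``the $\PSPACE$-hardness in Appendix~\ref{app:perfectcompleteness}'' going through a frustration-free Hamiltonian, but that appendix \emph{is} the proposition you are proving (and the frustration-free result in Appendix~\ref{app:localhamiltonian} in turn uses this proposition). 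Second, and more substantively, you never say how to obtain a frustration-free Hamiltonian. Lemma~\ref{lem: specham-hardness}, which you invoke, builds the clock Hamiltonian from an error-amplified \emph{quantum} circuit with completeness $1-2^{-r}$; the resulting Hamiltonian is \emph{not} frustration-free in the YES case (the history state has energy $(1-c)/(T+1)>0$), so measure-a-random-term does not give completeness exactly $1$. The missing idea --- which the paper states explicitly --- is that because $\PSPACE$ is classical, one can take $V_x$ to be a \emph{deterministic reversible} circuit with completeness $1$ and soundness $0$; only then is the Kitaev Hamiltonian frustration-free on YES instances, and only then are the terms built solely from Toffoli/$X$ gates and clock projectors, hence exactly measurable over Hadamard+Toffoli. (A minor slip: your soundness estimate ``$1-\Omega(1/\poly)$'' is off, since $T=2^{\poly}$ and the NO-case eigenvalue bound is $\Omega(1/T^3)$ spread over $\Theta(T)$ terms; you only get $1-2^{-\poly}$, which is still what you need.)

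For comparison, the paper takes a different route to the exact-completeness verifier. It also starts from the deterministic classical circuit (so the Kitaev Hamiltonian has entries in $\{0,\pm 1/2,1\}$), but instead of measuring a random term it builds Childs' sparse-Hamiltonian quantum walk $U$ on $H$, whose eigenphases are $\arcsin(\lambda/(Xd))$. The point is that for this particular $H$ the isometry defining $U$ involves only square roots of $0$, $1/2$, $1$ and is therefore exactly implementable with Hadamard and Toffoli; one bit of phase estimation on $U$ then accepts the zero-eigenvector with probability exactly $1$. Your measure-a-random-term route, once the deterministic-circuit observation is supplied and the exact implementability of each projector is checked, is arguably simpler; the quantum-walk route is what the paper actually carries out.
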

The containment $\QMA(1,s) \subseteq \PSPACE$ is known \cite{ikw12}. We prove this proposition in Appendix~\ref{app:perfectcompleteness}.

Our second result concerns the $\QMA$-complete Local Hamiltonian problem. We show that if we allow the promise gap to be exponentially small, then the problem becomes $\PSPACE$-complete. 
\begin{definition}[\preciseklh]\label{def: precise local hamiltonian}
Given as input is a $k$-local Hamiltonian $H=\sum_{j=1}^rH_j$ acting on $n$ qubits, satisfying $r \in \poly(n)$ and $\|H_j\| \le \poly(n)$, and numbers $a < b$ satisfying $b - a > 2^{-\poly(n)}$. It is promised that the smallest eigenvalue of $H$ is either at most $a$ or at least $b$. Output 1 if the smallest eigenvalue of $H$ is at most $a$, and output 0 otherwise.
\end{definition}
\begin{theorem} \label{thm: precise local hamiltonian}
For any $3 \le k \le \mathcal{O}(\log(n))$, \preciseklh \ is $\preciseQMA$-complete, and hence $\PSPACE$-complete.
\end{theorem}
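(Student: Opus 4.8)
I would prove the two inclusions separately, with $\preciseQMA$-hardness being the substantive direction. Throughout write $\delta$ for the promise gap (the completeness--soundness gap $c-s$ of a $\preciseQMA$ machine, or the gap $b-a$ of a \preciseklh\ instance); by hypothesis $\delta>2^{-\poly(n)}$.

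\textbf{Membership in $\preciseQMA$.} Given $H=\sum_{j=1}^r H_j$ with $r\le\poly(n)$, $\|H_j\|\le\poly(n)$, and $k=\mathcal{O}(\log n)$, the full matrix $H$ has at most $r2^{k}=\poly(n)$ nonzero entries per row and is thus efficiently encoded in the sense of Definition \ref{def: efficient encoding}, with $\|H\|\le\poly(n)$. Merlin sends a purported minimum-eigenvalue eigenstate $\ket{\psi}$ and Arthur performs one-bit phase estimation of $\exp(-iHt)$ for $t=\pi/(\poly(n)\|H\|_{max})$, implemented to error $2^{-\Theta(\poly(n))}$ by Theorem \ref{thm:ham_sim}; the $\sin$-product estimate from the proof of Lemma \ref{lem:qma protocol} shows the two acceptance probabilities differ by at least $2^{-\poly(n)}$, which is exactly what $\preciseQMA$ allows. (Equivalently, after shifting $H$ to be PSD, rescaling so $\|H\|_{max}\le 1$, and padding the dimension, \preciseklh\ becomes a special case of $\spechamiltonian{\poly(n)}$, hence lies in $\unitaryBQSPACE{\poly}=\PSPACE=\preciseQMA$ by Theorems \ref{thm: spechamiltonian}, \ref{thm:pqpspace} and Corollary \ref{cor: preciseqma}.)

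\textbf{$\preciseQMA$-hardness.} Let $L\in\preciseQMA$, witnessed by a $\poly(n)$-time uniform family of verifiers $V_x=U_{x,T}\cdots U_{x,1}$ on $k(|x|)+m(|x|)=\poly(|x|)$ qubits with $T=\poly(|x|)$ gates, completeness $c$, soundness $s$, $\delta=c-s>2^{-\poly}$. Apply Kitaev's clock construction \cite{ksv02} to $V_x$, using a \emph{unary} clock on $T+1$ qubits (affordable since $T=\poly$): this yields a $5$-local Hamiltonian $H_x=J\,(H_{\mathrm{in}}+H_{\mathrm{prop}}+H_{\mathrm{clock}})+H_{\mathrm{out}}$ on $\poly(n)$ qubits, where $H_{\mathrm{in}},H_{\mathrm{prop}},H_{\mathrm{clock}}\succeq 0$ have as common zero-eigenspace exactly the span of the history states $\ket{\eta_\psi}=\tfrac{1}{\sqrt{T+1}}\sum_{t}\ket{t}\otimes U_{x,t}\cdots U_{x,1}\ket{\psi,0^{k}}$, and $\|H_{\mathrm{out}}\|\le 1$. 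The standard analysis gives: if $x\in L_{yes}$ then $\lambda_{\min}(H_x)\le a:=(1-c)/(T+1)$ (evaluate on the optimal history state), and if $x\in L_{no}$ then, by the projection lemma together with the $\Omega(J/T^2)$ spectral gap of $J\,(H_{\mathrm{in}}+H_{\mathrm{prop}}+H_{\mathrm{clock}})$ above its ground space, $\lambda_{\min}(H_x)\ge (1-s)/(T+1)-\mathcal{O}(T^2/J)=:b$. Thus $b-a=\delta/(T+1)-\mathcal{O}(T^2/J)$, which is favorable only when $J\gtrsim T^3/\delta$; since $\delta=2^{-\poly}$ this forces $J=2^{\poly(n)}$, exponentially large. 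Taking $J$ to be such a $\poly(n)$-bit number and rescaling $H_x\mapsto H_x/J$ restores every local term to norm $\le\poly(n)$ (the output penalty becomes a legitimate exponentially small coefficient), while the gap becomes $(b-a)/J=\Omega(\delta^2/\poly(T))=2^{-\poly(n)}$; so $H_x/J$ is a genuine \preciseklh\ instance for $k=5$. To cover every $3\le k\le\mathcal{O}(\log n)$ it suffices to reduce the locality to $3$ (a $3$-local Hamiltonian is trivially $k$-local for all $k\ge 3$): apply the perturbation-gadget reductions from $5$-local to $3$-local (Kempe--Kitaev--Regev, Oliveira--Terhal), or Kempe and Regev's direct $3$-local construction, again choosing the gadget strength as large as $2^{\poly(n)}$ so the gadget-induced error in the low-energy spectrum is $\ll 2^{-\poly}$, and rescaling once more so all norms stay $\le\poly(n)$; the gap remains $2^{-\poly(n)}$. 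Every step runs in $\poly(n)$ time (the exponentially large weights have only $\poly(n)$ bits), giving a polynomial-time many-one reduction. Finally, $\PSPACE$-completeness follows from $\preciseQMA=\PSPACE$ (Corollary \ref{cor: preciseqma}): hardness transfers, and membership in $\PSPACE$ is immediate from membership in $\preciseQMA$.

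\textbf{Main obstacle.} The delicate point is the gap bookkeeping: in the usual inverse-polynomial-gap $\QMA$-completeness proof one treats the propagation weight $J$ and the gadget strength as large-but-polynomial constants, but at inverse-exponential $\delta$ they must be taken inverse-exponentially large relative to $\delta$, so one has to re-examine Kitaev's projection-lemma estimate and the perturbation-gadget error bounds and verify that, after the compensating rescaling, the constructed Hamiltonian's spectral gap genuinely stays at $2^{-\poly}$ (and in particular stays positive). Everything else --- efficient encoding, uniformity of the reduction, and the direction into $\preciseQMA$ --- is routine.
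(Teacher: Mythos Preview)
Your proposal is correct, but it takes a genuinely different route from the paper's proof. The paper does not introduce a weight $J$ or invoke the projection lemma at all; instead it first appeals to Proposition~\ref{prop: perfect completeness} to assume the underlying $\preciseQMA$ verifier has \emph{perfect completeness} ($c=1$), and then applies the Kempe--Regev 3-local construction directly. With $c=1$ the YES upper bound $(1-c)/(T+1)$ vanishes, so the condition $(1-s)/T^3-(1-c)/(T+1)>2^{-\poly(n)}$ is automatic and no rescaling or gadgetry is needed.

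Your approach trades the perfect-completeness detour (which in the paper requires the Childs quantum-walk machinery of Appendix~\ref{app:perfectcompleteness}) for a direct gap analysis: you keep arbitrary $c,s$ with $c-s=2^{-\poly}$, blow up the penalty weight $J$ to $2^{\poly}$ so the projection-lemma error $\mathcal{O}(T^2/J)$ is swallowed by $\delta/(T+1)$, and then rescale. This is more elementary in that it avoids the exact-implementation argument for perfect completeness, at the cost of carrying the $2^{-\poly}$ gap through the projection lemma and the $5\to3$ gadget reduction. Both arguments are valid; the paper's is shorter once Proposition~\ref{prop: perfect completeness} is in hand, while yours is self-contained and makes explicit the point you flag as the ``main obstacle,'' namely that the standard Kitaev bound $(1-s)/T^3-(1-c)/(T+1)$ can be negative when $c$ is bounded away from $1$ and $c-s$ is exponentially small.
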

See Appendix \ref{app:localhamiltonian} for a proof. Combined with the perfect completeness results of Appendix \ref{app:perfectcompleteness}, this will also give a proof that determining whether a local Hamiltonian is frustration-free is a $\PSPACE$-complete problem (Theorem \ref{thm: frustration free} in Appendix \ref{app:localhamiltonian}).

\section{Complete problems for time- and space- bounded classes} \label{sec: time and space}
As we noted in the introduction, variants of the problems we consider are already known to be complete for other time-bounded quantum complexity classes. 
For example, consider the problem of inverting an efficiently encoded $2^{\bigoh(k(n))} \times 2^{\bigoh(k(n))}$ matrix with condition number at most $\kappa(n)$. If $\kappa(n), k(n) = \poly(n)$, this problem is $\BQP$-complete \cite{HHL}. Theorem \ref {thm: matrix invert} says that this problem is instead $\unitaryBQSPACE{\bigoh(k)}$-complete if $\kappa = 2^{\bigoh(k)}$. 
Similarly, consider the problem of determining whether the minimum eigenvalue of an efficiently encoded $2^{\bigoh(k(n))} \times 2^{\bigoh(k(n))}$ matrix is at least $b$ or at most $a$, with $b-a=g(n)$. If $g = 1/\poly$ and $k = \poly$ then this problem is $\QMA$-complete \cite{ksv02,at03}. Theorem \ref{thm: spechamiltonian} says that this problem is instead $\unitaryBQSPACE{\bigoh(k)}$-complete if $g = 2^{-\bigoh(k)}$. 

In both of the problems we consider, we have two parameters that we can vary: for matrix inversion, the condition number $\kappa$ and the matrix size $k$; and for minimum eigenvalue, the promise gap size $g=b-a$ and the matrix size $k$. Varying these two parameters independently gives complete problems for quantum classes that are simultaneously bounded in time and space.

\begin{theorem}
Consider the class of problems solvable by a unitary quantum algorithm using $\poly(T(n))$ gates and $\bigoh(k(n))$ space, where $\Omega(\log(n)) \le k(n) \le T(n) \le 2^{\bigoh(k)} \le 2^{\poly(n)}$. This class has the following complete problem under classical $\poly(n)$-time and $\bigoh(k(n))$-space reductions:

Given as input is the size-$n$ efficient encoding of a $2^{\bigoh(k)} \times 2^{\bigoh(k)}$ positive semidefinite matrix $H$ with a known upper bound $\kappa = \poly(T)$ on the condition number, so that $\kappa^{-1}I\preceq H \preceq I$, and $s,t\in \lbrace 0,1\rbrace^{k(n)}$. It is promised that either $|H^{-1}(s,t)|\geq b$
 or $|H^{-1}(s,t)|\leq a$ for some constants $0 \le a < b \le 1$; determine which is the case.
\end{theorem}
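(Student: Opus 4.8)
The plan is to re-run both directions of the proof of Theorem~\ref{thm: matrix invert}, this time tracking the number of gates (the running time) as well as the space, using the fact that the condition number of the matrix built in the hardness reduction is polynomial in the gate count of the circuit it encodes. Write $\mathcal{C}$ for the target class --- problems solvable by unitary quantum algorithms using $\poly(T)$ gates and $\bigoh(k)$ space --- and note that the regime $\Omega(\log n)\le k\le T\le 2^{\bigoh(k)}\le 2^{\poly(n)}$ is used only through its consequences $\log T=\bigoh(k)$ and $\poly(T)\le 2^{\bigoh(k)}$.

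For membership of the stated inversion problem in $\mathcal{C}$, apply Theorem~\ref{thm: matrix inversion alg} with the circuits $U_a,U_b$ there taken to be the trivial $\bigoh(k)$-gate circuits of $X$ gates preparing the basis states $\ket{s},\ket{t}$ (so the symbol ``$T$'' in Theorem~\ref{thm: matrix inversion alg} is $\bigoh(k)$ here), and with the target precision $\epsilon$ a constant below $(b-a)/3$. Task~3 of that theorem then approximates $|\bra{s}H^{-1}\ket{t}|$ to within $\epsilon$, distinguishing the two promise cases, using $\poly(\bigoh(k),k,\kappa,1/\epsilon)=\poly(\kappa)=\poly(T)$ gates (by $k\le T$ and $\kappa=\poly(T)$) and $\bigoh(k+\log(\kappa/\epsilon))=\bigoh(k+\log T)=\bigoh(k)$ space (by $T\le 2^{\bigoh(k)}$); the uniformity conditions match, so the problem lies in $\mathcal{C}$.

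For hardness, take $L\in\mathcal{C}$ decided by a uniform family of unitary circuits $Q_x=U_{x,m}\cdots U_{x,1}$ with $m=\poly(T)$ gates on $q=\bigoh(k)$ qubits and completeness/soundness $2/3,1/3$, and apply the HHL-style reduction of Appendix~\ref{app: matrixinversion-hardness}. From $Q_x$ one obtains an efficiently encoded positive semidefinite matrix $H$ of dimension $\bigoh(m)\cdot 2^{q}=\poly(T)\cdot 2^{\bigoh(k)}=2^{\bigoh(k)}$, with $\bigoh(1)$ nonzero entries per row each of magnitude $\bigoh(1)$, whose condition number is $\poly(m)=\poly(T)$ since its nontrivial structure is just the banded Feynman-path/clock matrix (the PSD form and Hermitian dilation keeping $\kappa$ polynomial), and such that a designated entry $|H^{-1}(s,t)|$ with $s,t$ basis states is a fixed multiple of the acceptance amplitude of $Q_x$; a constant probability gap gives a constant amplitude gap, hence a constant gap in $|H^{-1}(s,t)|$. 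Uniformity of $\{Q_x\}$ makes the encoding computable in $\poly(n)$ time and $\bigoh(k)$ space, completing the reduction. Should the reduction as written require a wider initial gap (or the probability rather than amplitude), prepend the paper's space-efficient unitary amplification of $Q_x$, which multiplies the gate count by $\poly(T)$ and adds only $\bigoh(k)$ space, leaving all bounds intact.

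The only real difficulty is the parameter bookkeeping: checking that each $\poly(\kappa)=2^{\bigoh(k)}$ factor absorbed into ``$\poly$'' in the proof of Theorem~\ref{thm: matrix invert} is now precisely a $\poly(T)$ gate count and a $\bigoh(\log T)=\bigoh(k)$ space term, and that the clock/Feynman-path matrix genuinely has condition number polynomial --- not exponential --- in the gate count $m$ (the standard HHL estimate, already used for Theorem~\ref{thm: matrix invert}). As a sanity check, $T=\poly(n)$ with $k=\bigoh(\log n)$ recovers the $\BQP$-complete matrix-inversion problem of \cite{HHL}, while $T=2^{\bigoh(k)}$ recovers Theorem~\ref{thm: matrix invert} (so $\unitaryBQL$ for $k=\bigoh(\log n)$, $\PSPACE$ for $k=\poly(n)$), confirming that the statement interpolates between the time- and space-bounded regimes.
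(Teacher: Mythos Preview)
Your proposal is correct and matches the paper's intended approach: the paper explicitly omits the proof, stating only that it is a ``straightforward generalization of the proofs in our paper,'' and your write-up carries out exactly that generalization by re-running the membership argument (Theorem~\ref{thm: matrix inversion alg}) and the HHL-style hardness reduction (Appendix~\ref{app: matrixinversion-hardness}) while tracking that the gate count stays $\poly(T)$ and the condition number of the clock matrix is $\bigoh(m)=\poly(T)$ rather than merely $2^{\bigoh(k)}$. One small wording quibble: after the $\tilde Q_x$ trick of Lemma~\ref{lem: quantum circuit acceptance}, the designated entry of $H^{-1}$ is proportional to the acceptance \emph{probability} $p_x$, not the amplitude, so no separate probability-to-amplitude conversion is needed.
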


\begin{theorem} \label{thm: time space qma}
For functions $k(n)$, $T(n)$ satisfying $\Omega(\log(n)) \le k(n) \le T(n) \le 2^{\bigoh(k)} \le 2^{\poly(n)}$,
\begin{equation}
\bigcup_{c-s\ge \frac{1}{\poly(T)}} \bddQMA{\poly(n)}{\bigoh(k)}{\bigoh(k)}{c}{s} = \bddQMA{\poly(T)}{\bigoh(k)}{\bigoh(k)}{2/3}{1/3} \nonumber
\end{equation}
Furthermore, the following problem is complete for this class under classical $\poly(n)$-time and $\bigoh(k(n))$-space reductions:

Given as input is the size-$n$ efficient encoding of a $2^{\bigoh(k)} \times 2^{\bigoh(k)}$ PSD matrix $H$, such that 
$\|H\|_{max} = \max_{s,t}|H(s,t)|$ is at most a constant. Let $\lambda_{min}$ be the minimum eigenvalue of $H$. It is promised that either $\lambda_{min} \le a$ or $\lambda_{min} \ge b$, where $a(n)$ and $b(n)$ are numbers such that $b-a \ge 1/\poly(T)$. Output 1 if $\lambda_{min} \le a$, and output 0 otherwise.
\end{theorem}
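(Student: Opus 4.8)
The plan is to re-run the proofs of Theorems~\ref{thm: spechamiltonian} and~\ref{thm: equivalence} with the gate budget $\poly(T)$ playing the role that $2^{\bigoh(k)}$ played there, verifying that no step costs more than $\poly(T)$ gates. The key point is that, unlike in Lemma~\ref{lem: pspace upper bound}, the target here is the \emph{witness} class $\bddQMA{\poly(T)}{\bigoh(k)}{\bigoh(k)}{2/3}{1/3}$, so we never replace a witness by the maximally mixed state and hence never incur the $2^{\bigoh(k)}$-factor re-amplification of that lemma. (I will assume the intended regime $T(n)\ge n$, so that $\poly(n)\subseteq\poly(T)$.) For the inclusion $\bigcup_{c-s\ge 1/\poly(T)}\bddQMA{\poly(n)}{\bigoh(k)}{\bigoh(k)}{c}{s}\subseteq\bddQMA{\poly(T)}{\bigoh(k)}{\bigoh(k)}{2/3}{1/3}$, I would take a verifier with gap $c-s\ge 1/\poly(T)$ and apply the space-efficient $\QMA$ amplification of Lemma~\ref{lem: gap amp} (the tool already used inside Lemma~\ref{lem: pspace upper bound}) to drive the error down to a constant: this costs $\bigoh(\log\frac{1}{c-s})=\bigoh(\log T)\subseteq\bigoh(k)$ extra qubits and $\bigoh(\frac{1}{c-s})=\poly(T)$ calls to the verifier and its inverse, so the amplified verifier uses $\poly(T)$ gates, $\bigoh(k)$ space, and an $\bigoh(k)$-qubit witness.

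The reverse inclusion is the substantive direction, and it also supplies the hardness half of the completeness claim. Given $L\in\bddQMA{\poly(T)}{\bigoh(k)}{\bigoh(k)}{2/3}{1/3}$ with verifier $V_x$, I would apply Kitaev's clock construction to $V_x$ using a \emph{binary} clock, exactly as in the proof of Lemma~\ref{lem: specham-hardness}: since $V_x$ has $N=\poly(T)$ gates the clock costs only $\bigoh(\log T)\subseteq\bigoh(k)$ qubits, so the propagation-plus-input Hamiltonian $H_{V_x}$ acts on $\bigoh(k)$ qubits, is sparse with $\bigoh(1)$ nonzero entries per row, has (after a harmless $\poly(k)$ rescaling) $\|H_{V_x}\|_{max}=\bigoh(1)$ and $\|H_{V_x}\|=\bigoh(1)$, and admits an efficient encoding computable in $\poly(n)$ time and $\bigoh(k)$ space, because by uniformity the $\tau$-th gate of $V_x$ is producible within that budget. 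Kitaev's spectral lemma---which uses only the sum-of-terms structure of the propagation term, not its locality---then yields thresholds $a'<b'$ with $b'-a'=1/\poly(N)=1/\poly(T)$ separating the yes case ($\lambda_{min}(H_{V_x})\le a'$) from the no case ($\lambda_{min}(H_{V_x})\ge b'$). A new $\QMA$ verifier then has Merlin send the ground (history) state of $H_{V_x}$, a state on $\bigoh(k)$ qubits, and has Arthur run one round of the ``poor man's'' phase estimation of Circuit~(\ref{circ: poor man phase estimation}) on $\exp(-iH_{V_x}t)$ with the constant $t=\pi/\|H_{V_x}\|$, implemented via Theorem~\ref{thm:ham_sim} to error $\epsilon=2^{-\Theta(k)}$ (sufficiently small) in $\poly(n)$ time and $\bigoh(k)$ space; by the same trigonometric estimate as in Lemma~\ref{lem:qma protocol} the resulting completeness--soundness gap is $1/\poly(T)$, placing $L$ on the left-hand side. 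Reading this construction as the map $x\mapsto(\text{encoding of }H_{V_x},\,a',\,b')$ gives a $\poly(n)$-time $\bigoh(k)$-space reduction of every $L$ in the class to the minimum eigenvalue problem with gap $1/\poly(T)$, which is the required hardness.

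It remains to place the minimum eigenvalue problem itself (with gap $b-a\ge 1/\poly(T)$ and $\|H\|_{max}=\bigoh(1)$) inside the class, and this is just the protocol of Lemma~\ref{lem:qma protocol} applied verbatim: Merlin sends the minimum eigenvector of the input matrix $H$, Arthur runs one-bit phase estimation of $\exp(-iHt)$ with $t=\pi/\|H\|$ (here $\|H\|\le\poly(n)\|H\|_{max}=\poly(n)$, since each row of $H$ has at most $\poly(n)$ nonzero entries), and by the trigonometric estimate of Lemma~\ref{lem:qma protocol} the gap is $1/\poly(T)$, so the problem lies in $\bigcup_{c-s\ge 1/\poly(T)}\bddQMA{\poly(n)}{\bigoh(k)}{\bigoh(k)}{c}{s}$. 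Combined with the class equality and the hardness reduction of the previous paragraph, this proves the theorem. (The matrix-inversion analogue follows the same route, using Theorem~\ref{thm: matrix invert} and its high-precision algorithm in place of the minimum eigenvalue machinery.)

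I expect the only real work to be re-auditing the binary-clock variant of Kitaev's construction at these parameters: checking that the binary clock keeps the Hamiltonian sparse with $\bigoh(1)$-size entries and an encoding honoring the $\poly(n)$-time and $\bigoh(k)$-space budgets, and that Kitaev's geometric (random-walk) estimate still delivers a promise gap inverse-polynomial in the gate count $N=\poly(T)$ rather than in the clock dimension. All of this is already done for $N=2^{\bigoh(k)}$ in Lemma~\ref{lem: specham-hardness}, and since every bound there is stated in terms of $N$, the substitution $N=\poly(T)$ is routine; the one conceptual point worth emphasizing is that, because the target is the witness class, no witness is removed and so no step inflates the gate count past $\poly(T)$.
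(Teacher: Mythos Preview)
Your proposal is correct and is exactly what the paper intends: the paper omits this proof, remarking only that it is a ``straightforward generalization of the proofs in our paper,'' and you have spelled that generalization out correctly---in particular you rightly observe that since both sides retain the witness, the witness-removal step of Lemma~\ref{lem: pspace upper bound} and its attendant $2^{\bigoh(k)}$ blowup are never invoked. One small point worth tightening in your audit: the right-hand class $\bddQMA{\poly(T)}{\bigoh(k)}{\bigoh(k)}{2/3}{1/3}$ by definition only promises $\poly(T)$-time (not $\poly(n)$-time) uniformity for gate generation, so for $H_{V_x}$ to admit a size-$n$ efficient encoding you should first bundle the $\bigoh(k)$-space classical gate-generating Turing machine into the quantum circuit, exactly as in Appendix~\ref{app: space bounded}; this preserves the $\poly(T)$ gate count while making each gate of the resulting circuit $\poly(n)$-time computable, after which your Kitaev-plus-phase-estimation argument goes through verbatim.
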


We omit the proofs; they are straightforward generalizations of the proofs in our paper. These results interpolate between the time-bounded and space-bounded case: when $T = \poly(k)$ the time-bound dominates and we obtain a time-bounded class; while when $T = 2^{\bigoh(k)}$ we obtain a space-bounded class. Note that when $T = 2^{\bigoh(k)}$ then the complexity class in Theorem \ref{thm: time space qma} is equal to $\unitaryBQSPACE{\bigoh(k)}$, as shown in Theorem \ref{thm: equivalence}.

\section{Open Problems}
This work leaves open several questions that may lead to interesting follow-up work:
\begin{compactenum}
\begin{item} Can we use our $\preciseQMA=\PSPACE$ result to prove upper or lower bounds for other complexity classes? \end{item}
\begin{item} Here we have shown $\preciseQMA=\PSPACE$.  Ito, Kobayashi and Watrous  have shown that $\QIP$ with doubly-exponentially small completeness-soundness gap is equal to $\EXP$ \cite{ikw12}.  What can be said about the power of $\QIP$ with exponentially small completeness-soundness gap?\end{item}
\begin{item}In this paper we studied unitary quantum space complexity classes, and showed that \matrixinvert{k(n)} and \spechamiltonian{k(n)} characterize unitary quantum space complexity.  Can similar hardness results be shown for non-unitary quantum space complexity classes?\end{item}
\end{compactenum}

\section{Acknowledgements}
We are grateful to Andrew Childs, Sevag Gharibian, David Gosset, Aram Harrow, Hirotada Kobayashi, Robin Kothari, Tomoyuki Morimae, Harumichi Nishimura, Martin Schwarz, John Watrous, and Xiaodi Wu for helpful conversations, to John Watrous for comments on a preliminary draft, and to anonymous referees for suggestions. This work was supported by the Department of Defense.

\appendix

\section{More details on space-bounded computation}\label{app: space bounded}
For this section, it would be helpful to keep in mind that we always assume the space bound $k(n)$ always satisfies $\Omega(\log(n)) \le k(n) \le \poly(n)$.

We start with the definitions of classical bounded space computation. In discussion of space-bounded classes, we usually consider Turing machines with two tapes, a read-only input tape and a work tape; only the space used on the work tape is counted. For $k:\mathbb{N}\rightarrow\mathbb{N}$, a function $f:\{0,1\}^{*}\rightarrow\{0,1\}^*$ is said to be computable in $k(n)$ space if any bit of $f(x)$ can be computed by a deterministic Turing machine using space $\bigoh(k(|x|))$ on the work tape.  For example, $\Logspace$ is the class of functions that can be computed in $\bigoh(\log{n})$ space.

We now discuss quantum space-bounded complexity classes; for a fuller discussion see \cite{Watrous09}. A straightforward way to define quantum space-bounded classes is to consider a Turing machine with three tapes: a read-only classical input tape, a classical work tape, and a quantum work tape (with two heads) consisting of qubits. This is the model considered in \cite{tashma} and \cite{Watrous03}, except that they allow intermediate measurements (and \cite{Watrous03} allows even more general quantum operations). In this work we consider only computations with no intermediate measurements: we can therefore impose that there are no measurements on the quantum work tape until the register reaches a specified end state, following which a single measurement is performed on the quantum tape and the algorithm accepts or rejects according to the measurement. Therefore the operations performed by the algorithm will not depend on the quantum tape, since there is no way to read information out of it until the end of the algorithm.

Instead of working with Turing machines, in quantum computation it is much more customary (and convenient) to work with quantum circuits. For the setup above, since the operations on the quantum tape are completely classically controlled, we can equivalently consider a quantum circuit generated by a classical space-bounded Turing machine that computes the quantum gates one-by-one and applies them in sequence. If the classical Turing machine is $\mathcal{O}(k(n))$-space bounded, it has at most $2^{\mathcal{O}(k)}$ configurations, and therefore there are at most $2^{\mathcal{O}(k)}$ quantum gates in the circuit. 

Moreover, the $\mathcal{O}(k)$-space bounded classical Turing machine can be replaced by a classical circuit on $\mathcal{O}(k)$ bits, such that there is a $\poly(n)$-time $\mathcal{O}(k)$-space Turing machine that on input $i$ generates the $i$-th gate of the circuit (see e.g. \cite[Section~6.8]{ab09}). The classical circuit can then be bundled into the quantum circuit, and we obtain a quantum circuit with at most $2^{\mathcal{O}(k)}$ gates, such that each individual gate can be generated in classical $\poly(n)$-time and $\mathcal{O}(k)$-space. This justifies the definition of the complexity class $\QSPACE[k(n)](c,s)$:

\begingroup
\def\thedefinition{\ref{def: qspace}}
\begin{definition}
Let $k(n)$ be a function satisfying $\Omega(\log(n)) \le k(n) \le \poly(n)$. A promise problem $L=(L_{yes},L_{no})$ is in $\QSPACE[k(n)](c,s)$ if there exists a $\poly(|x|)$-time $\mathcal{O}(k)$-space uniformly generated family of quantum circuits $\{Q_x\}_{x\in\{0,1\}^*}$, where each circuit $Q_x=U_{x,T}U_{x,T-1}\cdots U_{x,1}$ has $T=2^{\mathcal{O}(k)}$ gates, and acts on $\mathcal{O}(k(|x|))$ qubits, such that:\\
 If $x \in L_{yes}$:
\begin{equation}
\bra{0^k} Q^\dagger_x \ket{1}\bra{1}_{out} Q_x \ket{0^k} \ge c.
\end{equation}
Whereas if $x \in L_{no}$:
\begin{equation}
\bra{0^k} Q^\dagger_x \ket{1}\bra{1}_{out} Q_x \ket{0^k} \le s.
\end{equation}
Here $out$ denotes a single qubit we measure at the end of the computation; no intermediate measurements are allowed.  
Furthermore, we require $c$ and $s$ to be computable in classical $\bigoh(k(n))$-space.
\end{definition}
\addtocounter{theorem}{-1}
\endgroup

\section{Proof that \matrixinvert{\bigoh(k(n))} $\in \unitaryBQSPACE{k(n)}$}\label{app: matrix inversion alg}

\begingroup
\def\thetheorem{\ref{thm: matrix inversion alg}}
\begin{theorem}
Fix functions $k(n)$, $\kappa(n)$, and $\epsilon(n)$. Suppose we are given the size-$n$ efficient encoding of a $2^{k(n)} \times 2^{k(n)}$ PSD matrix $H$ such that $\kappa^{-1} I \preceq H \preceq I$. We are also given $\poly(n)$-time $\mathcal{O}(k+\log(\kappa/\epsilon))$-space uniform quantum circuits $U_a$ and $U_b$ acting on $k$ qubits and using at most $T$ gates. Let $U_a\ket{0}^{\otimes k(n)} = \ket{a}$ and $U_b\ket{0}^{\otimes k(n)} = \ket{b}$. The following tasks can be performed with $\poly(n)$-time $O(k+\log(\kappa/\epsilon))$-space uniformly generated quantum circuits with $\poly(T,k,\kappa,1/\epsilon)$ gates and $\bigoh (k+\log(\kappa/\epsilon))$ qubits:
\begin{compactenum}
\item With at least constant probability, output an approximation of the quantum state $H^{-1}\ket{b} / \|H^{-1}\ket{b}\|$ up to error $\epsilon$.
\item Estimate $\|H^{-1}\ket{b}\|$ up to precision $\epsilon$.
\item Estimate  $|\bra{a}H^{-1}\ket{b}|$ up to precision $\epsilon$.
\end{compactenum}
These circuits do not require intermediate measurements.
\end{theorem}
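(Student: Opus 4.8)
The plan is to build the whole algorithm on top of Lemma~\ref{lem: matrix inversion lemma}. First I would invoke that lemma with an internal precision $\epsilon' = \Theta(\epsilon/\poly(\kappa))$ to obtain the unitary $W_H$ on $k+\ell = \bigoh(k+\log(\kappa/\epsilon))$ qubits, together with the given circuit $U_b$; these let me prepare the initial state $\ket{\phi_0} := \ket{0}^{\otimes\ell}\otimes\ket{b}$ and reflect about it (the reflection $I - 2\ketbra{\phi_0}{\phi_0}$ costs one use each of $U_b$, $U_b^\dagger$ and a multiply-controlled phase). Setting $\Pi_0 = \ketbra{\phi_0}{\phi_0}$ and $\Pi_1 = W_H^\dagger(\ketbra{0}{0}_{out}\otimes I)W_H$, the operator $R = -(I-2\Pi_1)(I-2\Pi_0)$ preserves, up to $\bigoh(\epsilon')$, the two-dimensional subspace spanned by $\ket{\phi_0}$ and $W_H^\dagger(\ket{0}_{out}\otimes\ket{\psi_b})$, and acts there as a rotation by $2\arcsin\alpha$; its eigenvectors $\ket{\psi_\pm}$ (eigenvalues $e^{\pm 2i\arcsin\alpha}$) satisfy $\ket{\phi_0} = (\ket{\psi_+}+\ket{\psi_-})/\sqrt{2}$ and $|\braket{\psi_\pm}{W_H^\dagger(\ket{0}_{out}\ket{\psi_b})}| = 1/\sqrt{2}$, so both branches of the rotation retain constant overlap with the target.

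Next I would run phase estimation on $R$ with input $\ket{\phi_0}$, using $\bigoh(\log(\kappa/\epsilon))$ bits of precision --- equivalently $\bigoh(\kappa/\epsilon)$ controlled applications of $R$, each of which costs $\bigoh(1)$ uses of $W_H$, $W_H^\dagger$, $U_b$, $U_b^\dagger$ and the two reflections. Since $\ket{\phi_0}$ is an equal superposition of $\ket{\psi_\pm}$, the phase register ends up (approximately) holding $\pm\arcsin\alpha$, and since $\alpha = \|H^{-1}\ket b\|/\kappa + \bigoh(\epsilon')$ with $\|H^{-1}\ket b\| \in [1,\kappa]$, rescaling by $\kappa$ yields an estimate of $\|H^{-1}\ket b\|$ to additive error $\epsilon$ using only $\poly(\kappa,1/\epsilon)$ iterations; this is Task~2. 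For Task~1, I would \emph{not} measure the phase register: conditioned on its value the main register holds $\ket{\psi_+}$ or $\ket{\psi_-}$ up to error, and applying $W_H$ to the main register places weight $1/2 - \bigoh(\epsilon')$ on the $out=0$ branch, where the ancillas are in $\ket{0}^{\otimes\ell-1}$ and the $k$-qubit answer register is $\epsilon$-close to $H^{-1}\ket b/\|H^{-1}\ket b\|$. Thus the $out$ qubit is a success flag, success occurs with a constant probability near $1/2$, and no intermediate measurement is used.

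For Task~3 I would replace $\Pi_1$ by the rank-one projector $\Pi'_1 = W_H^\dagger\big(\ketbra{0}{0}_{out}\otimes\ketbra{0^{\ell-1}}{0^{\ell-1}}_{anc}\otimes\ketbra{a}{a}_{data}\big)W_H$, whose defining vector can be reflected about using $U_a$, $U_a^\dagger$, a multiply-controlled phase, and $W_H$, $W_H^\dagger$. Then $R' = -(I-2\Pi'_1)(I-2\Pi_0)$ is, up to $\bigoh(\epsilon')$, a rotation on a two-dimensional subspace with eigenvalue angle $\pm\arcsin\beta$, where $\beta = |\alpha\braket{a}{\psi_b}| = |\bra{a}H^{-1}\ket{b}|/\kappa + \bigoh(\epsilon')$; phase estimation on $R'$ with input $\ket{\phi_0}$ again returns $\beta$ to additive precision $\epsilon/\kappa$ using $\bigoh(\kappa/\epsilon)$ iterations, and multiplying by $\kappa$ gives $|\bra{a}H^{-1}\ket{b}|$ to precision $\epsilon$. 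In all three cases the space is that of $W_H$, namely $\bigoh(k+\log(\kappa/\epsilon))$, plus an $\bigoh(\log(\kappa/\epsilon))$-qubit phase register and $\bigoh(1)$ workspace for the reflections, and the gate count is $\bigoh(\kappa/\epsilon)$ times the $\poly(T,k,\kappa/\epsilon)$ cost of one $R$ (or $R'$), hence $\poly(T,k,\kappa,1/\epsilon)$ as claimed.

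The main obstacle I expect is the robustness/error analysis rather than the algorithmic idea. Because $W_H$ is only approximate, the states $\ket{\psi_b}$ and $\ket{\psi'_b}$ are not exactly orthogonal, $\ket{\phi_0}$ does not lie in a perfectly clean two-dimensional invariant subspace of $R$, and $\alpha$ is only $\epsilon'$-accurate; one must show that the eigenvalues and eigenvectors of the actual $R$ (resp.\ $R'$) are within $\poly(\epsilon')$ of those of the ideal rotation, that phase estimation of an approximate unitary on an approximate equal-superposition input still outputs $\arcsin\alpha$ (resp.\ $\arcsin\beta$) to the stated precision, and that these perturbations together with the phase-estimation truncation error compose to at most $\epsilon$ --- which is exactly what fixes the choice $\epsilon' = \Theta(\epsilon/\poly(\kappa))$ and, circularly but harmlessly, only a $\poly(T,k,\kappa,1/\epsilon)$ increase in the gate count (through the $\poly(k,\kappa/\epsilon')$ size of $W_H$). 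I would carry out this quantitative perturbation argument in the appendix.
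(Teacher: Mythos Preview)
Your proposal is correct and follows essentially the same route as the paper: invoke Lemma~\ref{lem: matrix inversion lemma} with internal precision $\epsilon' = \Theta(\epsilon/\kappa)$, form the Grover-type rotation $R = -(I-2\Pi_1)(I-2\Pi_0)$ from $\Pi_0 = \ketbra{\phi_0}{\phi_0}$ and $\Pi_1 = W_H^\dagger(\ketbra{0}{0}_{out}\otimes I)W_H$, run phase estimation on $R$ with input $\ket{\phi_0}$ to extract $\alpha$ (Task~2) and leave the main register on $\ket{\psi_\pm}$ so that a final application of $W_H$ yields the target with constant probability (Task~1), and for Task~3 swap $\Pi_1$ for a projector that additionally tests the data register against $\ket{a}$. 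The only cosmetic differences are that the paper's $\Pi'_1$ omits your ancilla check $\ketbra{0^{\ell-1}}{0^{\ell-1}}_{anc}$ (harmless, since $\ket{\psi_b}$ already has all ancillas $\epsilon'$-close to $\ket{0}$), and the paper phrases Task~1 as reading the phase register and then verifying $out=0$, whereas you skip directly to the $out$-flag; both are equivalent once measurements are deferred.
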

\addtocounter{theorem}{-1}
\endgroup

For convenience we also restate the following lemma:
\begingroup
\def\thelemma{\ref{lem: matrix inversion lemma}}
\begin{lemma}[Implicit in \cite{HHL,tashma}]
There is a $\poly$-time $\bigoh (k + \log(\kappa/\epsilon'))$-space uniform quantum unitary transformation $W_H$ over $k+\ell = \bigoh (k+ \log(\kappa/\epsilon'))$ qubits and using $\poly(k,\kappa/\epsilon')$ gates, such that for any $k$-qubit input state $\ket{b}$,
\begin{equation}
W_H (\ket{0}^{\otimes \ell} \otimes \ket{b}) = \alpha \ket{0}_{out} \otimes \ket{\psi_b} + \sqrt{1-\alpha^2} \ket{1}_{out} \otimes\ket{\psi'_b},
\end{equation}
where $\ket{\psi_b}$ and $\ket{\psi'_b}$ are normalized states such that $\| \ket{\psi_b} - \ket{0}^{\otimes \ell-1} \otimes \frac{H^{-1}\ket{b}} {\|H^{-1}\ket{b}\|} \| \le \epsilon'$, $\alpha$ is a positive number satisfying $|\alpha - \frac{\|H^{-1}\ket{b}\|}{\kappa}| \le \epsilon'$, and ``out'' is a 1-qubit register.
\end{lemma}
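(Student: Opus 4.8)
\textbf{Proof plan for Lemma \ref{lem: matrix inversion lemma}.}

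The plan is to explicitly construct the unitary $W_H$ as the concatenation of three stages---space-efficient phase estimation of $\exp(iH)$, a controlled single-qubit rotation encoding the inversion $\lambda \mapsto (\kappa\lambda)^{-1}$, and the reverse phase estimation---exactly as outlined in the proof sketch following the lemma statement, and to verify that (i) each stage is implementable by a $\poly$-time, $\bigoh(k+\log(\kappa/\epsilon'))$-space uniform quantum circuit using $\poly(k,\kappa/\epsilon')$ gates, and (ii) the composite map satisfies the claimed output identity with the stated error bounds on $\ket{\psi_b}$ and on $\alpha$. Throughout I decompose the (arbitrary $k$-qubit) input $\ket{b} = \sum_\lambda a_\lambda \ket{v_\lambda}$ in the eigenbasis of $H$; since $\kappa^{-1}I \preceq H \preceq I$, all eigenvalues satisfy $\kappa^{-1} \le \lambda \le 1$, so $(\kappa\lambda)^{-1} \in (0,1]$ and the rotation in Step~2 is well-defined, and moreover phase estimation on $\exp(iH)$ (whose eigenphases lie in $[\kappa^{-1},1] \subseteq [0,\pi]$) is unambiguous.

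First I would set up Step~1. Using Theorem \ref{thm:ham_sim}, the time evolution $\exp(-iHt)$ for integer-resolution times $t = 1,2,\dots$ can be simulated to precision $\delta$ using $\poly(n,k,\|H\|_{max},t,\log(1/\delta))$ operations and $\bigoh(k+\log(t/\delta))$ space; feeding these into textbook phase estimation (see e.g.\ \cite[Ch.~4]{nc00}) with $\ell' = \bigoh(\log(\kappa/\epsilon'))$ bits of precision produces, approximately, the state $\sum_\lambda a_\lambda \ket{v_\lambda}\ket{\tilde\lambda}$, where $\tilde\lambda$ is an $\ell'$-bit approximation of $\lambda$. Here I would invoke the standard phase-estimation error analysis (accuracy $2^{-\ell'}$ in the estimate except with small failure probability, which is absorbed into $\epsilon'$), being slightly careful because the $v_\lambda$ are not necessarily exact eigenvectors of the \emph{simulated} $\exp(-iH)$; a $\delta$-error in the simulated evolution translates into a $\poly(\cdot)\delta$-error in the output state, so choosing $\delta = \poly(\epsilon'/\kappa)$ suffices. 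For Step~2, conditioned on the register $\ket{\tilde\lambda}$, I apply the single-qubit rotation $\ket{0}_{out} \mapsto (\kappa\tilde\lambda)^{-1}\ket{0}_{out} + \sqrt{1-(\kappa\tilde\lambda)^{-2}}\ket{1}_{out}$; computing the rotation angle $\arccos((\kappa\tilde\lambda)^{-1})$ to $1/\poly$ accuracy in $\bigoh(\log)$ space is exactly the kind of elementary-function computation guaranteed by Reif's algorithms \cite{reif}, and the resulting rotation gate can be synthesized from the fixed gateset in $\bigoh(\log(1/\epsilon'))$ space by \cite{mw12}. For Step~3, I run the Step~1 circuit in reverse (valid precisely because it is unitary---no intermediate measurements), clearing $\ket{\tilde\lambda}$ back to $\ket{0}^{\otimes \ell'}$ up to the same $\delta$-level error.

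Then I would assemble the pieces. After the three stages, the amplitude on $\ket{0}_{out}\otimes\ket{0}^{\otimes\ell-1}$ in the eigen-component $\lambda$ is $a_\lambda(\kappa\lambda)^{-1}$ up to the accumulated errors, and $\sum_\lambda a_\lambda \ket{v_\lambda}(\kappa\lambda)^{-1} = \kappa^{-1}H^{-1}\ket{b}$; defining $\alpha$ as the norm of the $\ket{0}_{out}$ component gives $|\alpha - \|H^{-1}\ket{b}\|/\kappa| \le \epsilon'$ and the normalized $\ket{0}_{out}$ component is within $\epsilon'$ of $\ket{0}^{\otimes\ell-1}\otimes H^{-1}\ket{b}/\|H^{-1}\ket{b}\|$, after choosing all internal precisions ($\ell'$, $\delta$, the rotation-angle accuracy) to be a suitable $\poly(\epsilon'/\kappa)$; this is a routine propagation-of-errors bookkeeping. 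Counting gates: each $\exp(-iHt)$ costs $\poly(k,\kappa/\epsilon')$ (using $\|H\|_{max} \le \poly(n)$ and $t \le \poly(\kappa/\epsilon')$), there are $\poly(\kappa/\epsilon')$ of them, and the rotation and synthesis cost $\poly(\log(\kappa/\epsilon'))$, for a total of $\poly(k,\kappa/\epsilon')$ gates on $k + \ell = \bigoh(k+\log(\kappa/\epsilon'))$ qubits, all within $\bigoh(k+\log(\kappa/\epsilon'))$ workspace.

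The main obstacle I anticipate is the error analysis of Step~1 for a general (non-eigenstate) input combined with an \emph{approximate} implementation of $\exp(-iH)$: one must track how the phase-estimation ``garbage'' (the small-probability branches where $\tilde\lambda$ is far from $\lambda$) interacts with the near-zero denominators $(\kappa\tilde\lambda)^{-1}$ when $\tilde\lambda$ is erroneously close to $\kappa^{-1}$, and ensure the resulting contribution to the final-state error is still $\le \epsilon'$. This is handled by the standard HHL-style argument---choosing $\ell'$ large enough that the phase estimate is within, say, $\kappa^{-1}\epsilon'/\poly$ of $\lambda$ except with probability $\epsilon'^2/\poly$, and noting the inversion is Lipschitz on $[\kappa^{-1},1]$ with constant $\kappa^2$---so I would cite the detailed error bounds of \cite{HHL} rather than reproduce them, remarking only that Ta-Shma's space-efficient simulation of $\exp(iH)$ \cite[Theorem~4.1]{tashma} (or, with better time complexity, Theorem \ref{thm:ham_sim}) is what makes every step of this pipeline run in $\bigoh(k+\log(\kappa/\epsilon'))$ space.
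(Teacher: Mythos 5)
Your proposal matches the paper's own argument essentially step for step: the same three-stage construction (space-efficient phase estimation of $\exp(iH)$ via Theorem \ref{thm:ham_sim}, the controlled rotation $\ket{\lambda}\ket{0}\mapsto\ket{\lambda}[(\kappa\lambda)^{-1}\ket{0}+\sqrt{1-(\kappa\lambda)^{-2}}\ket{1}]$, and uncomputation by reversing phase estimation), with the detailed error propagation deferred to the analysis of \cite{HHL} and the space bound coming from the space-efficient Hamiltonian simulation. This is exactly how the paper proves the lemma, so the proposal is correct and takes the same route.
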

\addtocounter{theorem}{-1}
\endgroup

Before we start the proof, we note that for ease of exposition we will actually use a limited number of intermediate measurements of up to $\bigoh(k+\log(\kappa/\epsilon))$ qubits. These intermediate measurements are not necessary, since they can be deferred to the end of the computation using only $\bigoh(k+\log(\kappa/\epsilon))$ extra space, which fits within our space bound. (It is only when the number of intermediate measurements used is superlinear in the amount of space available that we cannot defer measurements.)

\begin{proof}[Proof of Theorem \ref{lem: matrix inversion lemma}]
We first show the first item, i.e. generating the state $H^{-1}\ket{b} / \|H^{-1}\ket{b}\|$. Choose $\epsilon' = \mathcal{O}(\epsilon/\kappa)$ in the statement of Lemma \ref{lem: matrix inversion lemma}, keeping in mind for the rest of the proof that $\log(\kappa/\epsilon') = \mathcal{O}(\log(\kappa/\epsilon))$. Note that $\ket{\psi_b}$ can be obtained by computing $W_HU_b\ket{0}_{all} = W_H (\ket{0}_{anc} \otimes \ket{b})$, and then postselecting on the output qubit being in state $\ket{0}$. To obtain $\ket{\psi_b}$ with high probability we can repeat this procedure many times until success. We can then get a good approximation to $\frac{H^{-1}\ket{b}} {\|H^{-1}\ket{b}\|} $ by tracing out the other ancilla qubits. For our setting we would like to get by with a low space requirement and without using intermediate measurements, so instead of repeating until success, we will apply amplitude amplification to the above unitary $W_H$. Define the projectors $\Pi_0$ and $\Pi_1$ by
\begin{align}
\Pi_0 &= \ket{0}\bra{0}_{anc} \otimes \ket{b}\bra{b} \\
\Pi_1 &= W_H^\dagger (\ket{0}\bra{0}_{out} \otimes I) W_H
\end{align}
Define $\ket{v} = \ket{0}_{anc}\otimes \ket{b}$, and write
\begin{align}
\ket{v} &= \sin \theta \ket{w} +\cos \theta\ket{w^\perp}, \\ 
\ket{w} &= \sin \theta \ket{v} + \cos \theta\ket{v^\perp}
\end{align}
where $\ket{v^\perp}$, $\ket{w}$, and $\ket{w^\perp}$ are normalized states such that 
\begin{align}
\Pi_0\ket{v} = \ket{v},&\quad \Pi_0\ket{v^\perp} = 0 \\
\Pi_1\ket{w} = \ket{w},&\quad \Pi_1\ket{w^\perp} = 0.
\end{align}
Note that
\begin{equation} 
(\bra{0}_{anc} \otimes I) W_H (\ket{0}\bra{0}_{anc} \otimes \ket{b}\bra{b}) (\ket{0}_{anc} \otimes \ket{b}) = W_H \Pi_1\Pi_0(\ket{0}_{anc} \otimes \ket{b}) \propto W_H \ket{w},
\end{equation}
and therefore $W_H\ket{w}$ is the postmeasurement state we desire. The success probability of the postselection step is simply $\bra{v}\Pi_0\Pi_1\Pi_0\ket{v} = \alpha^2 = \sin^2\theta$. 

Consider now the operator $R=-(I-2\Pi_1)(I-2\Pi_0)$; by analogy from Grover's algorithm, it is easy to see that $R$ is a rotation operator with angle $2\theta$. It has eigenvalues $e^{\pm i2\theta}$, with the following eigenvectors:
\begin{align}
R\ket{\psi_+} = e^{i2\theta}\ket{\psi_+},&\quad \ket{\psi_+} \equiv \frac{1}{\sqrt{2}}\left(\ket{v} + i \ket{v^\perp} \right) = \frac{ie^{-i\theta}}{\sqrt{2}}\left(\ket{w} - i\ket{w^\perp} \right) \label{eq: psi+} \\
R\ket{\psi_-} = e^{-i2\theta}\ket{\psi_-},&\quad \ket{\psi_-} \equiv \frac{1}{\sqrt{2}}\left(\ket{v} - i \ket{v^\perp} \right) = \frac{-ie^{i\theta}}{\sqrt{2}}\left(\ket{w} + i\ket{w^\perp} \right)
\end{align}
Note our initial state $\ket{v}$ is a uniform superposition of the eigenstates of $R$: $\ket{v} = 2^{-1/2} (\ket{\psi_+} + \ket{\psi_-})$. We perform phase estimation of the rotation operator $R$ on the state $\ket{v}$ with precision $\bigoh(\epsilon')$ and failure probability $\delta = 2^{-\bigoh(k)}\poly(\kappa/\epsilon)$; this requires $\poly(T,k,\kappa,1/\epsilon)$ gates and $\bigoh(k+\log(\kappa/\epsilon))$ extra ancilla qubits to perform, and with probability $1-\delta$ outputs an estimate for either $\theta$ or $-\theta$ with error $\bigoh(\epsilon')$. Taking the absolute value of the sine of the output, we obtain an estimate of $\alpha = \sin \theta$ with error $\bigoh(\epsilon') = \bigoh(\epsilon/\kappa)$. Since $|\alpha - \frac{\|H^{-1}\ket{b}\|}{\kappa}| \le \epsilon'$, this allows us to calculate an estimate for $\|H^{-1}\ket{b}\|$ up to precision $\mathcal{O}(\kappa \epsilon') = \mathcal{O}(\epsilon)$, with probability $1-2^{-\bigoh(k)}\poly(\kappa/\epsilon)$. This completes the second task.

To complete the first task, we note that the residual state after the phase estimation procedure is still a linear combination of $\ket{\psi_+}$ and $\ket{\psi_-}$. If the phase estimation procedure above had output an estimate $\tilde{\theta}_+ \approx_{\bigoh(\epsilon')} \theta$ (this happens with probability at least $(1-\delta)/2$), then because the we set the failure probability to be $\delta$, the residual state must be $\delta$-close to $\ket{\psi_+}$.  From \eqref{eq: psi+} we see that this residual state has constant overlap with $\ket{w}$. Similarly, if the phase estimation procedure had instead output an estimate close to $-\theta$ (this happens with probability at least $(1-\delta)/2$), then we obtain a residual state $\delta$-close to $\ket{\psi_-}$, and hence with constant overlap with $\ket{w}$. By applying $W_H$ to our state and verifying the ancilla qubits are all zero, we obtain the desired state $W_H\ket{w}\approx H^{-1}\ket{b}$ with constant probability.

Finally, if an estimate for $|\bra{a} H^{-1} \ket{b}|$ is desired, we can consider instead the following modification to $\Pi_0$ and $\Pi_1$:
\begin{align}
\Pi_0 &= \ket{0}\bra{0}_{anc} \otimes \ket{b}\bra{b} \\
\Pi'_1 &= W_H^\dagger (\ket{0}\bra{0}_{out} \otimes I) (I_{anc} \otimes \ket{a}\bra{a})(\ket{0}\bra{0}_{out} \otimes I) W_H
\end{align}
Since $(\ket{0}\bra{0}_{out} \otimes I) W_H (\ket{0}^{\otimes \ell} \otimes \ket{b}) = \alpha \ket{0}_{out} \otimes \ket{\psi_b}$, we see that 
\begin{equation}
\bra{v}\Pi_0\Pi_1'\Pi_0\ket{v} = \alpha^2\left|\left(\bra{0}^{\otimes \ell-1} \otimes \bra{a}\right) \ket{\psi_b}\right|^2.
\end{equation}
$\bra{v}\Pi_0\Pi_1'\Pi_0\ket{v}$ can be estimated in the same way that $\alpha$ has been estimated. Recalling that $\| \ket{\psi_b} - \ket{0}^{\otimes \ell-1} \otimes \frac{H^{-1}\ket{b}} {\|H^{-1}\ket{b}\|} \| \le \epsilon'$ and $|\alpha - \frac{\|H^{-1}\ket{b}\|}{\kappa}| \le \epsilon'$, this allows us to estimate $|\bra{a} H^{-1} \ket{b}|$ to $\bigoh (\epsilon)$ precision.
\end{proof}

\section{$\matrixinvert{k(n)}$ is hard for $\BQSPACE[k(n)]$}\label{app: matrixinversion-hardness}
We begin our hardness proof by considering the following simple hard problem for $\unitaryBQSPACE{k(n)}$:\begin{definition}[\qca{k(n)}]
	Given as input is the size-$n$ classical algorithm (Turing machine) that generates a quantum circuit $Q$ acting on $k(n)$ qubits with $T = 2^{\mathcal{O}(k(n))}$ 1- or 2-qubit gates: on input $i$, the algorithm outputs the gate $i$-th gate of the circuit in polynomial time and $\bigoh(k(n))$ space. It is promised that either the matrix entry $|\langle {\zero}|Q|{\zero}\rangle| \geq 2/3$ or $|\langle {\zero}|Q|{\zero}\rangle| \leq 1/3$; determine which is the case.
\end{definition}
\begin{lemma}[Implicit in \cite{bbbv,dawsonnielsen}]\label{lem: quantum circuit acceptance}
$\bigoh(\qca{k(n))}$ is $\unitaryBQSPACE{k(n)}$-hard under classical $\poly(n)$-time, $\bigoh(k(n))$-space reductions.
\end{lemma}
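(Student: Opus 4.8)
The plan is to reduce an arbitrary $L \in \unitaryBQSPACE{k(n)}$ to $\qca{\bigoh(k(n))}$. By Definition~\ref{def: qspace}, such an $L$ is decided by a $\poly(n)$-time $\bigoh(k)$-space uniform family $\{Q_x\}$ of circuits on $m = \bigoh(k(n))$ qubits with $T = 2^{\bigoh(k)}$ gates, whose acceptance probability $p_x := \bra{0^{\otimes m}} Q_x^\dagger \ket{1}\bra{1}_{out} Q_x \ket{0^{\otimes m}}$ satisfies $p_x \ge 2/3$ for $x \in L_{yes}$ and $p_x \le 1/3$ for $x \in L_{no}$. So it suffices to produce, in $\poly(n)$ time and $\bigoh(k)$ space, the classical description of a circuit $Q$ on $\bigoh(k)$ qubits whose corner amplitude $\bra{0\cdots0} Q \ket{0\cdots0}$ equals $p_x$: then the $\qca$ promise holds and its answer coincides with membership in $L$.

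For the circuit $Q$ I would append one fresh qubit $b$ and take $Q = Q_x^\dagger\, \CNOT_{out \to b}\, Q_x\, X_b$, acting on $m+1 = \bigoh(k)$ qubits. Writing $Q_x\ket{0^{\otimes m}} = \ket{\psi_0} + \ket{\psi_1}$ with $\ket{\psi_j}$ the (unnormalized) part whose $out$ qubit equals $\ket{j}$, one has $\ket{\psi_0}\perp\ket{\psi_1}$ and $\|\ket{\psi_1}\|^2 = p_x$. Tracking the all-zeros input through $Q$: $X_b$ puts $b$ in $\ket{1}$, $Q_x$ gives $(\ket{\psi_0}+\ket{\psi_1})\ket{1}_b$, the CNOT yields $\ket{\psi_0}\ket{1}_b + \ket{\psi_1}\ket{0}_b$, and $Q_x^\dagger$ leaves $Q_x^\dagger\ket{\psi_0}\ket{1}_b + Q_x^\dagger\ket{\psi_1}\ket{0}_b$. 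Projecting onto $\ket{0\cdots0}$ kills the $\ket{1}_b$ term, so the amplitude equals $\bra{0^{\otimes m}}Q_x^\dagger\ket{\psi_1} = (\bra{\psi_0}+\bra{\psi_1})\ket{\psi_1} = \|\ket{\psi_1}\|^2 = p_x$, as wanted. Hence $|\bra{0\cdots0}Q\ket{0\cdots0}| \ge 2/3$ iff $x\in L_{yes}$ and $\le 1/3$ iff $x \in L_{no}$.

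Next I would check that $x\mapsto Q$ is a valid reduction. The circuit $Q$ has $m+1 = \bigoh(k)$ qubits and $2T+2 = 2^{\bigoh(k)}$ gates (splitting any $3$-qubit Toffoli gates into $\bigoh(1)$ one- and two-qubit gates, should $\qca$ be read strictly). A classical machine, given $i$, determines which block of $Q$ the $i$-th gate falls in --- the leading $X_b$, the copy of $Q_x$, the CNOT, or the copy of $Q_x^\dagger$ --- after padding $Q_x$ to a fixed $2^{ck}$ gates with identities so the block boundaries are functions of $k(n)$; inside the $Q_x$ block it calls the given generator of $Q_x$, and inside the $Q_x^\dagger$ block it calls that generator in reversed order and replaces each gate by its adjoint (the identity operation for self-inverse gates such as Hadamard and Toffoli, and an $\bigoh(1)$-bit computation in general). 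All of this runs in $\poly(n)$ time and $\bigoh(k)$ space, as required.

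The one slightly subtle point --- brief as it is --- is the choice of $Q$, namely arranging for the corner amplitude to be large in exactly one of the two cases. The naive candidate $Q_x^\dagger\, Z_{out}\, Q_x$ has corner entry $1 - 2p_x$, whose absolute value exceeds $1/3$ in both the yes and no cases (they differ only by a sign, which the $\qca$ promise cannot see), and amplifying $Q_x$ beforehand does not repair this. The ancilla-$b$ ``copy then uncompute'' step --- essentially a Hadamard test --- is what linearly converts the acceptance probability $p_x$ into an honest nonnegative amplitude, and is what makes the hardness reduction go through.
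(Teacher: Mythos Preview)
Your proof is correct and essentially identical to the paper's. Both construct $Q$ by adding one fresh ancilla $b$, running $Q_x$, CNOT-ing $out$ into $b$, flipping $b$, and uncomputing with $Q_x^\dagger$, so that the corner amplitude equals $p_x$; your $Q = Q_x^\dagger\,\CNOT_{out\to b}\,Q_x\,X_b$ differs from the paper's $Q_x^\dagger\,X_b\,\CNOT_{out\to b}\,Q_x$ only in the position of $X_b$, which commutes with $Q_x$ since they act on disjoint qubits.
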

\begin{proof}This lemma is implicit in e.g., \cite{bbbv,dawsonnielsen}.  We include the proof here for completeness.  Suppose we are given an $x\in \{0,1\}^n$ and would like to determine if $x\in L_{yes}$ for some $L=\{L_{yes},L_{no}\}\in\unitaryBQSPACE{k}$.  There is a quantum circuit on $k(n)$ qubits, $Q_x=U_TU_{T-1}\cdots U_1$ of size $T=2^{\mathcal{O}(k(n))}$ that decides $x$.  That is, 
\begin{equation}
Q_x|\zero\rangle = \sqrt{p_x} |1\rangle_{out}|\psi_x^1\rangle + \sqrt{1-p_x} |0\rangle_{out} |\psi_x^0\rangle.
\end{equation}
where $out$ indicates the designated output qubit, and $|\psi_x^1\rangle$, $|\psi_x^0\rangle$ are $(k-1)$-qubit states; $p_x$ is the probability that the computation accepts, so $p_x \ge 2/3$ if $x \in L_{yes}$ and $p_x \le 1/3$ if $x \in L_{no}$. Note that the uniformity condition on $\unitaryBQSPACE{k}$ guarantees the existence of a classical algorithm that generates $Q_x$.

We now describe a reduction which creates a related circuit $\tilde{Q}_x$ with a single matrix entry that is proportional to the acceptance probability of $Q_x$.  This new circuit $\tilde{Q}_x$ takes the same number of input qubits as $Q_x$ as well as an additional ancillary qubit.  $\tilde{Q}_x$ runs $Q_x$, then using a single CNOT gate copies the state of the output qubit to the ancillary qubit, flips the ancillary qubit, and finally applies the inverse, $Q_x^{\dagger}$, to the input qubits. It is straightforward to check that
\begin{equation}
\langle 0|\langle \zero|\tilde{Q}_x|\zero\rangle|0\rangle = p_x.
\end{equation}
Therefore knowing the single matrix entry $\langle 0|\langle \zero|\tilde{Q}_x|\zero\rangle|0\rangle$ is sufficient to decide if $x \in L_{yes}$. Moreover, $\tilde{Q}_x$ can be computed from $Q_x$ using polynomial time and $\mathcal{O}(k)$ space, and this completes the proof.
\end{proof}

\begin{theorem}
$\matrixinvert{\bigoh(k(n))}$ is $\unitaryBQSPACE{k(n)}$-hard under classical reductions computable in polynomial time and $\bigoh (k(n))$ space.
\end{theorem}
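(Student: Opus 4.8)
The plan is to reduce from the \qca{k(n)} problem, which is $\unitaryBQSPACE{k(n)}$-hard by Lemma~\ref{lem: quantum circuit acceptance}. Combining the gadget in the proof of that lemma with a constant number of reversible repetitions of the underlying $\unitaryBQSPACE$ computation (amplification that stays within $\bigoh(k)$ space precisely because the computation is unitary), it suffices to give a $\poly(n)$-time, $\bigoh(k)$-space reduction that, given the classical description of a circuit $Q = U_T U_{T-1}\cdots U_1$ on $N = \bigoh(k)$ qubits with $T = 2^{\bigoh(k)}$ one- and two-qubit gates and the promise that $\langle\zero|Q|\zero\rangle \ge 0.9$ (yes instances) or $\le 0.1$ (no instances), outputs an efficiently encoded $2^{\bigoh(k)}\times 2^{\bigoh(k)}$ PSD matrix $H$ with $\kappa^{-1}I\preceq H\preceq I$ for some $\kappa = 2^{\bigoh(k)}$, together with basis-state indices $s,t$, such that $|H^{-1}(s,t)|$ equals a fixed positive constant times $\langle\zero|Q|\zero\rangle$.

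I would build the Feynman--Kitaev style ``clock'' matrix underlying the HHL hardness proof. Fix a known power of two $N_{\mathrm{clock}} = 2^{\bigoh(k)}$ with $N_{\mathrm{clock}}/2 \ge T$ (only an \emph{upper} bound on $T$ is needed, and the space bound provides one), introduce a clock register of $\log_2 N_{\mathrm{clock}} = \bigoh(k)$ qubits indexing times $t\in\{1,\dots,N_{\mathrm{clock}}\}$, and set $V_t := U_t$ for $1\le t\le T$ and $V_t := I$ for $T < t \le N_{\mathrm{clock}}$. On the clock $\otimes$ work space (dimension $2^{\bigoh(k)}$) define the Hermitian matrix
\[
A \;=\; \sum_{t=1}^{N_{\mathrm{clock}}}\ketbra{t}{t}\otimes I \;-\; \tfrac12\sum_{t=1}^{N_{\mathrm{clock}}-1}\bigl(\ketbra{t+1}{t}\otimes V_t + \ketbra{t}{t+1}\otimes V_t^{\dagger}\bigr),
\]
and output $H = A/2$, $s$ equal to (the encoding of) $\ket{\tau}\otimes\ket{\zero}$, and $t$ equal to $\ket{1}\otimes\ket{\zero}$, where $\tau := T_{\max}+1$ for a fixed $T_{\max}$ with $T\le T_{\max}\le N_{\mathrm{clock}}/2$, so that $V_{\tau-1}\cdots V_1 = Q$ regardless of the exact value of $T$.

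The analysis rests on conjugating by the block-diagonal unitary $W = \sum_t\ketbra{t}{t}\otimes(V_{t-1}\cdots V_1)$ (empty product $=I$), which gives $W^{\dagger}AW = \tfrac12\Delta_{N_{\mathrm{clock}}}\otimes I$, where $\Delta_n$ is the $n\times n$ path (Dirichlet) Laplacian with $2$ on the diagonal and $-1$ off it. Hence $A$ is PSD with $\|A\| < 2$ and $\lambda_{\min}(A) = 1-\cos\tfrac{\pi}{N_{\mathrm{clock}}+1} = \Theta(1/N_{\mathrm{clock}}^2)$, so $H$ satisfies $\kappa^{-1}I\preceq H\preceq I$ for $\kappa = \Theta(N_{\mathrm{clock}}^2) = 2^{\bigoh(k)}$. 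The matrix $A$ is block-tridiagonal with off-diagonal blocks that are scaled one- or two-qubit gates, so each row has $\bigoh(1)$ nonzero entries, and given a row index these are computed from a single gate of $Q$ in $\poly(n)$ time and $\bigoh(k)$ space by the uniformity of \qca{k(n)}; thus $H$ is efficiently encoded. Finally, the same change of basis gives $H^{-1}(s,t) = 4\,(\Delta_{N_{\mathrm{clock}}}^{-1})_{\tau,1}\,\langle\zero|Q|\zero\rangle$, and the closed form $(\Delta_n^{-1})_{ij} = \min(i,j)\,(n+1-\max(i,j))/(n+1)$ together with $\tau\le N_{\mathrm{clock}}/2$ shows the coefficient lies in a fixed interval bounded away from $0$ and $\infty$; hence the $0.9$ versus $0.1$ promise on $\langle\zero|Q|\zero\rangle$ becomes $|H^{-1}(s,t)|\ge b$ versus $|H^{-1}(s,t)|\le a$ for suitable constants $0\le a<b\le 1$, which is exactly the \matrixinvert{\bigoh(k)} promise.

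The hard part is satisfying all three requirements simultaneously: the proportionality constant between $H^{-1}(s,t)$ and $\langle\zero|Q|\zero\rangle$ must be $\Theta(1)$ (not exponentially small), the condition number must stay at $2^{\bigoh(k)}$, and the encoding must remain local and $\bigoh(k)$-space computable. A forward-only clock of length $T+1$ keeps $\kappa$ fine but drives the relevant inverse entry down to $\Theta(1/T)$, i.e.\ an exponentially small gap; padding the chain so that the readout time $\tau$ sits a constant fraction of the way from the far boundary is precisely what pins the Dirichlet Green's-function entry to $\Theta(1)$. The remaining issues are minor bookkeeping: the reduction cannot compute $T$ exactly in polynomial time, so one works with an a priori bound $T_{\max}$ and a power-of-two clock length, and the initial constant-factor reversible amplification provides the slack to fix the final constants $a<b$.
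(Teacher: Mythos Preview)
Your proof is correct and takes a genuinely different route from the paper's. The paper follows HHL directly: it builds a cyclic ``forward--idle--backward'' unitary $U$ on a $3T$-step clock with $U^{3T}=I$, sets
\[
H=\begin{pmatrix}0 & I-e^{-1/T}U\\ I-e^{-1/T}U^{\dagger} & 0\end{pmatrix},
\]
and reads off an entry of $(I-e^{-1/T}U)^{-1}=\sum_{j\ge 0}e^{-j/T}U^{j}$ that equals $\tfrac{e^{-t/T}}{1-e^{-3}}\langle\zero|Q|\zero\rangle$ for any $t\in[T,2T]$; the geometric damping is what pins the condition number to $\bigoh(T)$. You instead use the Feynman--Kitaev propagation term and diagonalize it via the history change-of-basis $W$ to the open-path Laplacian $\Delta$, whose spectrum and Green's function are explicit textbook formulas. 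This is more elementary---no damping trick, no block-antidiagonal doubling, and the matrix is manifestly PSD (note the paper's $H$ is Hermitian but not PSD as written)---at the modest cost of condition number $\Theta(N_{\mathrm{clock}}^{2})$ rather than $\Theta(T)$; both are $2^{\bigoh(k)}$, so the theorem is unaffected. A pleasant side effect is that your matrix is essentially the same $H_{\mathrm{prop}}$ the paper uses in its hardness proof for $\spechamiltonian{k(n)}$ (Lemma~\ref{lem: specham-hardness}), so a single construction serves both completeness results. One small point of bookkeeping: with $\tau\le N_{\mathrm{clock}}/2$ the coefficient $4(\Delta^{-1})_{\tau,1}=4(N_{\mathrm{clock}}+1-\tau)/(N_{\mathrm{clock}}+1)$ lies in $[4/3,4)$, so in the YES case $|H^{-1}(s,t)|$ exceeds $1$; this is still a valid instance of Definition~\ref{def: matrix invert} (take $b=1$, $a=0.4$), but it is worth being explicit that you are using $b=1$ rather than a coefficient bounded below $1$.
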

\begin{proof}
We will show that $\matrixinvert{k(n)}$ is as hard as $\qca{k(n)}$.  Given an instance of the latter, i.e., a circuit on $k(n)$ qubits, $Q=U_TU_{T-1}\cdots U_1$ with $T = 2^{\mathcal{O}(k(n))}$, define the following unitary of dimension $3T2^k$:
\[U=\sum_{t=1}^{T}|t+1\rangle\langle t|\otimes U_t+|t+T+1\rangle\langle t+T|\otimes I+|t+2T+1\bmod{3T}\rangle\langle t+2T|\otimes U_{T+1-t}^{\dagger}\]	
Crucially, note that for any $t$ in the range $[T,2T]$ and any state $|\psi\rangle$ on $k(n)$ qubits: 
\begin{equation}\label{eqn:fullclock}
	U^t|1\rangle|\psi\rangle=|t+1\rangle\otimes Q|\psi\rangle
\end{equation}
Furthermore $U^{3T}=I$, a fact we will soon exploit. We now construct the Hermitian matrix:
\begin{equation}
H=\begin{bmatrix}
    0 & I-Ue^{-1/T}\\
   I-U^{\dagger}e^{-1/T} &  0
\end{bmatrix}
\end{equation}
 $H$ has dimension $N=6T2^k$ and condition number $\kappa\leq 2T=2^{\bigoh (k)}$.  Notice that given as input a description of $Q$ we can compute each entry of $H$ to within $2^{-\mathcal{O}(k(n))}$ error in $\mathcal{O}(k(n))$ space, via space efficient algorithms for exponentiation \cite{reif}.
   Furthermore, $H^{-1}$ is the following matrix:
\begin{equation}
\begin{bmatrix}
    0 & \left(I-U^\dagger e^{-1/T}\right)^{-1}\\
   \left(I-Ue^{-1/T}\right)^{-1} &  0
\end{bmatrix}
\end{equation}
$\left(I-Ue^{-1/T}\right)^{-1}$ is just the power series $\sum_{j=0}^\infty U^j e^{-j/T}=\frac{1}{1-e^{-3}}\sum_{j=0}^{3T-1} U^{j} e^{-j/T}$, where we've used $U^{3T}=I$. Therefore for any fixed $t \in [0,3T-1]$,
 \begin{align}
\langle\zero|\langle t+1|\left(I-Ue^{-1/T}\right)^{-1}|1\rangle|\zero\rangle
&= \frac{1}{1-e^{-3}}\langle\zero|\langle t+1|\left(\sum_{j'=0}^{3T-1} U^{j'} e^{-j'/T}\right)|1\rangle|\zero\rangle
\end{align}
which is a particular entry in $H^{-1}$. In the second line we've used $j = 3Tx + j'$ for some integers $x$ and $j' \in [0,3T-1]$, and that $U^{3T}=I$.  For any $t \in [T,2T]$, as a consequence of Equation \ref{eqn:fullclock} the above quantity equals
\begin{equation}
\frac{e^{-t/T}}{1-e^{-3}}\langle\zero|Q|\zero\rangle.
\end{equation}
In particular, an estimation of this entry of $H^{-1}$ will solve \qca{k(n)}.
\end{proof}

\section{Proof of Lemma \ref{lem: pspace upper bound}} \label{app: pspace upper bound}
\subsection{In-place gap amplification of $\QMA$ protocols with phase estimation}\label{app: space efficient amplification}

We start out by proving the following lemma, which proves ``in-place'' gap amplification of $\QMA$ using phase estimation (see also the similar result of Nagaj et. al, Lemma \ref{lem: gap amp 1} in Appendix \ref{app: gap amplification}).
\begin{lemma} \label{lem: gap amp 2}
For any functions $t,k,r>0$, 
\[
\bddQMA{t}{k}{m}{c}{s}\subseteq\bddQMA{\mathcal{O}\left(\frac{t2^r}{c-s}\right)}{\mathcal{O}\left(k+r+\log\left(\frac{1}{c-s}\right)\right)}{m}{1-2^{-r}}{2^{-r}}.
\]
\end{lemma}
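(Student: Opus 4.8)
The plan is to adapt the Marriott–Watrous in-place amplification scheme, replacing its iterated two-outcome measurement with a single phase estimation of the Marriott–Watrous walk operator, and then bounding the eigenvalue gap so that the phase estimation cleanly separates the YES and NO cases. Concretely, let $V_x$ be the $\bddQMA{t}{k}{m}{c}{s}$ verifier acting on a witness register of $m$ qubits and an ancilla register of $k$ qubits, and consider the two projectors $\Pi_{in} = I_{wit}\otimes\ket{0^k}\bra{0^k}$ and $\Pi_{acc} = V_x^\dagger\bigl(\ket{1}\bra{1}_{out}\otimes I\bigr)V_x$. The operator $R = (2\Pi_{in}-I)(2\Pi_{acc}-I)$ preserves each two-dimensional invariant subspace spanned by a witness-dependent pair of vectors; on the subspace attached to an eigenvalue $p$ of $\Pi_{in}\Pi_{acc}\Pi_{in}$ (the acceptance probability of the corresponding witness branch), $R$ acts as a rotation with eigenvalues $e^{\pm 2i\arccos\sqrt{p}}$. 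The key point is that the map $p \mapsto \arccos\sqrt{p}$ is bi-Lipschitz away from the endpoints, so that a gap of $c - s$ in the acceptance probability translates into a phase gap of order $(c-s)$ (up to constants) between the YES threshold and the NO threshold.

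First I would set up the amplified verifier $V_x'$ as follows: it takes the same witness register, prepares $\ket{0^k}$ in the ancilla, and runs phase estimation of $R$ with precision parameter $\delta = \Theta(c-s)$ and failure probability $2^{-\Theta(r)}$; this costs $\mathcal{O}(1/\delta \cdot 2^r) = \mathcal{O}(2^r/(c-s))$ controlled applications of $R$ (hence of $V_x$ and $V_x^\dagger$), giving the claimed time bound $\mathcal{O}(t 2^r/(c-s))$, and uses $\mathcal{O}(r + \log(1/(c-s)))$ additional qubits for the phase register plus $\mathcal{O}(k)$ for the ancilla — this is exactly where our improvement over Nagaj–Wocjan–Zhang lies, since we run phase estimation once to the necessary precision rather than composing $r$ separate estimation procedures. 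The amplified verifier accepts iff the estimated phase $\tilde\phi$ satisfies $\cos^2(\tilde\phi/2) \ge (c+s)/2$ (i.e. the estimated acceptance probability of the witness branch exceeds the midpoint). Then I would verify: (i) in the YES case, feeding the optimal witness $\ket{\psi}$ — which by the two-dimensional invariant-subspace decomposition can be taken to lie in a single eigenbranch of acceptance probability $\ge c$ — phase estimation outputs $\tilde\phi$ within $\delta$ of $\pm\arccos\sqrt{p}$ with $p \ge c$, so acceptance happens with probability $\ge 1 - 2^{-\Theta(r)}$; (ii) in the NO case, every witness decomposes over eigenbranches each with acceptance probability $\le s$, and since the phase-estimation precision $\delta$ is small enough that $\cos^2((\phi \pm \delta)/2) < (c+s)/2$ whenever $\cos^2(\phi/2) \le s$, the acceptance probability is at most $2^{-\Theta(r)}$. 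Rescaling $r$ by a constant yields completeness $1 - 2^{-r}$ and soundness $2^{-r}$.

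The main obstacle I anticipate is the careful accounting of the precision-versus-gap tradeoff in the presence of phase-estimation error, combined with the fact that a general NO-case witness is a \emph{superposition} across many invariant two-dimensional subspaces with differing rotation angles. One must argue that phase estimation applied to such a superposition still yields, with high probability, an outcome consistent with \emph{some} branch's angle, and that since \emph{all} those angles correspond to acceptance probability $\le s$, the acceptance test fails with high probability regardless — this requires noting that the acceptance predicate is a threshold on the estimated phase and that the "bad" phase region (where the test would wrongly accept) is disjoint from a $\delta$-neighborhood of every NO-branch angle, so the total weight phase estimation places there is at most the failure probability $2^{-\Theta(r)}$. A secondary technical point is ensuring all classical post-processing of the phase register (computing $\cos^2(\tilde\phi/2)$ and comparing to $(c+s)/2$) fits in $\mathcal{O}(k + r + \log(1/(c-s)))$ space, which follows from the space-efficient arithmetic of Reif \cite{reif} assumed earlier. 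I would also remark that the requirement $c$ and $s$ be space-computable is what makes the threshold $(c+s)/2$ available to the amplified verifier.
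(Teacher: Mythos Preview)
Your proposal is correct and follows essentially the same approach as the paper: define the Marriott--Watrous reflections, run a single phase estimation on $R_1R_0$ with precision $\Theta(c-s)$ and failure probability $2^{-r}$, then threshold the outcome. The only cosmetic difference is the acceptance test: the paper thresholds directly on the estimated phase at $(\phi_c+\phi_s)/2$ with $\phi_c=\arccos\sqrt{c}/\pi$, $\phi_s=\arccos\sqrt{s}/\pi$, rather than converting back to $\cos^2(\tilde\phi/2)$ and comparing to $(c+s)/2$; your more detailed treatment of the NO-case superposition and the Lipschitz accounting is sound and in fact more explicit than the paper, which simply cites the standard in-place error-reduction analysis of \cite{mw05,nwz11}.
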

\begin{proof}
	Let $L=(L_{yes}, L_{no})$ be a promise problem in $\QMA(c,s)$ and $\{V_x\}_{x\in\{0,1\}^*}$ the corresponding uniform family of verification circuits.
Define the projectors:
\begin{equation}
\Pi_0 = I_m \otimes \ket{0^k}\bra{0^k}, \quad \Pi_1 = V^\dagger_x \left(\ket{1}\bra{1}_{out} \otimes I_{m+k-1}\right) V_x
\end{equation}
and the corresponding reflections
$R_0 = 2\Pi_0 - I, R_1 = 2\Pi_1 - I$.
Define $\phi_c = \arccos\sqrt{c}/\pi$ and $\phi_s = \arccos\sqrt{s}/\pi$ (recalling that these functions can be computed to precision $\bigoh (c-s)$ in space $\bigoh (\log[1/(c-s)])$. 
Now consider the following procedure:
\begin{compactenum}
\item Perform phase estimation of the operator $R_1R_0$ on the state $\ket{\psi}\otimes \ket{0^k}$, with precision $\bigoh (c-s)$ and failure probability $2^{-r}$.
\item Output YES if the phase is at most $(\phi_{c}+\phi_{s})/2$; otherwise output NO.
\end{compactenum}
Phase estimation of an operator $U$ up to precision $a$ and failure probability $\epsilon$ requires $\alpha := \lceil\log_2(1/a)\rceil + \log_2[2+1/(2\epsilon)]$ additional ancilla qubits and $2^\alpha = \mathcal{O}(1/(a\epsilon))$ applications of the control-$U$ operation (see e.g. \cite{nc00}).  Thus, the above procedure can be implemented by a circuit of size $\mathcal{O}(2^{r}t/(c-s))$ using $\mathcal{O}(r+\log[1/(c-s)])$ extra ancilla qubits. Using the standard analysis of in-place $\QMA$ error reduction \cite{mw05,nwz11}, it can be shown that this procedure has completeness probability at least $1-2^{-r}$ and soundness at most $2^{-r}$.
\end{proof}

In Appendix \ref{app: gap amplification} we will prove the following stronger error reduction lemma that gives the same space bound but uses less time. This better time bound will be required for proving Lemma \ref{lem: specham-hardness}.

\begin{lemma} \label{lem: gap amp}
For any functions $t,k,r>0$, 
\[
\bddQMA{t}{k}{m}{c}{s}\subseteq\bddQMA{\mathcal{O}\left(\frac{rt}{c-s}\right)}{\mathcal{O}\left(k+r+\log\left(\frac{1}{c-s}\right)\right)}{m}{1-2^{-r}}{2^{-r}}.
\]
\end{lemma}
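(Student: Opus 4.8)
The goal is to improve the time bound of Lemma \ref{lem: gap amp 2} from $\bigoh(2^r t/(c-s))$ down to $\bigoh(rt/(c-s))$, while keeping the same space bound $\bigoh(k+r+\log(1/(c-s)))$. The plan is to avoid doing a single phase estimation to exponential precision, and instead to use a two-stage scheme: (i) first perform a coarse phase estimation of the Marriott--Watrous reflection operator $R_1R_0$ to precision $\bigoh(c-s)$ but with only \emph{constant} failure probability, which yields a one-bit YES/NO decision with a constant completeness-soundness gap; then (ii) boost that constant gap to $1-2^{-r}$ versus $2^{-r}$ by \emph{sequential repetition} of the coarse procedure followed by a majority vote. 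Since the coarse stage uses only $\bigoh(t/(c-s))$ gates and $\bigoh(\log(1/(c-s)))$ ancillae, and we repeat it $\bigoh(r)$ times, the total gate count is $\bigoh(rt/(c-s))$; the majority vote over $\bigoh(r)$ one-bit outcomes needs $\bigoh(r)$ extra qubits (or can be done with a counter in $\bigoh(\log r)$ space), giving the claimed space bound.

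The first step is to recall the Marriott--Watrous/Nagaj et al.\ setup exactly as in the proof of Lemma \ref{lem: gap amp 2}: define $\Pi_0 = I_m \otimes \ket{0^k}\bra{0^k}$, $\Pi_1 = V_x^\dagger(\ket{1}\bra{1}_{out}\otimes I)V_x$, and the reflections $R_0 = 2\Pi_0-I$, $R_1 = 2\Pi_1-I$. On the two-dimensional invariant subspace spanned by $\ket{\psi}\otimes\ket{0^k}$ and its partner, $R_1R_0$ acts as a rotation by an angle $2\theta$ where $\cos^2\theta$ equals the acceptance probability of the witness $\ket\psi$. A phase estimation of $R_1R_0$ on $\ket\psi\otimes\ket{0^k}$ to precision $\bigoh(c-s)$ with failure probability a small constant (say $1/8$) therefore distinguishes, using the threshold $(\phi_c+\phi_s)/2$ with $\phi_c=\arccos\sqrt c/\pi$, $\phi_s=\arccos\sqrt s/\pi$ as before, the case "some witness is accepted with probability $\ge c$" from "every witness is accepted with probability $\le s$"; this gives a $\bddQMA{\bigoh(t/(c-s))}{\bigoh(k+\log(1/(c-s)))}{m}{7/8}{1/8}$ protocol. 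This part is essentially verbatim from Lemma \ref{lem: gap amp 2} with the failure probability relaxed to a constant, so the ancilla count drops the $r$-dependent term.

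The second step is the amplification. One subtlety of in-place $\QMA$ amplification is that after running the verification procedure the witness register is disturbed, so we cannot simply "run it again on a fresh copy." The standard resolution — which is exactly why Marriott--Watrous and Nagaj et al.\ phrase their results in terms of $R_0,R_1$ — is that the procedure above, being (up to the phase-estimation ancillae which we uncompute/measure) built from the reflections $R_0,R_1$, can be \emph{iterated} on the same witness register: we run the coarse procedure, record its one-bit verdict into a fresh output qubit, uncompute the phase-estimation ancillae, and repeat $N = \Theta(r)$ times, finally accepting iff the majority of the $N$ recorded verdicts is YES. Because in the YES case a good witness $\ket\psi$ lies (approximately) in the span of the two eigenvectors of $R_1R_0$ and is only rotated within that invariant subspace by each iteration, the $N$ verdicts are, up to the phase-estimation error, i.i.d.\ biased toward YES with bias $\ge 7/8$ (and toward NO with bias $\ge 7/8$ in the NO case, for \emph{every} $\ket\psi$); a Chernoff bound then gives completeness $\ge 1-2^{-r}$ and soundness $\le 2^{-r}$ for $N = \bigoh(r)$. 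The total number of uses of $V_x$ and $V_x^\dagger$ is $N$ times the coarse cost, i.e.\ $\bigoh(rt/(c-s))$, and the extra space is $\bigoh(\log(1/(c-s)))$ for one coarse run (reused) plus $\bigoh(r)$ to hold the verdict bits, plus a space bound on the final threshold/majority computation which is $\bigoh(r)$-computable; altogether $\bigoh(k+r+\log(1/(c-s)))$ as claimed.

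The main obstacle, and the place the argument needs care, is the second step's claim that the $N$ repeated coarse verdicts behave like independent biased coins despite all acting on the \emph{same} witness register. This is not automatic: one must invoke the structure of the Marriott--Watrous two-dimensional invariant subspace to argue that each iteration keeps the state inside (a small neighborhood of) that subspace, so that the "effective state" fed to the next iteration still has acceptance angle on the correct side of the threshold, and then control the accumulation of phase-estimation errors across $\bigoh(r)$ rounds (choosing the per-round failure probability a small enough constant and precision $\bigoh(c-s)$ makes the total error negligible compared to the Chernoff gap). Handling the imperfect (constant-failure) phase estimation — in particular that a "failed" round can produce an arbitrary verdict but this happens with small constant probability independently enough — is exactly the standard analysis of in-place $\QMA$ error reduction of \cite{mw05,nwz11}, so I would cite that analysis for the technical heart and only spell out the bookkeeping that gives the improved resource counts.
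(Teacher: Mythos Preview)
Your two-stage plan---first boost to a constant gap via one coarse phase estimation, then amplify with $\Theta(r)$ repetitions and a majority vote---is exactly the paper's strategy. The paper's execution, however, is a two-line composition of the already-stated lemmas: apply Lemma~\ref{lem: gap amp 2} with $r=2$ to obtain
\[
\bddQMA{t}{k}{m}{c}{s}\subseteq\bddQMA{\mathcal{O}\!\left(\tfrac{t}{c-s}\right)}{\mathcal{O}\!\left(k+\log\tfrac{1}{c-s}\right)}{m}{3/4}{1/4},
\]
and then apply Lemma~\ref{lem: gap amp 1} (the Nagaj--Wocjan--Zhang median-of-$r$-trials lemma) as a black box to this constant-gap protocol. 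Since the new gap $c'-s'=1/2$ is constant, Lemma~\ref{lem: gap amp 1}'s space overhead $\bigoh(r\log(1/(c'-s')))$ collapses to $\bigoh(r)$, while the $\bigoh(\log(1/(c-s)))$ ancillae from stage~(i) have already been absorbed into the new workspace $k'$. No explicit ancilla uncomputation is needed anywhere.

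The place your plan becomes more delicate than necessary is stage~(ii): you propose to uncompute the phase-estimation register after each coarse trial so as to reuse it, and you defer to \cite{mw05,nwz11} for the claim that the repeated verdicts then behave like i.i.d.\ coins. But those references do \emph{not} uncompute the phase-estimation register between trials---that is precisely why Lemma~\ref{lem: gap amp 1} carries the multiplicative $r\log(1/(c-s))$ space term rather than the additive one you want. Making your explicit uncomputation rigorous would require tracking the garbage left by imperfect (constant-failure) phase estimation across $\Theta(r)$ rounds, which is an analysis not in those references. The paper's modular route sidesteps this entirely: once the gap is constant, each of the $r$ outer phase-estimation trials in Lemma~\ref{lem: gap amp 1} needs only $\bigoh(1)$ fresh ancillae, so they can simply be left in place.
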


Thus, we get the following corollaries:
\begin{corollary} \label{obvious2} For any $r = \mathcal{O}(k)$,
$
\unitaryQSPACE{k}{c}{c-2^{-\bigoh (k)}} \subseteq
\unitaryQSPACE{\Theta(k)}{1-2^{-r}}{2^{-r}}$.
\end{corollary}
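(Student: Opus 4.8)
The plan is to view a $\unitaryQSPACE{k}{c}{s}$ computation as a special case of a $\QMA$ protocol --- namely one with an empty witness --- and then apply the space-efficient gap amplification of Lemma \ref{lem: gap amp}. First I would record the inclusion $\unitaryQSPACE{k}{c}{c-2^{-\bigoh(k)}}\subseteq\bddQMA{2^{\bigoh(k)}}{\bigoh(k)}{0}{c}{c-2^{-\bigoh(k)}}$. Indeed, by Definition \ref{def: qspace} a $\QSPACE$ computation is given by a family of circuits $Q_x$ on $\bigoh(k)$ qubits with $2^{\bigoh(k)}$ gates, each gate generated in $\poly(n)$ time and $\bigoh(k)$ space; since $\poly(n)\le 2^{\bigoh(\log n)}\le 2^{\bigoh(k)}$, this family is in particular $2^{\bigoh(k)}$-time $\bigoh(k)$-space uniform, and taking the witness register to be empty ($m=0$) the acceptance condition of Definition \ref{def: qspace} is literally the $\QMA$ acceptance condition with a trivial $0$-qubit witness. (The parameters $c,s$ are computable in $\bigoh(k)$ space by hypothesis, which is exactly what the amplification needs in order to compute the thresholds $\phi_c,\phi_s$.)

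Next I would apply Lemma \ref{lem: gap amp} (the weaker Lemma \ref{lem: gap amp 2} also suffices) with $t=2^{\bigoh(k)}$, $m=0$, gap $c-s=2^{-\bigoh(k)}$, and target error $2^{-r}$ for the given $r=\bigoh(k)$. This yields a $\bddQMA{\mathcal{O}(rt/(c-s))}{\mathcal{O}(k+r+\log(1/(c-s)))}{0}{1-2^{-r}}{2^{-r}}$ protocol. Substituting $r=\bigoh(k)$ and $1/(c-s)=2^{\bigoh(k)}$, the new time bound is $\bigoh(k)\cdot 2^{\bigoh(k)}\cdot 2^{\bigoh(k)}=2^{\bigoh(k)}$, the new space bound is $\bigoh(k+k+k)=\bigoh(k)$, and the witness size is still $0$; so we obtain a $\bddQMA{2^{\bigoh(k)}}{\bigoh(k)}{0}{1-2^{-r}}{2^{-r}}$ protocol.

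Finally I would translate this witness-free protocol back into $\unitaryQSPACE{\Theta(k)}{1-2^{-r}}{2^{-r}}$. The one point requiring care is that $\QSPACE$ demands $\poly(n)$-time-per-gate uniformity, whereas the abstract $\QMA$ definition above only promises $2^{\bigoh(k)}$-time uniformity, so I would check that the amplified circuit is in fact $\poly(n)$-time $\bigoh(k)$-space uniform. This holds because the amplified verifier has a rigid shape: it is the phase-estimation circuit for the reflection product $R_1R_0$, consisting of a fixed layer of Hadamard gates and an inverse quantum Fourier transform on $\bigoh(k)$ control qubits, together with $2^{\bigoh(k)}$ controlled applications of $R_1R_0$, each assembled from $\bigoh(1)$ copies of $Q_x$, $Q_x^\dagger$, and elementary reflections about $\ket{0^k}$. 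Given an $\bigoh(k)$-bit index $i$ into this circuit, one determines in $\poly(n)$ time and $\bigoh(k)$ space which block the $i$-th gate lies in, and, when that block is a copy of $Q_x$ or $Q_x^\dagger$, defers to the original $\QSPACE$ generator (which is $\poly(n)$-time $\bigoh(k)$-space by assumption); thus composition preserves $\poly(n)$-time $\bigoh(k)$-space uniformity. Since the amplified circuit has $2^{\bigoh(k)}$ gates on $\bigoh(k)$ qubits, uses no intermediate measurements, and has no witness, it is a legitimate $\unitaryQSPACE{\Theta(k)}{1-2^{-r}}{2^{-r}}$ computation, giving the claimed containment. The main --- and essentially only non-mechanical --- obstacle is precisely this uniformity bookkeeping, i.e.\ checking that the amplification of Lemma \ref{lem: gap amp} is ``structure-preserving'' in the above sense; everything else is just substituting parameters.
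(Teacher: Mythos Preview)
Your proposal is correct and takes essentially the same approach as the paper: the paper's proof is the single line ``This follows from Lemma~\ref{lem: gap amp} by taking $m=0$, $s = c-2^{-\Theta(k)}$, and $r = \Theta(k)$,'' which is exactly your argument stripped of the bookkeeping. The uniformity check you spell out---that the amplified circuit retains $\poly(n)$-time-per-gate, $\bigoh(k)$-space generability because it is built from a fixed phase-estimation template wrapped around copies of $Q_x$ and $Q_x^\dagger$---is a detail the paper leaves implicit (cf.\ the discussion in Appendix~\ref{app: space bounded}), but your verification of it is sound.
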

This corollary shows that error reduction is possible for unitary quantum $\bigoh(k)$-space bounded classes, as long as the completeness-soundness gap is at least $2^{-\bigoh (k)}$.
\begin{proof}
This follows from Lemma \ref{lem: gap amp} by taking $m=0$, $s = c-2^{-\Theta(k)}$, and $r = \Theta(k)$.
\end{proof}
\begin{corollary}\label{obvious1}
\[
\bddQMA{t}{k}{m}{c}{c-2^{-\Theta(k)}}\subseteq\bddQMA{\mathcal{O}\left(t2^{\Theta(k)}\right)}{\mathcal{O}\left(k\right)}{m}{1-2^{-(m+2)}}{2^{-(m+2)}}.
\]
\end{corollary}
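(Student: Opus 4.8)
The plan is to obtain this as an immediate consequence of Lemma \ref{lem: gap amp}, in exactly the same spirit as Corollary \ref{obvious2}, the only differences being that here we keep the witness register rather than setting $m=0$, and that we tune the amplification parameter $r$ to land precisely on the target gap. So first I would take an arbitrary problem in $\bddQMA{t}{k}{m}{c}{c-2^{-\Theta(k)}}$ and apply Lemma \ref{lem: gap amp} to its verification circuit family with the error parameter chosen to be $r = m+2$; the lemma then outputs a protocol with completeness $1-2^{-r} = 1-2^{-(m+2)}$ and soundness $2^{-r} = 2^{-(m+2)}$, exactly as required.

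It then remains to read off the time and space parameters. Writing $s = c - 2^{-\Theta(k)}$, we have $c-s = 2^{-\Theta(k)}$, hence $1/(c-s) = 2^{\Theta(k)}$ and $\log(1/(c-s)) = \Theta(k)$. Substituting into the conclusion of Lemma \ref{lem: gap amp}: the running time is $\mathcal{O}\!\left(rt/(c-s)\right) = \mathcal{O}\!\left((m+2)\, t\, 2^{\Theta(k)}\right) = \mathcal{O}\!\left(t\, 2^{\Theta(k)}\right)$, absorbing the polynomially bounded factor $m+2$ into the $2^{\Theta(k)}$ term (and in the regime where this corollary is invoked one even has $m = \mathcal{O}(k)$); the number of workspace qubits is $\mathcal{O}\!\left(k + r + \log(1/(c-s))\right) = \mathcal{O}(k+m)$, which is $\mathcal{O}(k)$ when $m=\mathcal{O}(k)$; and the witness length is unchanged at $m$. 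This yields the stated containment.

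There is no real obstacle here — the whole argument is a parameter substitution into an already-proved lemma. The only points worth a second glance are that the choice $r = m+2$ introduces a harmless multiplicative factor of $m$ into the time bound, and that the workspace bound matches the claimed $\mathcal{O}(k)$ only under the (satisfied-in-context) assumption $m = \mathcal{O}(k)$. One could alternatively invoke the weaker Lemma \ref{lem: gap amp 2} here; in the intended parameter regime, where $m$ and $k$ are comparable, it still gives a $2^{\mathcal{O}(k)}$ time bound, so the stronger Lemma \ref{lem: gap amp} is not essential for this particular corollary (it is needed only in the proof of Lemma \ref{lem: specham-hardness}).
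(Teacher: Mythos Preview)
Your proposal is correct and matches the paper's proof exactly: the paper's entire argument is the one-line remark that the corollary follows from Lemma \ref{lem: gap amp} by taking $s = c-2^{-\Theta(k)}$ and $r = m+2$. Your additional observations about absorbing the factor $m+2$ into $2^{\Theta(k)}$ and needing $m=\mathcal{O}(k)$ for the space bound make explicit assumptions that the paper leaves implicit (and which indeed hold in the context where the corollary is applied).
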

\begin{proof}
This follows from Lemma \ref{lem: gap amp} by taking $s = c-2^{-\Theta(k)}$ and $r = m+2$.
\end{proof}
\subsection{Removing the witness of an amplified $\QMA$ protocol}\label{app: removingwitness}
\begin{theorem} \label{thm:pqpspace simulation} For any function $t=2^{\bigoh(k+m)}$,
\[
\bddQMA{t}{k}{m}{1-2^{-(m+2)}}{2^{-(m+2)}}\subseteq
\unitaryQSPACE{k+m}{3/4\cdot 2^{-m}}{1/4 \cdot 2^{-m}}.
\]
\end{theorem}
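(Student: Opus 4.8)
The plan is to simulate Arthur against the \emph{worst possible} Merlin by simply feeding the maximally mixed state into the witness register. Given a language $L\in\bddQMA{t}{k}{m}{1-2^{-(m+2)}}{2^{-(m+2)}}$ with uniform verifier family $\{V_x\}$ acting on an $m$-qubit witness register and a $k$-qubit ancilla register, I would build a witness-free circuit $Q_x$ on $k+2m$ qubits as follows: first apply, to $2m$ fresh $\ket{0}$-qubits, a Hadamard and a CNOT on each of $m$ disjoint pairs, producing $m$ Bell pairs, and designate one qubit of each pair as a ``witness'' qubit and the other as a ``bath'' qubit; then run $V_x$ on the $m$ witness qubits together with $k$ further $\ket{0}$-ancillas; finally measure the designated output qubit of $V_x$. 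The bath qubits are never touched after the preamble, so the reduced state of the witness register when $V_x$ starts is exactly $I_m/2^m$, and hence the acceptance probability of $Q_x$ equals
\[
p_{\mathrm{mix}}(x)=\frac{1}{2^m}\sum_{i\in\{0,1\}^m} q_i(x),\qquad
q_i(x):=\big(\bra{e_i}\otimes\bra{0^k}\big)\,V_x^\dagger\big(\ket{1}\bra{1}_{out}\otimes I\big)\,V_x\big(\ket{e_i}\otimes\ket{0^k}\big),
\]
where $q_i(x)$ is simply the acceptance probability of the original protocol on the basis witness $\ket{e_i}$.

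For completeness, if $x\in L_{yes}$ there is a witness $\ket{\psi}$ accepted with probability at least $1-2^{-(m+2)}\ge 3/4$; using the operator inequality $I_m/2^m\succeq 2^{-m}\ket{\psi}\bra{\psi}$ together with positivity of $V_x^\dagger(\ket{1}\bra{1}_{out}\otimes I)V_x$ and of $\ket{0^k}\bra{0^k}$, we get $p_{\mathrm{mix}}(x)\ge 2^{-m}\big(1-2^{-(m+2)}\big)\ge \tfrac34\cdot 2^{-m}$. For soundness, if $x\in L_{no}$ then each $\ket{e_i}$ is a valid witness, so $q_i(x)\le 2^{-(m+2)}$ for all $i$, whence $p_{\mathrm{mix}}(x)\le 2^{-(m+2)}=\tfrac14\cdot 2^{-m}$. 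These match exactly the completeness and soundness claimed.

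It then remains to check resources. The circuit $Q_x$ uses $k+2m=\mathcal{O}(k+m)$ qubits and $t+\mathcal{O}(m)=2^{\mathcal{O}(k+m)}$ gates (invoking the hypothesis $t=2^{\mathcal{O}(k+m)}$), and its gates can be generated in $\poly(n)$ time and $\mathcal{O}(k+m)$ space: the $\mathcal{O}(m)$-gate Bell-pair preamble is trivial to produce, and everything else is inherited from the uniformity of $\{V_x\}$; the thresholds $\tfrac34\cdot 2^{-m}$ and $\tfrac14\cdot 2^{-m}$ are clearly computable in $\mathcal{O}(k+m)$ space. Hence $Q_x$ witnesses membership in $\unitaryQSPACE{k+m}{3/4\cdot 2^{-m}}{1/4\cdot 2^{-m}}$. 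There is no deep obstacle here; the only points requiring care are the operator inequality $I_m/2^m\succeq 2^{-m}\ket{\psi}\bra{\psi}$ underlying the completeness bound, the elementary estimates $1-2^{-(m+2)}\ge 3/4$ and $2^{-(m+2)}=2^{-m}/4$, and confirming that doubling the witness register (to hold the bath qubits) still fits within the $\mathcal{O}(k+m)$-qubit and $2^{\mathcal{O}(k+m)}$-gate budgets.
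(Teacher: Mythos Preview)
Your proposal is correct and takes essentially the same approach as the paper: replace the witness by the maximally mixed state (the paper also mentions the EPR-pair implementation you describe) and observe that the acceptance probability becomes $2^{-m}\tr Q_x$, where $Q_x$ is the PSD operator whose expectation on $\ket{\psi}$ gives the acceptance probability; your operator inequality $I_m/2^m\succeq 2^{-m}\ket{\psi}\bra{\psi}$ is exactly the paper's observation that $\tr Q_x$ is at least the largest eigenvalue, and your soundness bound is identical. The resource accounting is also the same.
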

\begin{proof}
The proof is very similar to that of \cite[Theorem 3.6]{mw05}. For any functions $m, k$, consider a problem $L\in\bddQMA{t}{k}{m}{1-2^{-(m+2)}}{2^{-(m+2)}}$, and let $\{V'_x\}_{x\in\{0,1\}^*}$ be a uniform family of verification circuits for $L$ with completeness $1-2^{-(m+2)}$ and soundness $2^{-(m+2)}$. 

For convenience, define the $2^m \times 2^m$ matrix:
\begin{equation}
Q_x := \left(I_{2^m}\otimes \bra{0^p}\right) V'^\dagger_x \ket{1}\bra{1}_{out} V'_x \left(I_{2^m}\otimes \ket{0^p}\right).
\end{equation}
$Q_x$ is positive semidefinite, and $\bra{\psi}Q_x\ket{\psi}$ is the acceptance probability of $V'_x$ on witness $\psi$. Thus
\begin{equation}
x\in L_{yes} \Rightarrow \tr[Q_x]\ge 1 - 2^{-(m+2)} \ge 3/4
\end{equation}
since the trace is at least the largest eigenvalue, and $m\geq 0$; likewise,
\begin{equation}
x\in L_{no} \Rightarrow \tr[Q_x]\le 2^m \cdot 2^{-(m+2)} = 1/4
\end{equation}
since the trace is the sum of the $2^m$ eigenvalues, each of which is at most $2^{-(m+2)}$. 

Therefore our problem reduces to determining whether the trace of $Q_x$ is at least $3/4$ or at most $1/4$.  Now we show that using the totally mixed state $2^{-m}I_m$ (alternatively, preparing $m$ EPR pairs and taking a qubit from each pair) as the witness of the verification procedure encoded by $Q_x$, succeeds with the desired completeness and soundness bounds.  The acceptance probability is given by
$\tr(Q_x 2^{-m}I_m) = 2^{-m} \tr(Q_x)$,
which is at least $2^{-m} \cdot 3/4$ if $x\in L_{yes}$, and at most $2^{-m} \cdot 1/4$ if $x\in L_{no}$. Thus we have reduced the problem $L$ to determining if a quantum computation with \emph{no} witness, acting on $k+m$ qubits, accepts with probability at least $3/4 \cdot 2^{-m}$ or at most $s'=1/4 \cdot 2^{-m}$.
\end{proof}
We can finally finish the proof of Lemma \ref{lem: pspace upper bound}.
\begin{proof}[Proof of Lemma \ref{lem: pspace upper bound}]
This follows from Corollary \ref{obvious1}, Theorem \ref{thm:pqpspace simulation}, and Corollary \ref{obvious2}.
\end{proof}

\section{$\spechamiltonian{\bigoh(k(n))}$ is hard for $\unitaryBQSPACE{k(n)}$} \label{sec: specham-hardness}
In this section we prove Lemma \ref{lem: specham-hardness}, stated here for convenience.
\begingroup
\def\thelemma{\ref{lem: specham-hardness}}
\begin{lemma}
$\spechamiltonian{\bigoh(k(n))}$ is $\unitaryBQSPACE{k(n)}$-hard under classical poly-time $\bigoh(k(n))$-space reductions.
\end{lemma}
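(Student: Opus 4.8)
The plan is to combine our space-efficient $\unitaryBQSPACE$ error reduction with a binary-clock version of Kitaev's circuit-to-Hamiltonian construction, producing a sparse Hamiltonian of dimension $2^{\bigoh(k)}$ that is efficiently encodable in $\bigoh(k)$ space and whose minimum eigenvalue sits below $a$ or above $b$, with $b-a = 2^{-\bigoh(k)}$, according to whether the input circuit accepts. Concretely, I would start from a language $L \in \unitaryBQSPACE{k(n)}$ decided by a $\poly(n)$-time $\bigoh(k)$-space uniform family $Q_x = U_T \cdots U_1$ on $\bigoh(k)$ qubits with $T = 2^{\bigoh(k)}$ gates and the usual $(2/3,1/3)$ gap, and first apply Corollary \ref{obvious2} to amplify (still $\bigoh(k)$ qubits, still $2^{\bigoh(k)}$ gates) to acceptance probability $\ge 1 - 2^{-r}$ in the YES case and $\le 2^{-r}$ in the NO case, for a parameter $r = \Theta(k)$ chosen large enough below; write the amplified circuit as $Q'_x = U_{T'} \cdots U_1$ with $T' = 2^{\bigoh(k)}$.

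Next I would build the Feynman--Kitaev Hamiltonian $H = H_{\mathrm{in}} + H_{\mathrm{prop}} + H_{\mathrm{out}} + H_{\mathrm{clock}}$, but with the clock register encoded in \emph{binary} on $m = \lceil \log_2(T'+1) \rceil = \bigoh(k)$ qubits. Here $H_{\mathrm{in}} = (I_{\mathrm{work}} - \ketbra{0^k}{0^k}) \otimes \ketbra{0}{0}_{\mathrm{clock}}$ penalizes a wrong input at time $0$; $H_{\mathrm{prop}} = \tfrac12 \sum_{t=1}^{T'} (I \otimes (\ketbra{t}{t} + \ketbra{t-1}{t-1}) - U_t \otimes \ketbra{t}{t-1} - U_t^\dagger \otimes \ketbra{t-1}{t})$ enforces correct propagation; $H_{\mathrm{out}} = \ketbra{0}{0}_{\mathrm{out}} \otimes \ketbra{T'}{T'}_{\mathrm{clock}}$ penalizes rejection at the final time; and $H_{\mathrm{clock}}$ is a diagonal term penalizing clock values outside $\{0,\dots,T'\}$ (needed only if $T'+1$ is not a power of $2$). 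Each propagation summand factors as $\tfrac12 B_t B_t^\dagger$ with $B_t = U_t \otimes \ket{t} - I \otimes \ket{t-1}$, so $H$ is a sum of scaled projectors, hence PSD; and because each $U_t$ is a $1$- or $2$-qubit gate while $\ket{t} \mapsto \ket{t+1}$ is a deterministic permutation, every row of $H$ has only $\bigoh(1)$ nonzero entries, each of constant magnitude, so $\|H\|_{\max} = \bigoh(1)$ and $H$ has dimension $2^{\bigoh(k)}$, as required by Definition \ref{def: spechamiltonian}. I would then verify the efficient-encoding condition of Definition \ref{def: efficient encoding}: given a row index $(\ket{z}_{\mathrm{work}}, \ket{t}_{\mathrm{clock}})$, producing its nonzero entries needs only binary increment/decrement of $t$, one query to the uniform generator of $Q'_x$ to recover $U_t$, and applying $U_t, U_t^\dagger$ to the one or two relevant qubits of $\ket{z}$ --- all in $\poly(n)$ time and $\bigoh(k)$ space (the entries are those of fixed gates, so the arithmetic is trivial, or at worst handled by Reif's space-efficient algorithms \cite{reif}). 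Hence $x \mapsto (\text{encoding of } H)$ is a valid $\poly(n)$-time $\bigoh(k)$-space reduction.

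It remains to pin down the spectral gap via Kitaev's analysis \cite{ksv02}, which uses only the geometry of the computational path of history states and is insensitive to the clock encoding. In the YES case, evaluating $H$ on the history state $\ket{\eta} = (T'+1)^{-1/2} \sum_{t=0}^{T'} \ket{t} \otimes U_t \cdots U_1 \ket{0^k}$ leaves only $H_{\mathrm{out}}$ contributing, giving $\lambda_{min}(H) \le \langle\eta|H|\eta\rangle = (T'+1)^{-1} \cdot (\text{rejection probability of } Q'_x) \le 2^{-r} =: a$. In the NO case, the projection/geometric lemma bound gives $\lambda_{min}(H) \ge c\,(1 - \bigoh(2^{-r/2}))/(T'+1)^2 =: b$. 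Taking $r = \Theta(k)$ large enough that $2^{-r} < b/2$ forces $b - a = \Omega(1/T'^2) = 2^{-\bigoh(k)}$, matching the promise-gap requirement of $\spechamiltonian{\bigoh(k)}$; so deciding that problem on $H$ decides membership in $L$, which completes the reduction.

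The step I expect to be the main obstacle is confirming that Kitaev's $\Omega(1/T'^2)$ lower bound really survives the switch to a binary clock together with the resulting non-local (though sparse) terms --- in particular using $H_{\mathrm{clock}}$ to guarantee that all sufficiently low-energy states are supported on valid clock values, and tracking constants so that the NO-case eigenvalue stays $\ge 2^{-\bigoh(k)}$ rather than collapsing into the $a$-regime --- all while ensuring the reduction, including the computation of $H$'s entries to the required precision, stays within $\bigoh(k)$ space.
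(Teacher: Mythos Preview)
Your proposal is correct and follows essentially the same route as the paper: amplify the $\unitaryBQSPACE{k}$ circuit using the space-efficient in-place procedure, then apply Kitaev's circuit-to-Hamiltonian construction with a binary clock and invoke Kitaev's eigenvalue bounds to obtain a $2^{-\bigoh(k)}$ promise gap. The paper's version is terser---it works directly on the $(T{+}1)$-dimensional clock register (so no explicit $H_{\mathrm{clock}}$), quotes Kitaev's $(1-c)/(T{+}1)$ versus $(1-s)/T^3$ bounds, and chooses $r$ so that $c \ge 1 - 1/T^3$ and $s \le 1/T^3$---but your added care about invalid clock values and the efficient-encoding check is entirely appropriate, and your worry about the binary clock is unfounded since Kitaev's analysis depends only on the abstract clock-basis structure, not its qubit encoding.
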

\addtocounter{theorem}{-1}
\endgroup

\begin{proof}
Let $L=(L_{yes},L_{no})$ be a problem in $\unitaryBQSPACE{k(n)}$, and suppose it has a verifier that uses $t=2^{\bigoh(k)}$ gates with completeness $2/3$ and soundness $1/3$. By Lemma \ref{lem: gap amp}, we can   amplify the gap to get a new verifier circuit that uses $T=\mathcal{O}(rt)$ gates, and has completeness $c=1-2^{-r}$ and soundness $s=2^{-r}$. Choose $r=\bigoh(k)$ large enough so that $c \ge 1 - 1/T^3$ and $s \le 1/T^3$, and suppose $V_x=V_{x,T}V_{x,T-1}\cdots V_{x,1}$ is the new (gap-amplified) verifier circuit for $L$ acting on $k$ qubits. Consider the Kitaev clock Hamiltonian:
\begin{equation}
H = H_{in} + H_{prop} + H_{out}
\end{equation}
defined on the Hilbert space $\mathbb{C}^{2^k} \otimes \mathbb{C}^{T+1}$, where
\begin{equation}
H_{in} = \ket{0^k}\bra{0^k} \otimes \ket{0}\bra{0}, \quad
H_{out} = (\ket{1}\bra{1}_{out} \otimes I_{k-1}) \otimes \ket{T}\bra{T}
\end{equation}
\begin{equation}
H_{prop} = \sum_{j=1}^T \frac{1}{2} \left[ - V_{x,j} \otimes \ket{j}\bra{j-1} - V_{x,j}^\dagger \otimes \ket{j-1}\bra{j} + I \otimes (\ket{j}\bra{j} + \ket{j-1}\bra{j-1}) \right].
\end{equation}
$H$ is a sparse matrix - in fact, there are only a constant number of nonzero terms in each row. Since each gate $V_{x,j}$ can be computed in classical polynomial time and $k(n)$ space, it follows that so can the nonzero entries of $H$. Moreover, let $\lambda_{min}$ be the minimum eigenvalue of $H$; then it was shown by Kitaev \cite{ksv02} that if $x \in L_{yes}$ then $\lambda_{min} \le a = (1-c)/(T+1)$, while if $x \in L_{no}$ then $\lambda_{min} \ge b = (1-s)/T^3$. Since $c \ge 1-1/T^3$ and $s \le 1/T^3$ we see that $2^{-\mathcal{O}(k)} < b-a$. 
\end{proof}

\section{Achieving Perfect Completeness for $\preciseQMA$}\label{app:perfectcompleteness}
We now consider the problem of achieving perfect completeness for $\preciseQMA$. Specifically, we will show the following:
\begingroup
\def\theproposition{\ref{prop: perfect completeness}}
\begin{proposition} Let $\QMA(c,s) = \bddQMA{\poly}{\poly}{\poly}{c}{s}$. Then
\[
\PSPACE = \QMA(1,1-2^{-\poly}) = \bigcup_{s < 1}\QMA(1,s),
\]
where we assume that the gateset we use contains the Hadamard and Toffoli gates. In the last term, the union is taken over all functions $s(n)$ such that $s(n) < 1$ for all $n$.
\end{proposition}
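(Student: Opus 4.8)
The plan is to establish the chain $\PSPACE \subseteq \QMA(1,1-2^{-\poly}) \subseteq \bigcup_{s<1}\QMA(1,s) \subseteq \PSPACE$. The middle containment is immediate, and the last is essentially the result of Ito, Kobayashi and Watrous \cite{ikw12} that $\QMA(1,s)\subseteq\PSPACE$ for any fixed function $s(n)<1$: given $L\in\QMA(1,s)$ with verifier $V_x$, let $Q_x$ be the exponential-dimensional positive semidefinite operator on the witness register whose largest eigenvalue equals the optimal acceptance probability, exactly as in the proof of Theorem~\ref{thm:pqpspace simulation}. Then $x\in L_{yes}$ iff $Q_x$ has eigenvalue exactly $1$, i.e.\ iff $I-Q_x$ is singular, whereas $x\in L_{no}$ forces $I-Q_x \succeq (1-s)I \succ 0$. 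Since every entry of $Q_x$ is computable in $\poly(n)$ space, deciding singularity of a $2^{\poly(n)}$-dimensional matrix with polynomial-space-computable entries is a $\DET$-style task that fits in $\poly(n)$ space. The point to stress is that this argument is completely insensitive to how close $s$ is to $1$, which is why the union over all $s(n)<1$ still collapses to $\PSPACE$ and no $2^{-\poly}$ gap is ever needed.

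The substance is the direction $\PSPACE \subseteq \QMA(1,1-2^{-\poly})$, and the key move is to begin from a \emph{deterministic} classical space computation rather than a bounded-error one, so that perfect completeness is built in from the start. Given $L\in\PSPACE$, fix a deterministic $\poly(n)$-space Turing machine deciding it and, via a reversible simulation of deterministic space, turn it into a reversible computation realized by a circuit on $\poly(n)$ qubits using only Toffoli and \textsc{not} gates --- which our gateset contains \emph{exactly} --- that runs for $T=2^{\poly(n)}$ steps and, on input $\ket{x}\ket{0\cdots 0}$, leaves the designated output qubit in $\ket{L(x)}$ with probability exactly $1$. Now apply the binary-clock Kitaev construction already developed for Lemma~\ref{lem: specham-hardness}: with an $\bigoh(\log T)=\poly(n)$-qubit binary clock, form $H = H_{in}+H_{prop}+H_{out}$ on $\poly(n)$ qubits, where $H_{prop}=\sum_{j=1}^T H_{prop,j}$, each of the $r=T+\bigoh(1)$ terms is a projector, and $H$ is sparse with $\poly(n)$-space-computable entries. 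Because the underlying computation has perfect completeness, in the YES case the history state has energy exactly $0$ against \emph{every} term, so $H$ is frustration-free; in the NO case Kitaev's analysis \cite{ksv02} gives $\lambda_{min}(H) \ge \Omega(1/T^3) = 2^{-\poly(n)}$.

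The verifier is then the natural frustration-free test. Merlin sends a purported $\poly(n)$-qubit ground state $\ket\psi$; Arthur samples a uniformly random term of $H$ (for an $H_{prop,j}$ term this means drawing a $\poly(n)$-bit index $j\le T$ and invoking the uniformity machine in $\poly(n)$ time to produce gate $j$), performs the Hadamard-test-style measurement of that projector exactly as in Lemma~\ref{lem: specham-hardness}, and accepts iff the zero-energy outcome occurs. In the YES case every term annihilates the frustration-free ground state, and since every gate used in the measurement ($H$, Toffoli, \textsc{not}) is an exact primitive, Arthur accepts with probability exactly $1$. In the NO case Arthur accepts any $\ket\psi$ with probability $1 - \bra\psi H\ket\psi/r \le 1 - \lambda_{min}(H)/r \le 1-2^{-\poly(n)}<1$. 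This is a $\poly(n)$-time, $\poly(n)$-space, $\poly(n)$-witness protocol, so $L\in\QMA(1,1-2^{-\poly})$; note the soundness bound appeared for free, so perfect completeness never required $c-s>2^{-\poly}$. Feeding this construction through the usual locality reduction then yields the $\PSPACE$-completeness of frustration-free (precise) local Hamiltonian testing (Theorems~\ref{thm: precise local hamiltonian} and~\ref{thm: frustration free}).

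The main obstacle I anticipate is not conceptual but lies in keeping everything \emph{exact} while using a binary clock: the $H_{prop,j}$ terms now involve a $\poly(n)$-size reversible incrementer rather than a single clock tick, so the step of the Hadamard test that rotates the two-level $\{j\!-\!1,j\}$ clock subspace must itself be implemented by a small reversible-plus-Hadamard circuit with no approximation whatsoever, and one must verify that generating gate $j$ in $\poly(n)$ time is legitimate even though $j$ ranges over $\{1,\dots,2^{\poly(n)}\}$. Given the binary-clock machinery from Lemma~\ref{lem: specham-hardness} this should be routine, but it is precisely the place where a careless approximation --- of the Toffoli/\textsc{not} gates or of the measurement --- would silently destroy the $c=1$ guarantee, which is ultimately why the hypothesis that the gateset contains Hadamard and Toffoli is needed.
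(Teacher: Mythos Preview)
Your proposal is correct and takes a genuinely different route from the paper for the direction $\PSPACE \subseteq \QMA(1,1-2^{-\poly})$. Both arguments start identically---turn a deterministic $\PSPACE$ machine into a reversible Toffoli/\textsc{not} circuit and form the binary-clock Kitaev Hamiltonian $H$, which is frustration-free in the YES case and has $\lambda_{\min}\ge 2^{-\poly}$ in the NO case---but the verification protocols diverge. The paper does \emph{not} measure a random term of $H$. Instead it notes that the phase-estimation verifier of Lemma~\ref{lem:qma protocol} would be perfectly complete if $e^{-iHt}$ could be implemented exactly, and replaces $e^{-iHt}$ by a single step of Childs' discrete quantum walk $U$ associated with $H$; because the entries of $H$ lie in $\{0,\pm\tfrac12,1\}$, the isometry defining $U$ has amplitudes in $\{0,\pm 1/\sqrt 2,\pm 1/\sqrt d\}$ and can be built exactly from Hadamard and Toffoli, so one-bit phase estimation on $U$ distinguishes $\lambda=0$ from $\lambda\ge 2^{-\poly}$ with perfect completeness. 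Your Bravyi-style ``sample a random projector and measure it'' verifier is more elementary and avoids the walk machinery, at the price of exactly the binary-clock issue you flagged. That issue \emph{is} resolvable without controlled-Hadamard: subtract $j-1$ from the clock, apply $V_{x,j}^\dagger$ controlled on the clock equalling $\ket{0\cdots01}$, apply a single unconditional Hadamard to the low clock bit, and reject iff the clock now reads $\ket{0\cdots01}$; a direct calculation shows the rejection probability is exactly $\bra\psi H_{prop,j}\ket\psi$, and every step uses only Hadamard, Toffoli, and \textsc{not}. So your approach goes through. The paper's route buys reuse of its phase-estimation framework and a nice link to quantum walks; yours buys a self-contained argument with no simulation black boxes. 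One small correction: Lemma~\ref{lem: specham-hardness} only constructs $H$ and contains no measurement procedure, so ``exactly as in Lemma~\ref{lem: specham-hardness}'' is not what you mean there.
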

\addtocounter{theorem}{-1}
\endgroup

Since $\PSPACE = \preciseQMA$, this proposition shows that any $\preciseQMA$ protocol can be reduced to a different $\preciseQMA$ protocol with perfect completeness, i.e. in the YES case Arthur accepts Merlin's witness with probability 1. The reduction is rather roundabout, however, and it would be interesting to see if a more direct reduction can be found.

The second equality follows from the first equality and the result by \cite{ikw12} that $\QMA(1,s) \subseteq \PSPACE$. We will therefore only prove the first equality.

Looking back at Circuit \ref{circ: poor man phase estimation}, we see that we \emph{almost} have perfect completeness in our protocol already - if the Hamiltonian simulation of $e^{-iHt}$ could be done without error, then indeed the protocol has perfect completeness. Our strategy will be perform a different unitary that can be performed exactly, but, like $e^{-iHt}$, also allows us to use phase estimation to distinguish the eigenvalues of $H$.

Given a sparse Hamiltonian $H$ (with at most $d$ nonzero entries per row) and a number $X \ge \max_{j,\ell}|H_{j\ell}|$ that upper bounds the absolute value of entries of $H$, Andrew Childs defined an efficiently implementable quantum walk \cite{berry14,childs10}. Each step of the quantum walk is a unitary $U$ with eigenvalues $e^{i\tilde{\lambda}}$, where 
\begin{equation}
\tilde{\lambda} = \arcsin \frac{\lambda}{Xd}
\end{equation}
with $\lambda$ representing eigenvalues of $H$. Note that the YES case $\lambda = 0$ corresponds to $\tilde{\lambda}=0$, and the NO case $\lambda \ge 2^{-g(n)}$ corresponds to $\tilde{\lambda} \ge 2^{-g(n)}/(Xd)$ since $\arcsin x \ge x$ for $|x| \le 1$. In the latter case the $\tilde{\lambda}$ can be at most exponentially small, and therefore the stripped down version of phase estimation still suffices to tell the two cases apart with exponentially small probability.

We now note that the Hamiltonian $H$ we obtain from the hardness reduction from $\PSPACE$ (Lemma \ref{lem: specham-hardness}) is of a very special form. Specifically, since $\unitaryBQSPACE{\poly}=\PSPACE$, we can assume the verifier circuit $V_x$ is deterministic, so it has completeness 1 and soundness 0. Moreover, all of its gates are classical, and hence all entries of the Kitaev clock Hamiltonian $H$ are $0$, $\pm 1/2$, or $1$.

For the matrix $H$ satisfying the above, $U$ can be implemented exactly with a standard gateset; perfect completeness of the protocol will then follow. If $H$ is a $N \times N$ matrix (where $N=2^n$), $U$ is (see presentation in \cite[Section~3.1~and~Lemma~10]{berry15}) a unitary defined on the enlarged Hilbert space $\mathbb{C}^{2N} \otimes \mathbb{C}^{2N} = (\mathbb{C}^{N} \otimes \mathbb{C}^2) \otimes  (\mathbb{C}^{N} \otimes \mathbb{C}^2)$, as follows:
\begin{equation}
U = ST(I_{2N} \otimes (I_{2N} - 2\ket{0}\bra{0}_{2N})) T^\dagger
\end{equation}
where the $2N$ subscript indicates a register of dimension $2N$, the unitary $S$ swaps the two registers, and the unitary $T$ is defined by
\begin{equation}
T = \sum_{j=0}^{N-1}\sum_{b \in \{0,1\}} (\ket{j}\bra{j} \otimes \ket{b}\bra{b}) \otimes \ket{\varphi_{jb}}\bra{0}_{2N}
\end{equation}
with $\ket{\varphi_{j1}} = \ket{0}_N\ket{1}$ and
\begin{equation}
\ket{\varphi_{jb}} = \frac{1}{\sqrt{d}} \sum_{\ell \in F_j} \ket{\ell}\left(\sqrt{\frac{H^*_{j\ell}}{X}}\ket{0} + \sqrt{1-\frac{|H^*_{j\ell}|}{X}}\ket{1}\right),
\end{equation}
where $F_j$ index the nonzero entries in the $j$-th row. Recall that for any $j,\ell$, $H_{j\ell}=0$, $\pm 1/2$, or $1$, and hence we can take $X=1$. If we furthermore assume $d$ is a power of 2 (which we can always do by adding indices of zero entries to $F_j$), it is straightforward to see that both $S$ and $T$ can be implemented using just Hadamard gates and classical gates (Pauli-$X$, controlled-$X$, and Toffoli gates) - the latter of which can be implemented using just Toffoli gates and access to a qubit in the $\ket{1}$ state (which can be provided by the prover). Therefore $U$ can be exactly implemented in any gateset that allows Hadamard gates and Toffoli gates to be implemented exactly.

\section{Precise Local Hamiltonian Problem}\label{app:localhamiltonian}
Recall the following definition:
\begingroup
\def\thedefinition{\ref{def: precise local hamiltonian}}
\begin{definition}[\preciseklh]
Given as input is a $k$-local Hamiltonian $H=\sum_{j=1}^rH_j$ acting on $n$ qubits, satisfying $r \in \poly(n)$ and $\|H_j\| \le \poly(n)$, and numbers $a < b$ satisfying $b - a > 2^{-\poly(n)}$. It is promised that the smallest eigenvalue of $H$ is either at most $a$ or at least $b$. Output 1 if the smallest eigenvalue of $H$ is at most $a$, and output 0 otherwise.
\end{definition}
\endgroup
In this section we will prove the following:
\begingroup
\def\thetheorem{\ref{thm: precise local hamiltonian}}
\begin{theorem}
For any $3 \le k \le \mathcal{O}(\log(n))$, \preciseklh \ is $\preciseQMA$-complete, and hence $\PSPACE$-complete.
\end{theorem}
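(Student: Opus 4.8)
The plan is to prove membership in $\preciseQMA$ and $\preciseQMA$-hardness separately; the hardness direction is the interesting one. \emph{Membership.} I would normalise a \preciseklh\ instance $\big(H=\sum_{j=1}^{r}H_j,\,a,\,b\big)$ on $m$ qubits into an instance of \spechamiltonian{\poly}: choose $c,C=\poly(n)$ so that $H':=(H+cI)/C$ satisfies $0\preceq H'\preceq I$ and $\|H'\|_{max}\le 1$ (possible since $\|H\|\le r\max_j\|H_j\|=\poly(n)$), which preserves the eigenvalue promise with rescaled thresholds and a gap $(b-a)/C=2^{-\poly(n)}$. A $k$-local Hamiltonian with $k=\mathcal{O}(\log n)$ and $r=\poly(n)$ has an efficient encoding, since the nonzero entries of a row are produced by looping over the $r$ terms and the $2^{k}=\poly(n)$ basis patterns on each active qubit set. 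By Theorem \ref{thm: spechamiltonian} and Corollary \ref{cor: preciseqma}, $H'$ is a \spechamiltonian{\poly} instance and hence lies in $\unitaryBQSPACE{\poly}=\PSPACE=\preciseQMA$. (Alternatively one runs the verification of Lemma \ref{lem:qma protocol} directly on $H'$.)

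\emph{Hardness, part 1: securing good completeness.} A generic $\preciseQMA$ verifier cannot be fed into a clock construction after gap amplification: amplifying a $2^{-\poly}$ gap, whether in place (Lemma \ref{lem: gap amp}) or by parallel repetition, needs $\sim 2^{\poly}$ repetitions, producing an exponentially long circuit and hence an exponentially large clock register, which is not a polynomial-qubit local Hamiltonian. Instead I would use Proposition \ref{prop: perfect completeness}: $\PSPACE=\QMA(1,1-2^{-\poly})$ over any gate set containing Hadamard and Toffoli gates, so every $L\in\PSPACE$ has a verifier $V_x=V_{x,T}\cdots V_{x,1}$ running in $\poly(n)$ time (hence $T=\poly(n)$ gates) on $\poly(n)$ qubits, with \emph{perfect} completeness and soundness $s=1-2^{-\poly(n)}$.

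\emph{Hardness, part 2: the clock Hamiltonian.} As in the proof of Lemma \ref{lem: specham-hardness}, but with a \emph{unary} clock (of polynomial size, since now $T=\poly(n)$) and the verifier $V_x$ above, form the Kitaev clock Hamiltonian $H=H_{in}+H_{prop}+H_{out}$ on $\poly(n)$ qubits with $\poly(n)$ terms of norm $\mathcal{O}(1)$ and polynomial-time-computable entries. Perfect completeness is what makes the promise gap nonvacuous: by Kitaev's bound \cite{ksv02} (recorded in the proof of Lemma \ref{lem: specham-hardness}), in the YES case the history state of an accepting run gives $\lambda_{min}(H)\le a:=(1-c)/(T+1)=0$, while in the NO case $\lambda_{min}(H)\ge b:=\Omega\big((1-s)/T^{3}\big)=2^{-\poly(n)}$; thus $b-a=2^{-\poly(n)}>0$, so $L$ reduces in polynomial time to \preciseilh{5}. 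For $5<k\le\mathcal{O}(\log n)$ this already qualifies; for $3\le k\le 5$ one still has to lower the locality of the clock Hamiltonian.

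\emph{Main obstacle.} The delicate point is reaching $k=3$. The standard \emph{perturbative} locality-reduction gadgets cannot be used, since to make an effective Hamiltonian approximate its target to within the spectral gap $2^{-\poly(n)}$ the gadget coupling must be $\sim 2^{\poly(n)}$, so some $\|H_j\|$ would become $2^{\poly(n)}$, violating the $\|H_j\|\le\poly(n)$ bound in Definition \ref{def: precise local hamiltonian}. One must therefore employ a \emph{direct}, non-perturbative $3$-local clock construction, whose low-energy analysis is of the same elementary flavour as Kitaev's and hence degrades the gap only polynomially while keeping all term norms polynomial -- which is precisely why the theorem is stated for $k\ge 3$ and not $k\ge 2$. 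A minor point to check is that the verifier supplied by Proposition \ref{prop: perfect completeness} genuinely runs in polynomial time using only Hadamard and Toffoli gates, so that the Hamiltonian's entries are efficiently computable and the overall reduction is polynomial time.
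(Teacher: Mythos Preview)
Your proposal is correct and follows essentially the same route as the paper. The paper's proof is terser: for membership it simply invokes the standard Kitaev argument that local Hamiltonian is in $\QMA$ (which already accommodates an exponentially small gap), and for hardness it directly applies the Kempe--Regev 3-local clock construction \cite{kr03} to a perfect-completeness verifier coming from Proposition~\ref{prop: perfect completeness}, obtaining the bounds $(1-c)/(T+1)=0$ and $(1-s)/T^3=2^{-\poly(n)}$ immediately. Your ``main obstacle'' paragraph is exactly right---perturbative gadgets would blow up the term norms---and the ``direct, non-perturbative 3-local clock construction'' you invoke is precisely \cite{kr03}, which the paper cites by name rather than rederiving. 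So the only difference is that you pass through the 5-local Kitaev Hamiltonian first and then argue one must jump to a direct 3-local construction, whereas the paper skips the intermediate step and cites \cite{kr03} from the outset.
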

\endgroup
\begin{proof}
This proof follows straightforwardly by adapting the proof of \cite{ksv02} and \cite{kr03}. The proof of containment in $\preciseQMA$ is identical to the containment of the usual Local Hamiltonian problem in $\QMA$; see \cite{ksv02} for details.

To show $\preciseQMA$-hardness, we note that for a $\QMA$-verification procedure with $T$ gates, completeness $c$ and soundness $s$, \cite{kr03} reduces this to a 3-local Hamiltonian with lowest eigenvalue no more than $(1-c) / (T+1)$ in the YES case, or no less than $(1-s) / T^3$ in the NO case. For this to specify a valid \preciselh \ problem we need that
\begin{equation} \label{eq:preciselh_condition}
\frac{1-s}{T^3} - \frac{1-c}{T+1} > 2^{-\poly(n)}.
\end{equation}
Recalling that we showed that perfect completeness can be assumed for $\preciseQMA$-hard problems, we can take $c=1$, $s = 1-2^{-\poly(n)}$ and the above inequality trivially holds. Hence any problem in $\PSPACE$ can be reduced to a \preciseilh{3} problem.
\end{proof}
In fact, even just testing whether a $k$-Local Hamiltonian is frustration-free is $\PSPACE$-complete:\begin{definition} [\emph{Frustration-Free $k$-Local Hamiltonian}] Given as input is a $k$-local Hamiltonian $H=\sum_{j=1}^rH_j$ acting on $n$ qubits, satisfying $r \in \poly(n)$, each term $H_j$ is positive semidefinite, and $\|H_j\| \le \poly(n)$. Output 1 if the smallest eigenvalue of $H$ is zero, and output 0 otherwise.
\end{definition}
\begin{theorem} \label{thm: frustration free}
Frustration-Free $k$-Local Hamiltonian is $\PSPACE$-complete.
\end{theorem}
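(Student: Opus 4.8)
I would prove membership in and hardness for $\PSPACE$ separately, in each case following the template of the proof of Theorem~\ref{thm: precise local hamiltonian} but keeping in mind that frustration-freeness is an \emph{exact} (zero-eigenvalue) predicate carrying no promise gap.

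For membership in $\PSPACE$: since each $H_j$ is positive semidefinite, so is $H = \sum_{j=1}^r H_j$, hence its smallest eigenvalue is $0$ if and only if $H$ is singular, i.e.\ $\det H = 0$. Now $H$ is a $2^n \times 2^n$ matrix whose $(u,v)$ entry is a sum of at most $r = \poly(n)$ entries of the given $k$-local terms (each a rational matrix of dimension $2^k \le \poly(n)$ read off the input), so every entry of $H$ has $\poly(n)$ bit-complexity and is computable from $(u,v)$ in $\poly(n)$ time and space. Deciding whether such a succinctly encoded matrix is singular is a standard $\PSPACE$ computation: compute $\det H \bmod p$ for each of suitably many $\poly(n)$-bit primes $p$ (a determinant over $\mathbb{F}_p$ of an explicitly-addressable $2^n \times 2^n$ matrix, hence doable in $\poly(n)$ space) and apply the Chinese Remainder Theorem --- only a $\poly(n)$-bit counter over the primes is needed --- reporting ``frustration-free'' iff all the residues vanish. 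This is simply the zero-gap analogue of the usual containment of \preciseklh{} in \preciseQMA.

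For $\PSPACE$-hardness I would reuse the reduction behind Theorem~\ref{thm: precise local hamiltonian} essentially verbatim. Given $L \in \PSPACE = \preciseQMA$ (Corollary~\ref{cor: preciseqma}), Proposition~\ref{prop: perfect completeness} together with the construction of Appendix~\ref{app:perfectcompleteness} supplies a $\QMA$ verifier for $L$ with $T = \poly(n)$ gates, perfect completeness $c = 1$, and soundness $s = 1 - 2^{-\poly(n)}$. Feeding this verifier to the Kitaev--Shen--Vyalyi circuit-to-Hamiltonian construction in the $3$-local form of Kempe--Regev \cite{ksv02,kr03} yields, in $\poly(n)$ time, a $3$-local Hamiltonian $H = \sum_j H_j$ with $r = \poly(n)$ terms, $\|H_j\| \le \poly(n)$, and every $H_j$ positive semidefinite (as is the case for the clock Hamiltonian: $H_{in}$, $H_{out}$, and each summand of $H_{prop}$), whose smallest eigenvalue is at most $(1-c)/(T+1)$ when $x \in L_{yes}$ and at least $(1-s)/T^3$ when $x \in L_{no}$. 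Since $c = 1$, the YES case gives smallest eigenvalue $\le 0$, which together with $H \succeq 0$ forces it to be exactly $0$ --- concretely the history state of the perfectly accepting run on Merlin's witness has zero energy against each of $H_{in}$, $H_{prop}$, $H_{out}$ --- so $H$ is frustration-free; in the NO case the smallest eigenvalue is at least $2^{-\poly(n)}/T^3 > 0$, so $H$ is frustrated. Hence $L$ many-one reduces in $\poly(n)$ time to Frustration-Free $3$-Local Hamiltonian, and a fortiori to Frustration-Free $k$-Local Hamiltonian for every $k$ in the range $3 \le k \le \mathcal{O}(\log n)$ of Theorem~\ref{thm: precise local hamiltonian}.

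The step I expect to be the (mild) crux --- the reason this does not follow mechanically from Theorem~\ref{thm: precise local hamiltonian} --- is reconciling both directions with the absent promise gap. For membership one cannot quote that \preciseklh{} lies in $\PSPACE$, since a frustrated instance may have smallest eigenvalue as tiny as $2^{-2^{\Omega(n)}}$; the point is that ``$\lambda_{min}=0$'' is nonetheless \emph{exactly} decidable, via the singularity test above. For hardness an arbitrary $\QMA$ verifier does not suffice, since completeness $c<1$ would only give YES-case ground energy $\le (1-c)/(T+1) > 0$, i.e.\ a frustrated Hamiltonian; genuine perfect completeness $c=1$ is required, and that is exactly what the result of Appendix~\ref{app:perfectcompleteness} (Proposition~\ref{prop: perfect completeness}) provides. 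With those two observations noted, the rest is routine.
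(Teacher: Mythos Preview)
Your hardness argument matches the paper's exactly: both invoke perfect completeness (Proposition~\ref{prop: perfect completeness}) and then run the Kempe--Regev clock construction with $c=1$, so that the YES-case history state has energy zero against every term.

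Your containment argument is correct but takes a genuinely different route. The paper instead observes that the standard $\QMA$ protocol for Local Hamiltonian \cite{ksv02} has \emph{perfect} completeness on frustration-free instances (the zero-energy ground state is accepted with probability exactly $1$) and soundness strictly below $1$ on any frustrated instance, so the problem lies in $\bigcup_{s<1}\QMA(1,s)$, which equals $\PSPACE$ by Proposition~\ref{prop: perfect completeness}. Your determinant-via-CRT argument is a more elementary alternative that sidesteps the quantum machinery altogether, relying only on the classical fact that singularity of a succinctly encoded $2^n\times 2^n$ integer matrix is decidable in $\poly(n)$ space. The tradeoff: your route needs the input terms to have rational (or otherwise exactly representable) entries so that ``$\det H=0$'' is an exact arithmetic predicate, whereas the paper's route inherits instead the gate-set assumptions underlying Proposition~\ref{prop: perfect completeness} and \cite{ikw12}. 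Under the usual conventions for how local terms are specified, both are fine; the paper's choice simply keeps the proof within the quantum framework developed in the rest of the paper.
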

\begin{proof}
The containment in $\PSPACE$ follows from the proof of the containment of the usual Local Hamiltonian problem in $\QMA$ \cite{ksv02}, along with Proposition \ref{prop: perfect completeness}. $\PSPACE$-hardness follows from the proof of Theorem \ref{thm: precise local hamiltonian}, by taking $c=1$ in the proof.
\end{proof}
\section{In-place gap amplification} \label{app: gap amplification}
In this appendix we will prove Lemma \ref{lem: gap amp}. To do so we first start out with the following weaker result:
\begin{lemma}[Implicit in Nagaj, Wocjan, and Zhang \cite{nwz11}] \label{lem: gap amp 1}
For any functions $t,k,r>0$, 
\[
\bddQMA{t}{k}{m}{c}{s}\subseteq\bddQMA{\mathcal{O}\left(\frac{rt}{c-s}\right)}{\mathcal{O}\left(k+r\log\left(\frac{1}{c-s}\right)\right)}{m}{1-2^{-r}}{2^{-r}}.
\]
\end{lemma}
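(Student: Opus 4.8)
The plan is to implement the phase‑estimation form of the Marriott--Watrous ``in‑place'' amplifier \cite{mw05,nwz11}, run $\Theta(r)$ times \emph{coherently} on the same witness register, and finish with a majority vote. Let $L=(L_{yes},L_{no})\in\bddQMA{t}{k}{m}{c}{s}$ with uniform verifier family $\{V_x\}$, and, exactly as in the proof of Lemma \ref{lem: gap amp 2}, set $\Pi_0 = I_m\otimes\ket{0^k}\bra{0^k}$ and $\Pi_1 = V_x^\dagger(\ket{1}\bra{1}_{out}\otimes I)V_x$, with reflections $R_0 = 2\Pi_0-I$ and $R_1 = 2\Pi_1-I$. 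By Jordan's lemma the Hilbert space decomposes into $R_0,R_1$‑invariant subspaces of dimension at most two; on each two‑dimensional block $R_1R_0$ acts as a rotation with eigenvalues $e^{\pm 2i\theta_j}$, and the largest acceptance probability over witnesses supported in block $j$ equals $\cos^2\theta_j$. Hence some witness is accepted with probability $\ge c$ iff some block has $\theta_j\le\theta_c:=\arccos\sqrt{c}$, and every witness is accepted with probability $\le s$ iff every block has $\theta_j\ge\theta_s:=\arccos\sqrt{s}$; moreover $\theta_s-\theta_c\ge c-s$ by the mean value theorem, and $\theta_c,\theta_s$ are computable to precision $\bigoh(c-s)$ in space $\bigoh(\log\frac{1}{c-s})$ \cite{reif}.

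Next I would describe the amplified verifier. On a received witness $\ket\psi$, append $\ket{0^k}$ — so the state is a superposition of eigenstates of $R_1R_0$ — and run phase estimation of $R_1R_0$ to additive precision $\delta=\Theta(c-s)$ with a small constant failure probability (say $1/8$), using a fresh register of $\bigoh(\log\frac1\delta)=\bigoh(\log\frac1{c-s})$ ancilla qubits; repeat this $r'=\Theta(r)$ times \emph{without measuring in between}. Each round uses $\bigoh(1/\delta)$ controlled applications of $R_1R_0$, and one such application costs $\bigoh(t)$ gates ($R_0$ is a reflection about $\ket{0^k}$ on the work register, and $R_1$ is $V_x$ followed by a one‑qubit reflection followed by $V_x^\dagger$), so the total is $\bigoh(rt/(c-s))$ gates. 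Finally compute, into the output qubit, whether a majority of the $r'$ estimates lies below $\tfrac12(\theta_c+\theta_s)$, and measure that qubit. No intermediate measurements are used, the witness register stays at $m$ qubits, the extra work space is the $r'$ estimate registers plus an $\bigoh(\log r)$ vote counter, i.e.\ $\bigoh(k+r\log\frac1{c-s})$, and the time is $\bigoh(rt/(c-s))$ — exactly the claimed parameters.

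For correctness I would exploit the key fact that phase estimation acts block‑diagonally and leaves every eigenstate of $R_1R_0$ invariant, merely recording its phase in the ancilla. Thus, writing the block‑$j$ component of $\ket\psi\otimes\ket{0^k}$ as $\alpha\ket{\psi_{j,+}}+\beta\ket{\psi_{j,-}}$, after $r'$ rounds the state is $\alpha\ket{\psi_{j,+}}\bigotimes_i\ket{\widetilde{2\theta_j}}_i+\beta\ket{\psi_{j,-}}\bigotimes_i\ket{\widetilde{-2\theta_j}}_i$ up to the per‑round failure probability, so within each branch the $r'$ estimate registers are i.i.d.\ draws peaked at $\pm2\theta_j$ — there is no error accumulation in the data register. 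In the YES case Merlin supplies the witness supported in a block with $\theta_j\le\theta_c$, each estimate falls on the correct side of the threshold with probability $\ge 7/8$, and a Chernoff bound over the $r'=\Theta(r)$ rounds makes the majority correct with probability $\ge 1-2^{-r}$. In the NO case every Jordan block has $\theta_j\ge\theta_s$, so whatever superposition of blocks Merlin sends, the same per‑branch bound (aggregated over blocks) forces acceptance probability $\le 2^{-r}$.

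I expect the main obstacle to be precisely the bookkeeping behind the i.i.d.\ claim: confirming that, because $R_1R_0$ is block‑diagonal and its eigenstates are exact fixed points of the phase‑estimation step, the $\Theta(r)$ coherent repetitions genuinely produce independent samples of $\theta_j$ from a single copy of the witness (this is the substance of the Marriott--Watrous / Nagaj--Wocjan--Zhang analysis \cite{mw05,nwz11}), and then verifying that the per‑round and cross‑block errors aggregate to at most $2^{-r}$ after the majority vote rather than compounding. The remaining ingredients — the Jordan decomposition, the acceptance‑probability versus $\cos^2\theta$ correspondence, and the gate‑ and space‑counts — are routine.
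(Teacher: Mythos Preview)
Your proposal is correct and follows essentially the same approach as the paper: both run $\Theta(r)$ rounds of phase estimation of $R_1R_0$ on $\ket{\psi}\otimes\ket{0^k}$ with precision $\Theta(c-s)$ and constant per-round failure probability, then take a median/majority against the threshold $(\theta_c+\theta_s)/2$, arriving at the same $\mathcal{O}(rt/(c-s))$ gate and $\mathcal{O}(k+r\log\frac{1}{c-s})$ space bounds. Your write-up is in fact more explicit than the paper's, which simply cites the Marriott--Watrous / Nagaj--Wocjan--Zhang analysis for the correctness of the in-place repetition, whereas you spell out the Jordan decomposition and the eigenstate-invariance-under-phase-estimation argument that underlies the i.i.d.\ behavior of the $r$ estimates.
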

\begin{proof}
	Let $L=(L_{yes}, L_{no})$ be a promise problem in $\QMA(c,s)$ and $\{V_x\}_{x\in\{0,1\}^*}$ the corresponding uniform family of verification circuits.
Define the projectors:
\begin{align}
\Pi_0 &= I_m \otimes \ket{0^k}\bra{0^k} \\
\Pi_1 &= V^\dagger_x \left(\ket{1}\bra{1}_{out} \otimes I_{m+k-1}\right) V_x
\end{align}
and the corresponding reflections:
\begin{equation}
R_0 = 2\Pi_0 - I, \quad R_1 = 2\Pi_1 - I.
\end{equation}
Define $\phi_c = \arccos\sqrt{c}/\pi$ and $\phi_s = \arccos\sqrt{s}/\pi$ (recalling that these functions can be computed to precision $\bigoh (c-s)$ in space $\bigoh (\log[1/(c-s)])$. 
Now consider the following procedure:
\begin{compactenum}
\item Perform $r$ trials of phase estimation of the operator $R_1R_0$ on the state $\ket{\psi}\otimes \ket{0^k}$, with  precision $\mathcal{O}(c-s)$ and $1/16$ failure probability. 
\item If the median of the $r$ results is at most $(\phi_{c}+\phi_{s})/2$, output YES; otherwise output NO.
\end{compactenum}
Phase estimation of an operator $U$ up to precision $a$ and failure probability $\epsilon$ requires $\alpha := \lceil\log_2(1/a)\rceil + \log_2[2+1/(2\epsilon)]$ additional ancilla qubits and $2^\alpha = \mathcal{O}(1/(a\epsilon))$ applications of the control-$U$ operation (see e.g. \cite{nc00}).  Thus, the above procedure, which uses $r$ applications of phase estimation to precision $\mathcal{O}(c-s)$, can be implemented by a circuit of size $\mathcal{O}(rt/(c-s))$ using $\mathcal{O}(r\log[1/(c-s)])$ extra ancilla qubits. Using the standard analysis of in-place $\QMA$ error reduction \cite{mw05,nwz11}, it can be seen that this procedure has completeness probability at least $1-2^{-r}$ and soundness at most $2^{-r}$.
\end{proof}
We can now prove Lemma \ref{lem: gap amp}, which we restate below:
\begingroup
\def\thelemma{\ref{lem: gap amp}}
\begin{lemma}
For any functions $t,k,r>0$, 
\[
\bddQMA{t}{k}{m}{c}{s}\subseteq\bddQMA{\mathcal{O}\left(\frac{rt}{c-s}\right)}{\mathcal{O}\left(k+r+\log\left(\frac{1}{c-s}\right)\right)}{m}{1-2^{-r}}{2^{-r}}.
\]
\end{lemma}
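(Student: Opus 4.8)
The plan is to recycle the median-of-$r$-trials procedure from the proof of Lemma \ref{lem: gap amp 1} (the Nagaj--Wocjan--Zhang amplifier), but to run the $r$ phase-estimation trials \emph{sequentially}, reusing one precision register, instead of in parallel. In Lemma \ref{lem: gap amp 1} the space bottleneck is the $\bigoh(r\log(1/(c-s)))$ qubits that simultaneously hold all $r$ estimates; the observation driving the improvement is that we never actually need more than one estimate in memory at a time, provided we immediately distill each estimate down to its single thresholded bit before discarding it. (Note that Lemma \ref{lem: gap amp 2} already achieves the target space bound, but with the worse time bound $\bigoh(t2^r/(c-s))$; the goal here is to get the good space bound \emph{and} the good time bound simultaneously.)

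Concretely, I would keep the same projectors $\Pi_0,\Pi_1$, reflections $R_0,R_1$, and thresholds $\phi_c,\phi_s$ as in the proof of Lemma \ref{lem: gap amp 1}, and for $i=1,\dots,r$ apply the following unitary subroutine acting on the $m$-qubit witness, the $k$ workspace qubits, an $\bigoh(\log(1/(c-s)))$-qubit precision register $P$, one threshold qubit, and an $r$-bit (alternatively $\bigoh(\log r)$-bit counter) ``tally'' register. First, run phase estimation of $R_1R_0$ to precision $\bigoh(c-s)$ with per-trial failure probability $\tfrac{1}{16}$, writing the estimate into $P$. Second, compute into the threshold qubit whether this estimate is at most $(\phi_c+\phi_s)/2$ and \textsc{cnot} it into bit $i$ of the tally (or controlled-increment the counter). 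Third, uncompute the threshold qubit and then run the phase-estimation circuit in reverse, so that $P$ returns approximately to $\ket{0}$ and is free for trial $i+1$. After all $r$ trials, compute ``a majority of the tallied bits is YES'' into the designated $out$ qubit and measure it. Since phase estimation to precision $\bigoh(c-s)$ with constant failure probability uses $\bigoh(1/(c-s))$ controlled applications of $R_1R_0$, and each such application costs $\bigoh(t)$ gates, every trial costs $\bigoh(t/(c-s))$ gates; the whole procedure therefore uses $\bigoh(rt/(c-s))$ gates, $\bigoh(k+\log(1/(c-s))+r)$ workspace qubits, and still only $m$ witness qubits --- exactly the claimed bounds.

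For correctness I would appeal to the standard Jordan-lemma analysis underlying in-place $\QMA$ amplification \cite{mw05,nwz11}, together with a Chernoff bound. Decomposing into the joint invariant subspaces of $\Pi_0$ and $\Pi_1$, on a block with parameter $\theta_j$ the operator $R_1R_0$ has eigenvalues $e^{\pm 2i\theta_j}$, the phase estimated in step one is, up to the $\bigoh(c-s)$ precision, exactly the quantity that $\phi_c$ and $\phi_s$ were chosen to separate, and the largest acceptance probability of $V_x$ over witnesses equals the largest block acceptance probability. Hence in the YES case the prover can send a witness supported only on blocks with acceptance probability at least $c$, while in the NO case \emph{every} witness is supported only on blocks with acceptance probability at most $s$; in either case a single trial returns the correct verdict except with probability $\tfrac{1}{16}$, and a Chernoff bound over the $r$ trials makes the final majority correct except with probability $2^{-\Omega(r)}$, which becomes $2^{-r}$ after rescaling $r$ by a constant. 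The one genuinely new ingredient, compared with Lemma \ref{lem: gap amp 1}, is that recording bit $i$ and then running phase estimation in reverse in step three performs a \emph{weak measurement} of the block-$j$ verdict, so that after trial $i$ the witness is entangled with the tally register rather than left untouched; one must check that this does no harm.

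I expect that last check to be the main obstacle. What needs to be argued is that, because $R_1R_0$, the phase-estimation circuit, and the thresholding projection all respect the Jordan-block decomposition, the (sub-normalized) witness state in every branch of the tally register stays supported on the \emph{same} set of correctly-classified blocks throughout all $r$ trials; consequently the bits written into the tally behave like $r$ conditionally independent $\tfrac{15}{16}$-biased coins pointing at the right answer, and the Chernoff bound applies verbatim. The remaining items --- uniformly computing $\phi_c,\phi_s$ and the threshold to precision $\bigoh(c-s)$ in space $\bigoh(\log(1/(c-s)))$ (already available via Reif's algorithms), checking the gate and space counts, and reducing the final output to a single measured qubit --- are routine.
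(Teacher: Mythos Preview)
Your proposal differs from the paper's proof and, more importantly, contains a fatal error. The paper's argument is a two-line composition rather than a new sequential procedure: first apply Lemma~\ref{lem: gap amp 2} with $r=2$, boosting $(c,s)$ to $(3/4,1/4)$ at cost $\bigoh(t/(c-s))$ gates and $\bigoh(k+\log\frac{1}{c-s})$ space; then apply Lemma~\ref{lem: gap amp 1} to the resulting constant-gap verifier. Because the gap is now $1/2$, the $\bigoh(r\log\frac{1}{c'-s'})$ space overhead of Lemma~\ref{lem: gap amp 1} collapses to $\bigoh(r)$, and its time cost becomes $\bigoh(r\cdot t/(c-s))$. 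No new analysis is needed.

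The problem with your sequential-reuse procedure is more basic than the weak-measurement subtlety you anticipate: the $r$ trials telescope into one. Writing the $i$-th trial as $V_i = U_{PE}^{\dagger}\, Th_i\, U_{PE}$, where $Th_i$ reads the shared precision register $P$ and copies the thresholded bit into tally position $i$, the internal $U_{PE}\,U_{PE}^{\dagger}$ factors between consecutive trials cancel exactly, giving
\[
V_r\cdots V_1 \;=\; U_{PE}^{\dagger}\, Th_r\cdots Th_1\, U_{PE}.
\]
Since each $Th_i$ reads the \emph{same} untouched contents of $P$, all $r$ tally bits are perfectly correlated and their majority equals any one of them; you end up with completeness $15/16$ and soundness $1/16$, with no amplification at all. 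Your intuition that $P$ is only ``approximately'' reset misdiagnoses the issue: after $U_{PE}^{\dagger}$ and then $U_{PE}$ the joint state of (witness, workspace, $P$) is restored \emph{exactly} to what it was just after the first $U_{PE}$, modulo the classical tally bit already recorded, and that is precisely why no fresh information is extracted. Obtaining genuinely new samples without fresh registers would require inserting a nontrivial operation between $U_{PE}^{\dagger}$ and the next $U_{PE}$ (as Marriott--Watrous does with its alternating reflections), which is a different argument from the one you outline. The paper sidesteps all of this by first shrinking each precision register to constant size.
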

\addtocounter{theorem}{-1}
\endgroup

\begin{proof}
\begin{align}
\bddQMA{t}{k}{m}{c}{s} &\subseteq \bddQMA{\mathcal{O}\left(\frac{t}{c-s}\right)}{\mathcal{O}\left(k+\log\left(\frac{1}{c-s}\right)\right)}{m}{3/4}{1/4}  \nonumber \\
&\subseteq \bddQMA{\mathcal{O}\left(\frac{rt}{c-s}\right)}{\mathcal{O}\left(k+r+\log\left(\frac{1}{c-s}\right)\right)}{m}{1-2^{-r}}{2^{-r}} \nonumber
\end{align}
where the first line follows by taking $r=2$ in Lemma \ref{lem: gap amp 2}, and the second line follows from Lemma~\ref{lem: gap amp 1}.
\end{proof}

\section{Proof sketch of $\PQP^{O_{PEPS}}_{\parallel,\text{classical}} = \PP$} \label{app:peps}
Since $\PP \subseteq \BQP^{O_{PEPS}}_{\parallel,\text{classical}} \subseteq \PQP^{O_{PEPS}}_{\parallel,\text{classical}}$ \cite{swv07}, we only need to show that $\PQP^{O_{PEPS}}_{\parallel,\text{classical}} \subseteq \PP$. In \cite{swv07} it was noted that all PEPS can be seen as the output of a quantum circuit followed by a postselected measurement. Therefore $\PQP^{O_{PEPS}}_{\parallel,\text{classical}}$ corresponds to the problems that can be decided by a quantum circuit, followed by a postselected measurement (since the queries to $O_{PEPS}$ are classical and nonadaptive, we can compose them into one single postselection), followed by a measurement. In the YES case the measurement outputs 1 with probability at least $c$, whereas in the NO case the measurement outputs 1 with probability at most $s$, with $c > s$. The standard counting argument placing $\BQP$ inside $\PP$ then applies to this case as well; see for instance \cite[Propositions~2~and~3]{aaronson05}.

 \bibliography{completespace}
\bibliographystyle{plain}

\end{document}